\DeclarePairedDelimiter\floor{\lfloor}{\rfloor}
\renewcommand{\nomgroup}[1]{%
\ifthenelse{\equal{#1}{I}}{\item[\textbf{Indices}]}{%
\ifthenelse{\equal{#1}{A}}{\item[\textbf{Abbreviations}]}{%
\ifthenelse{\equal{#1}{V}}{\item[\textbf{Variables}]}{%
\ifthenelse{\equal{#1}{P}}{\item[\textbf{Parameters and Constants}]}{%
}
}
}
}
}
\newcommand{\beq}{\begin{equation}}
\newcommand{\eeq}{\end{equation}}
\newcommand{\beqn}{\begin{eqnarray}}
\newcommand{\eeqn}{\end{eqnarray}}
\newcommand{\beqno}{\begin{eqnarray*}}
\newcommand{\eeqno}{\end{eqnarray*}}
\newcommand{\bma}{\begin{displaymath}}
\newcommand{\ema}{\end{displaymath}}
\newcommand{\bnu}{\begin{enumerate}}
\newcommand{\enu}{\end{enumerate}}
\newcommand{\bce}{\begin{center}}
\newcommand{\ece}{\end{center}}
\newcommand{\btb}{\begin{tabular}}
\newcommand{\etb}{\end{tabular}}
\def\bw{{\bm{w}}}
\def\bx{{\bm{x}}}
\def\bt{{\bm{t}}}
\def\bz{{\bm{z}}}
\def\bv{{\bm{v}}}
\def\bb{{\bm{b}}}
\def\bd{{\bm{d}}}
\def\bu{{\bm{u}}}
\newcommand{\one}{{\mathbf{1}}}
\def\revtwo#1{{\color{black}#1}}
\newtheorem{theorem}{Theorem}[section]
\newtheorem{lemma}[theorem]{Lemma}
\newtheorem{remark}[theorem]{Remark}
\begin{document}

\title{Delay-Aware Robust Edge Network Hardening Under Decision-Dependent Uncertainty}

\author{\IEEEauthorblockN{Jiaming Cheng,~\IEEEmembership{Student Member,~IEEE}, Duong~Thuy~Anh~Nguyen,~\IEEEmembership{Student Member,~IEEE},  \\
Ni Trieu~\IEEEmembership{Member,~IEEE}, and Duong Tung Nguyen,~\IEEEmembership{Member,~IEEE}}  \vspace{-0.4cm}
\thanks{The first two authors have equal contributions. The authors are with the  Ira A. Fulton Schools of Engineering, Arizona State University, Tempe, AZ, USA. Email: \textit\{jiaming,dtnguy52,nitrieu,duongnt\}@asu.edu. } 
}
\maketitle

\begin{abstract}
Edge computing promises to offer low-latency and ubiquitous computation to numerous devices at the network edge. For delay-sensitive applications, link delays significantly affect service quality. These delays can fluctuate substantially over time due to various factors such as network congestion, changing traffic conditions, cyberattacks, component failures, and natural disasters. Thus, it is crucial to efficiently harden the edge network to mitigate link delay variation and ensure a stable and improved user experience. To this end, we propose a novel robust model for optimal edge network hardening, considering link delay uncertainty. Unlike existing literature that treats uncertainties as exogenous, our model incorporates an endogenous uncertainty set to properly capture the impact of hardening and workload allocation decisions on link delays. However, the endogenous set introduces additional complexity to the problem due to the interdependence between decisions and uncertainties. To address this, we present two efficient methods to transform the problem into a solvable form. Extensive numerical results demonstrate the effectiveness of the proposed approach in mitigating delay variations and enhancing system performance.
\end{abstract}

\begin{IEEEkeywords}
Edge computing,  delay reduction, network hardening, robust optimization, decision-dependent uncertainty. 
\end{IEEEkeywords}

\printnomenclature

\section{Introduction}
\label{intro}
The explosive growth of mobile devices and applications has triggered an 
unprecedented surge in mobile data traffic. As emerging services such as augmented/virtual reality (AR/VR), manufacturing automation, and autonomous driving continue to evolve, it is imperative to devise innovative solutions to effectively meet their stringent requirements. To this end, edge computing (EC) has been advocated as a vital computing paradigm that complements the traditional cloud to provide a superior user experience and enable various low-latency and ultra-reliable  Internet of Things (IoT) applications  \cite{wshi16}.

The performance and reliability of EC systems are susceptible to various uncertainties stemming from multiple sources, including extreme weather conditions, fluctuating resource demands, traffic spikes, user mobility, and changes in application performance and user behavior. Moreover, the escalating complexity and diversity of man-made attacks and cyber threats, such as malware attacks, cyberattacks, and insider threats, pose additional uncertainties and risks to EC systems. Consequently, designing EC systems with robust measures to manage system uncertainties as well as detect and mitigate potential threats and component failures is of utmost importance \cite{al2009comparative}. 

Effectively managing uncertainties is a key factor for achieving consistent performance, reliability, and an enhanced user experience in EC. A crucial but frequently overlooked aspect in the literature is link delay uncertainty, which may arise due to a variety of factors, including fluctuating workloads, network congestion, routing changes, link failures, and cyber-attacks. Specifically, workload is a major contributor to delay uncertainty, as higher traffic volumes on a link can increase packet delays. Network congestion is another critical factor, causing delays and packet loss, particularly during high-traffic periods.  Moreover, routing changes, prompting packets to follow different paths, can introduce additional delays and variability in packet transmission. In the event of link failures, congestion can arise, causing slower speeds as traffic is rerouted. Finally, deliberate network attacks targeting links can inundate the network with excessive traffic, thereby exacerbating delay issues.

Delay uncertainties can significantly affect the performance of edge network systems, especially for delay-sensitive applications. To address this issue, this paper focuses on enhancing the resilience of edge networks against delay uncertainties through various link hardening strategies, such as redundancy and failover, load balancing, traffic shaping, fault-tolerant routing, quality of service (QoS), and security mechanisms. Redundancy ensures that backup links and resources are available in the event of primary link failures \cite{Modiano14,fhe_tcc21}, while failover mechanisms automatically switch to backup links to guarantee network continuity \cite{failover16_ICC}. 
Upgrading hardware, increasing bandwidth, and investing in higher-quality fiber optic cables can also bolster the resilience of network links. Load balancing is another crucial strategy that evenly distributes network traffic across multiple links, preventing any single link from becoming overloaded and reducing the likelihood of delays and failures \cite{hsiao2001load}. 

Traffic shaping plays a vital role in controlling the traffic flow, avoiding link overload, and reducing delays and packet loss \cite{yu2017handling}. Fault-tolerant routing algorithms can help ensure that traffic continues to flow even when links fail, by rerouting traffic through alternate paths \cite{al2009comparative}. In addition, implementing QoS mechanisms and security measures is essential. QoS mechanisms prioritize traffic based on its importance, ensuring that critical traffic receives preferential treatment and minimizing delays and packet loss for critical applications. Security measures such as firewalls, intrusion detection systems, and other security mechanisms protect network links from attacks and unauthorized access, further fortifying the network \cite{jamming07,durkota2015optimal,dewri2007optimal}.

Different levels of hardening strategies yield varying degrees of network resilience, each with its distinct cost \cite{US_harden2}. Each level may contain one or several strategies, and the operator can flexibly choose which strategies to apply at each level for each link. Higher levels of hardening offer more robust protection, reducing the chance of network failures and improving overall network reliability \cite{US_harden1}. However, these advantages are accompanied by increased costs. The associated costs differ due to each hardening level potentially requiring distinct technologies, strategies, infrastructure enhancements, or ongoing maintenance efforts. For example, standard redundancy and failover mechanisms provide essential protection against link failures, but they may fall short for critical, delay-sensitive services. Advanced hardening strategies (e.g., high-grade encryption, fault-tolerance routing, and extensive traffic shaping) provide a higher level of security and resilience \cite{al2009comparative}. Implementing these measures often requires significant investments in software development and cybersecurity expertise \cite{Forbes,CISA}. Establishing redundant high-capacity links, like fiber optic cables, also requires considerable capital and ongoing operational costs.

By adopting these proactive strategies, edge networks can effectively mitigate the impact of delay uncertainties and enhance their overall resilience. However, hardening all network links is impractical and economically unfeasible. Therefore, it becomes crucial to determine which specific links should be hardened and at what level. Surprisingly, this critical problem has been largely overlooked in the existing literature. Moreover, the actual link delay and the level of uncertainty are directly  influenced by the workload and the chosen degree of link hardening. Conversely, the link hardening and workload allocation decisions are dependent on the link delays. This interdependence creates a fundamental and unresolved challenge in optimizing edge link hardening decisions.

This paper aims to bridge this gap by proposing a novel robust optimization (RO) framework for optimal link hardening in EC, considering the influence of the hardening and workload allocation decisions on link delays. Specifically, to accurately model the link delays, we introduce a decision-dependent uncertainty (DDU) set that explicitly captures the interdependence between the uncertainties and decisions. The incorporation of the DDU set into the robust model provides a more realistic representation of uncertainty during actual operation. Unlike the traditional RO method using a fixed decision-independent uncertainty (DIU) set, the DDU set is adjustable. This allows decision-makers to proactively control the form and level of uncertainty through their decisions, thereby mitigating the inherent conservatism of the robust solution. However, the introduction of interdependencies between uncertainties and decisions adds complexity to the problem, resulting in a large-scale non-linear optimization problem with numerous bilinear terms. To overcome this challenge, we present two efficient exact algorithms that can solve the underlying robust model under DDU. Our main goal is to quantify the benefit and importance of considering endogenous uncertainty. The proposed model is primarily designed for edge infrastructure providers that own and operate distributed edge data centers and network links. For example, telecom companies such as Verizon, T-Mobile, and AT\&T can benefit directly from its implementation.
Additionally, cloud and edge infrastructure providers (e.g., Amazon, Google, and Equinix), which manage extensive network infrastructures, can also leverage the model.

Our contributions can be summarized as follows:
\begin{itemize}[leftmargin=0pt]
    \item[] \textit{1) \textbf{Modeling:}} We introduce a novel robust model that addresses the optimal edge network link hardening and workload allocation problem, aiming to minimize costs while enhancing user experience by reducing link delay variation. Unlike traditional RO models, which typically consider exogenous uncertainty \cite{RObook}, our proposed model incorporates an endogenous uncertainty set to capture the interdependencies between uncertain link delays and the decisions related to hardening and workload allocation. To the best of our knowledge, this is the first robust link-hardening model in EC that explicitly accounts for such decision-dependent uncertainty.
    \item[] \textit{2) \textbf{Techniques:}} To solve the challenging robust problem under DDU, we deviate from traditional RO models and develop two efficient and exact reformulations, termed \textit{RDDU} and \textit{e-RDDU}. These reformulations are derived through a series of transformations that convert the original problem into a Mixed Integer Linear Programming (MILP) form, which can be solved efficiently using standard solvers such as Gurobi\footnote{https://www.gurobi.com/} and Mosek\footnote{https://www.mosek.com/}. Notably, \textit{e-RDDU} enhances upon \textit{RDDU} by leveraging the specific structure of the DDU set and employing a clever transformation technique that substantially reduces the problem's size and complexity. 
    \item[] \textit{3) \textbf{Numerical results:}} Extensive simulations demonstrate the efficiency of the proposed scheme compared to three benchmarks: a \textit{no-hardening} scheme, a \textit{random hardening} scheme, and a \textit{stochastic scheme under DDU}. Also, sensitivity analyses were conducted to evaluate the impact of important system parameters on system performance. 
\end{itemize}

This paper is structured as follows. Section ~\ref{system} outlines the system model and problem formulation. Section~\ref{sol} details the proposed solution approach. The numerical results are presented in Section~\ref{results}. Section \ref{related_work} discusses related work followed by conclusions in Section~\ref{conc}.

\section{System Model and Problem Formulation}
\label{system}
\revtwo{This section outlines the system model. 
We also define key parameters, decision variables, and constraints, and formulate the deterministic model, followed by two robust models that consider both exogenous and endogenous uncertainties.}

\subsection{System Model}
This paper studies an edge network hardening problem that involves a budget-constrained EC platform \revtwo{(i.e., an edge infrastructure provider)}, aiming to optimize link hardening decisions to improve user experience by reducing delays. The platform manages a set  $\mathcal{J}$ of $J$ heterogeneous edge nodes (ENs) and offers edge resources to users in different areas, each represented by an access point (AP). The set of APs is denoted by $\mathcal{I}$, and the number of APs is $I$.  Let $i$ and $j$ signify the AP index and EN index, respectively. We consider the graph $\mathbb{G}(\mathcal{V},\mathcal{E})$, where $\mathcal{V}$ is the set of nodes including $I$ APs and $J$ ENs. The set $\mathcal{E}=\{(i,j):i\in \mathcal{I},j\in \mathcal{J}\}$ represents the set of links connecting APs with  ENs. Each link $(i,j)$ represents the logical link connecting area $i$ with EN $j$, which may encompass multiple physical links.

The  EN size can vary significantly, and each EN may consist of one or several edge servers. 
For simplicity, we consider only computing resources. The resource capacity at EN $j$ is denoted by $C_j$. Define $\lambda_i$ as the resource demand in area $i$.
To reduce the network delay, demand in each area should be served by its closest EN. However, the capacity of each EN is limited. Thus, the platform needs to optimize the workload allocation decision, considering the edge resource capacity constraints, to ensure service quality while reducing costs. Let $x_{i,j}$ be the workload allocated from area $i$ to EN $j$. User requests from each area must be either served by some ENs or dropped. We denote the amount of unmet demand and the penalty for each unit of unmet demand in area $i$ by $w_i$ and $s_i$, respectively. 

The QoS for latency-sensitive applications is heavily affected by link delays. The delay in the logical link between area  $i$ and EN $j$, which may encompass multiple physical link segments connecting intermediate nodes along the path between them, is denoted by $d_{i,j}$. Link delay can fluctuate dramatically over time due to various factors such as varying traffic conditions, cyberattacks, network congestion, and link/router failures. 
We consider $d_{i,j}$ as a time-varying and uncertain parameter instead of a constant as commonly assumed in the literature. To provide a superior user experience, the platform must aim to reduce both link delay and delay variation. Developing efficient models for optimal link hardening to mitigate delay variation and improve system resilience is of paramount importance.

The cost of link hardening can vary depending on several factors, such as the characteristics and locations of the links, as well as the types and levels of hardening. We assume that each link has $R$ levels of hardening, denoted by $h_{i,j}^r$ for level-$r$ hardening cost of link $(i,j)$, which increases as the hardening level increases. A higher hardening level implies improved link performance and reduced delay variation. 
Note that each hardening level may contain several link hardening strategies.
We introduce a binary variable $t_{i,j}^{r}$, where $t_{i,j}^{r}$  equals $1$ if link $(i,j)$ if level-$r$ hardening is applied to link $(i,j)$. We also denote the incremental link hardening cost between two adjacent levels for each link by  $\Delta h_{i,j}^r$ (i.e., $\Delta h_{i,j}^r =  h^{r}_{i,j} - h^{r-1}_{i,j}$) and  the incremental impact factor (IF) between adjacent levels by $\Delta \gamma_{i,j}^r$ (i.e., $\Delta \gamma_{i,j}^r =  \gamma^{r}_{i,j} - \gamma^{r-1}_{i,j}$). Without loss of generality, we assume  $\Delta h_{i,j}^r =  \Delta h$ and  $\Delta \gamma_{i,j}^r =  \Delta \gamma,  \forall i,j,r$.  Let $B$ be the total hardening budget of the platform. The platform aims to identify a subset of critical links for efficient hardening within budget constraints. Table \ref{tab:notation} summarizes the main notations.

\begin{table}[t!]
\vspace{0.2cm}
\begin{tabular}{|ll|}
\hline
\multicolumn{1}{|l|}{\textbf{Notation}} & \textbf{Definition}                      \\ \hline
\multicolumn{2}{|c|}{\textbf{Sets and indices}}                                       \\ \hline
\multicolumn{1}{|l|}{EN, AP}            & Edge node, Access point                  \\ \hline
\multicolumn{1}{|l|}{$\mathcal{I}, I$}  & Set and number of areas (APs)            \\ \hline
\multicolumn{1}{|l|}{$\mathcal{J}, J$}  & Set and number of edge nodes (ENs)       \\ \hline
\multicolumn{2}{|c|}{\textbf{Parameters}}                                          \\ \hline
\multicolumn{1}{|l|}{$h_{i,j}^{r}$}     & Level $r$ link hardening cost for link $(i,j)$ \\ \hline
\multicolumn{1}{|l|}{$C_j$}             & Computing resource capacity at EN $j$    \\ \hline
\multicolumn{1}{|l|}{B}                 & Total link hardening budget              \\ \hline
\multicolumn{1}{|l|}{$s_i$}             & Unmet demand penalty in area $i$                    \\ \hline
\multicolumn{1}{|l|}{$d_{i,j}$}         & Network delay between AP $i$ and EN $j$  \\ \hline
\multicolumn{1}{|l|}{$\gamma_{i,j}^{r}$}  & Impact factor of level $r$ link hardening along link $(i,j)$ \\ \hline
\multicolumn{1}{|l|}{$u_{i,j}$}  & Impact factor of workload along link $(i,j)$ \\ \hline
\multicolumn{1}{|l|}{$\rho$}            & Delay penalty                            \\ \hline
\multicolumn{1}{|l|}{$\Delta \gamma$}   & Incremental link hardening impact \\ 
\hline
\multicolumn{1}{|l|}{$\Delta h$}   & Incremental link hardening cost \\ 
\hline
\multicolumn{1}{|l|}{$\lambda_i$}         & Resource demand in area $i$              \\ \hline
\multicolumn{1}{|l|}{$\Gamma_1$, $\Gamma_2$}            & Uncertain budget in $\mathcal{D}_{1}$, $\mathcal{D}_{2}$                 \\ \hline
\multicolumn{1}{|l|}{$\Psi$}            & Scaling factor for link hardening cost ($h_{i,j}^{r}$)     \\ \hline
\multicolumn{2}{|c|}{\textbf{Variables}}                                           \\ \hline
\multicolumn{1}{|l|}{$x_{i,j}$}         & Workload allocated from AP $i$ to EN $j$ \\ \hline
\multicolumn{1}{|l|}{$w_i$}             & Unmet demand in area $i$                 \\ \hline
\multicolumn{1}{|l|}{$t_{i,j}^{r}$} & $\{0,1\}$, 1 if link $(i,j)$ is chosen for level-$r$ hardening           \\ \hline
\end{tabular}
\caption{Notations}
\label{tab:notation}
\vspace{-0.5cm}
\end{table}

\revtwo{Fig.~\ref{fig:model} presents a toy example of the system model with $2$ AP and $2$ ENs. The platform aims to serve users' requests from both areas. For simplicity, link conditions are represented as tuples and visualized using different colors. In scenario (a), the delay for link $(1,1)$ ranges from $[5,15]$ ms under normal conditions, while link $(1,2)$ experiences higher variability, ranging from $[5,20]$ ms. Without hardening, $\frac{3}{4}$ of the workload from AP$1$ is allocated to EN$1$, while $\frac{1}{4}$ is assigned to EN$2$. The platform may priotize allocating workload to EN$1$ if capacity is allowed due to its lower network delay. In scenario (b), proactive hardening is applied to link $(1,2)$, reducing the maximum delay deviation from $20$ ms to $12$ ms, thereby improving QoS. Consequently, the platform can priortize workload allocation to EN$2$ due to its lower network latency. This example highlights the importance of proactive hardening in mitigating uncertainty, enhancing system performance, and reducing operational costs. }

\begin{figure}[t!]
\centering
\includegraphics[width=0.43\textwidth,height=0.15\textheight]{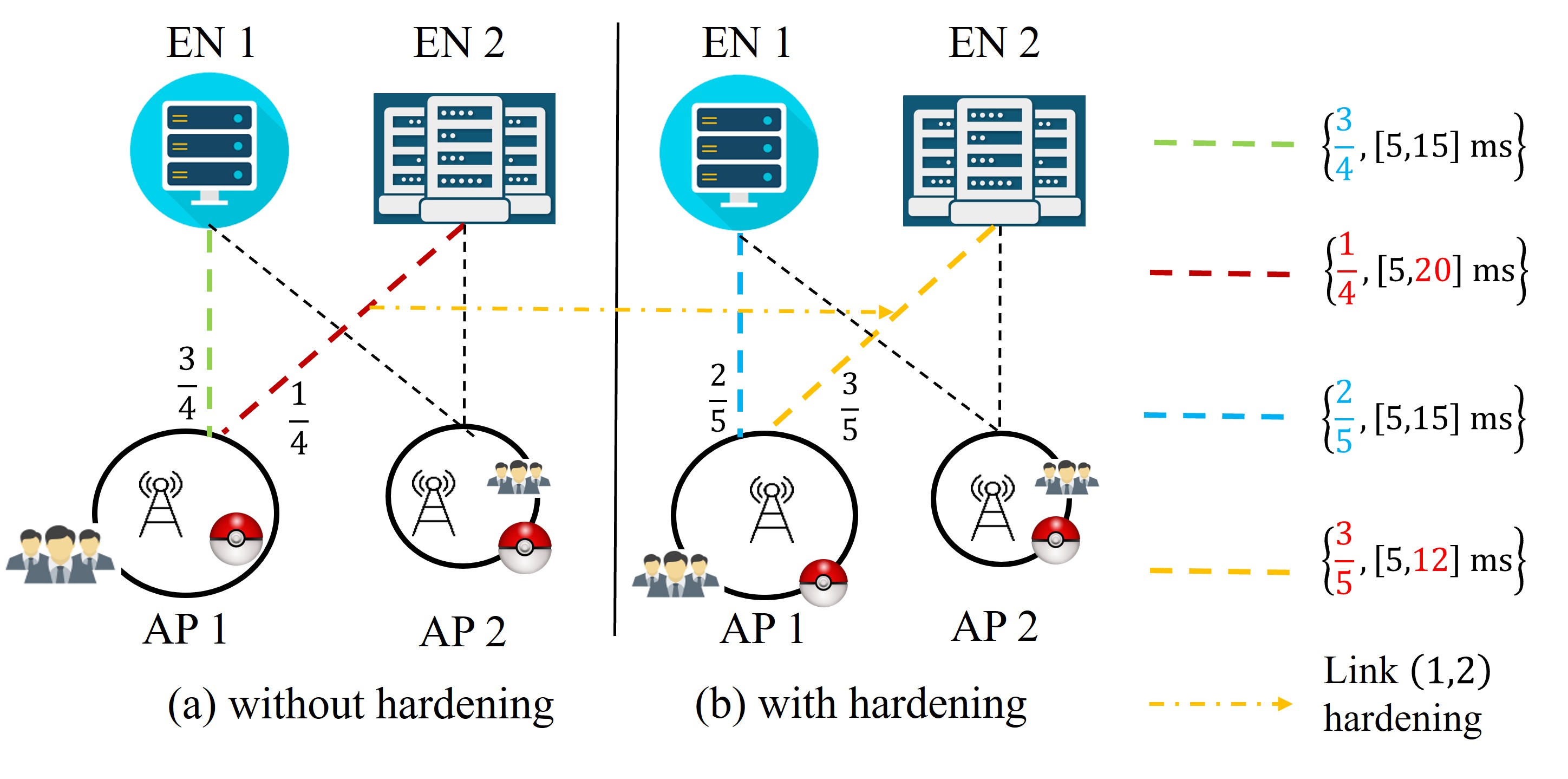}
\caption{\revtwo{System model}}
\label{fig:model}
\vspace{-0.5cm}
\end{figure}

\vspace{-0.5cm}
\subsection{Deterministic Hardening Model}
\label{deterministic}
\revtwo{We first formulate the deterministic problem, termed \textbf{DET}, for optimizing hardening and workload allocation decisions without considering system uncertainties.}
The EC platform aims to minimize the total cost of link hardening while improving user experience by reducing network delay and unmet demand. The \textbf{DET} problem can be expressed as follows:
\begin{subequations}
\label{DET}
\begin{align}
     \textbf{(DET)} \quad &\min_{\bt,\bx,\bw}  \:  \underbrace{\sum_{i,j} h_{i,j}^{r} t_{i,j}^{r}}_{\mathcal{C}_1(\bt)}  + \underbrace{\sum_{i,j} \rho d_{i,j} x_{i,j} + \sum_{i} s_i w_i}_{\mathcal{C}_2(\bx,\bw)}  \label{eq-DETobj} \\
    \text{s.t.}\quad & \sum_{i,j,r} h_{i,j}^{r} t_{i,j}^{r} \leq B,  \label{budget} \\
    & \sum_{r} t_{i,j}^{r} \leq 1,~ \forall i,j, \label{link_upgrade}\\
    & 0 \leq \sum_{i} x_{i,j} \leq C_{j}, ~ \forall j.  \label{resource_cap}\\
    &  w_i + \sum_{j} x_{i,j} = \lambda_i, ~ \forall i, \label{supply_demand} \\
    & \frac{w_{i}}{\lambda_i}\leq \alpha_i , ~ \forall i, \label{QoS}\\
    & \sum_{j} d_{i,j} \frac{x_{i,j}}{\lambda_i} \leq \Delta_{i} , ~ \forall i, \label{delay} \\
    & \bw \in \mathbb{Z}_{+}^{I}, ~\bx \in \mathbb{Z}_{+}^{I \times J}, ~\bt \in \{0,1\}^{I \times J \times R} \label{var_constr1}.
\end{align}
\end{subequations}
where $\mathcal{C}_1(\bt)$ represents the total hardening cost and $\mathcal{C}_2(\bx,\bw)$ represents the total penalty cost, including both the delay penalty and the unmet demand penalty resulting from workload allocation decisions $\bx$ and $\bw$. Note that $\rho$ is the delay penalty parameter controlled by the platform. A higher value of $\rho$ indicates that the platform prioritizes reducing delays over reducing unmet demand. 
The underlying optimization problem of the platform is subject to constraints \eqref{budget}-\eqref{var_constr1}. 

Given that the platform's investment budget is often limited, constraint \eqref{budget} ensures that the total link hardening cost cannot exceed the budget. The platform can choose at most one hardening level for each link $(i,j)$, as shown in \eqref{link_upgrade} (see Remark~\ref{rem:one_level}). Constraints \eqref{resource_cap} specify that the total workload assigned to each EN cannot exceed its capacity. The demand from each area $i$ must be served by some EN ($\bx$) or dropped ($\bw$), as captured by \eqref{supply_demand}. 
For quality control, constraints \eqref{QoS} enforce that 
the proportion of unmet demand in any area $i$ does not exceed a predefined threshold $\alpha_i$. Additionally, to maintain a satisfactory user experience, constraints \eqref{delay} impose a threshold $\Delta_{i}$ on the average network delay for each area $i$. Here, $\frac{x_{i,j}}{\lambda_i}$ represents the proportion of workload from area $i$ allocated to EN $j$. Finally, constraints \eqref{var_constr1} define all decision variables to be optimized in the model, where resource demands, allocation quantities, and unmet demands are represented as integers, as the platform typically provisions a discrete number of resource units (e.g., vCPUs, virtual machines) at each EN.

\begin{remark}
In the deterministic model, the  link delay parameter $d_{i,j}$ is 
a constant. Also, the hardening decision has no impact on the link delay. Hence, from \eqref{DET}, it is easy to see that the optimal solution for $\bt$ is $t_{i,j} = 0,~\forall i, j$. Then, the deterministic model \textit{DET} can be simplified as: 
\begin{align}
\label{eq-NOobj} 
     \textbf{(NH)} \quad &\min_{\bx \in \mathbb{Z}_{+}^{I \times J}, \bw \in \mathbb{Z}_{+}^{I}}  \:  \mathcal{C}_2  \\
    \text{s.t.} \quad & \eqref{resource_cap} - \eqref{delay} \nonumber.
\end{align}
There are no hardening decisions in the reduced \textit{NH} model.
\end{remark}

\begin{remark} \label{rem:one_level}
    The constraint \eqref{link_upgrade} restricts each link $(i,j)$ to at most a single hardening level, which simplifies the model by preclassifying and grouping hardening technologies into distinct levels. Higher levels may incorporate the protections of lower levels, thereby eliminating the need to select multiple levels simultaneously. In contrast, the alternative model of selecting multiple levels to combine techniques may require additional constraints and complicate the formulation, especially when certain combinations are infeasible or incompatible.
\end{remark}

\subsection{Uncertainty Modeling}
In practice, link delays are uncertain and can vary significantly over time. Relying on the assumption that link delays are fixed and known may lead to suboptimal solutions. 
To address this challenge, we incorporate link delays as uncertain parameters in our robust models. We propose two models for this purpose. The first model considers link delays as independent of the hardening and workload allocation decisions, resulting in an exogenous uncertainty set. In contrast, the second model utilizes a DDU set to capture the influence of decision variables on link delays. Our foremost aim is to elucidate the difference between traditional exogenous uncertainty sets and endogenous uncertainty sets, specifically regarding their impacts on the robust solution from a theoretical standpoint.

\noindent \textbf{\textit{1. Robust Hardening Model with Exogenous Uncertainties:}}
We employ the traditional RO approach to model uncertain network link delays using an exogenous polyhedral uncertainty set \cite{RObook}. In particular, the actual link delay $d_{i,j}$  is assumed to  vary within the range of $[\bar{d}_{i,j}, \bar{d}_{i,j} + \hat{d}_{i,j}]$, where $\bar{d}_{i,j}$ is the minimum  value of network delay along link $(i,j)$ and $\hat{d}_{i,j}$ is the maximum deviation from $\bar{d}_{i,j}$. Then, the link delays can be captured by the following polyhedral uncertainty set \cite{CCGARO,wang20}: 
\begin{align}\label{setD1}
    \mathcal{D}_1 (\hat{d},\bar{d},\Gamma_1) := \bigg\{ d_{i,j}:~ d_{i,j} = \bar{d}_{i,j} + g_{i,j} \hat{d}_{i,j}, \nonumber \\
     ~ g_{i,j} \in [0,1],~ \forall i,j;~ \sum_{i,j} g_{i,j} \leq \Gamma_1 \bigg\},
\end{align}
where $\Gamma_1$ is called the uncertainty budget, which can be set by the platform from observing historical data. The value of $\Gamma_1$  controls the size of the uncertainty set and the robustness of the optimal solution. As the value of  $\Gamma_1$ increases, the uncertainty set $\mathcal{D}_1$ also expands, leading to a more robust solution. The robust link hardening model  with the \textit{decision-independent uncertainty }(DIU) set $\mathcal{D}_1$ can be expressed as follows:
\begin{subequations}
\label{RO_DIU}
\begin{align}
     \textbf{(RO-DIU)} \quad &\min_{\bt,\bx,\bw} ~ \max_{\bd \in \mathcal{D}_1}  ~~  \mathcal{C}_1  + \mathcal{C}_2  \\
    \text{s.t.} \quad & \eqref{budget} - \eqref{QoS},~ \eqref{var_constr1},\\
    & \sum_{j} d_{i,j} \frac{x_{i,j}}{\lambda_i} \leq \Delta_{i}, ~ \forall i, ~\forall d_{i,j} \in \mathcal{D}_1. \label{constr_5} 
\end{align}
\end{subequations}

Compared to the deterministic model \eqref{eq-NOobj}, only minor modifications are required for the objective function and average delay constraints \eqref{constr_5} as they contain the uncertain link delay $d_{i,j}$. Similar to \textbf{DET}, the platform aims to minimize the total hardening cost while enhancing user experience. It is apparent from the  \textbf{RO-DIU} model and the exogenous uncertainty set $\mathcal{D}_1$ that link hardening provides no advantages to the platform.  As a result,  the platform has no incentives to invest in link hardening. Hence, the \textbf{RO-DIU} model can be reduced  to the  following equivalent model: 
\begin{align}
\label{eq-RONHobj} 
     \textbf{(RO-NH)} \quad &\min_{\bx \in \mathbb{Z}_{+}^{I \times J}, \bw \in \mathbb{Z}_{+}^{I}} \max_{\bd \in \mathcal{D}_1} \:  \mathcal{C}_2  \\
    \text{s.t.} \quad & \eqref{resource_cap} - \eqref{QoS},~  \eqref{var_constr1},~ \eqref{constr_5} \nonumber.
\end{align}

\noindent \textbf{\textit{2. Robust Hardening Model with Endogenous Uncertainties:}} Link hardening can enhance the quality of the edge network by reducing link delay and delay variation. However, the link delay tends to increase as data traffic (workload) traversing through the link increases. Hence, the set $\mathcal{D}_1(\hat{d},\bar{d},\Gamma_1)$ fails to capture the interdependencies between the actual link delays and link hardening as well as workload allocation decisions. To overcome this limitation, we propose a new uncertainty set that incorporates DDU to accurately model the impact of decision variables on the actual link delays. This approach enables us to better capture the effects of link hardening and workload allocation decisions on the actual link delays, thereby improving the model's robustness and efficiency. The endogenous uncertainty set representing the link delay uncertainty is defined as follows:
\begin{align}
\label{setD2}
&\mathcal{D}_2 (\hat{d},\bar{d},t,x,\Gamma_2) \! := \! \bigg\{ d_{i,j} : ~  0 \leq g_{i,j} \leq 1 - \sum_{r} \gamma_{i,j}^{r} t_{i,j}^{r}, \nonumber \\
&d_{i,j} \!=\! \bar{d}_{i,j} \!+\! \hat{d}_{i,j} ( g_{i,j} + u_{i,j} x_{i,j}), \forall i,j;~\!\sum_{i,j} g_{i,j} \leq \Gamma_2  
    \bigg\}.
\end{align}
Here, we introduce two new decision-dependent parameters, $\gamma_{i,j}^{r}$ and $u_{i,j}$. The parameter $\gamma_{i,j}^{r}$ captures the impact of link hardening on the network delay along the link $(i,j)$ with level-$r$ hardening. As the level of hardening increases, the link delay deviation decreases. Note that at most one hardening level can be chosen for each link,  as indicated in equation \eqref{link_upgrade}.

\begin{remark}[\revtwo{On Linear Approximation of Network Delay}]
\revtwo{The network delay on a given link is influenced by its utilization, which is determined by the volume of data traffic or workload on that link. For simplicity, in the endogenous uncertainty set described in \eqref{setD2}, the network delay on each link is modeled as a linear function of its workload, parameterized by the slope coefficient $u_{i,j}$. This linear approximation is justified in scenarios where the network operates within capacity constraints and congestion is effectively managed \cite{ ChenDelay2023}.}

\revtwo{However, this assumption has inherent limitations. In scenarios with high traffic volumes or significant congestion, the relationship between latency and workload may exhibit non-linear characteristics due to various factors such as queuing delays and packet loss.  To better capture these complexities, more advanced modeling approaches or linearization techniques, such as piecewise linear approximations, can be employed. 
The proposed model provides a foundational framework that can be extended to accommodate more complex scenarios.} 
\end{remark}

Compared to the exogenous uncertainty set \eqref{setD1} in the \textit{RO-DIU} model, the upper bound of $g_{i,j}$ in $\mathcal{D}_2$ is reduced from $1$ to $1 - \sum_{r = 1}^{R} \gamma_{i,j}^{r} t_{i,j}^{r}$, which shows the benefit of link hardening by reducing the worst-case delay on hardened links. When there is no  hardening ($\bt = \bm{0}$), the range for $g_{i,j}$ is the same for both $\mathcal{D}_1$ and $\mathcal{D}_2$. Similar to the exogenous uncertainty set \eqref{setD1}, $\Gamma_2$ is the uncertainty budget of  $\mathcal{D}_2$, which controls the size of the uncertainty set. Additionally, $\mathcal{D}_2$ captures the dependence of the link delay on the workload, which is not considered in  $\mathcal{D}_1$. Finally, it can be observed that the platform's decisions have no influence on the uncertainty set $\mathcal{D}_1$ while they can directly alter $\mathcal{D}_2$. Consequently, \textit{the platform can control the level of conservatism through its decisions}. On the other hand, the uncertainty set obviously affects the robust decisions of the platform. Hence, there is an interdependence between the decision variables and the uncertainty, which makes the robust problem with endogenous uncertainties more challenging.

Overall, the proposed robust link hardening model with the DDU set $\mathcal{D}_2$ is given as follows:
\begin{subequations}
\label{RO_DDU}
\begin{align}
 \textbf{(RO-DDU)} \quad & \min_{\bt,\bx,\bw} ~ \max_{\bd \in \mathcal{D}_2}  ~~ \mathcal{C}_1 + \mathcal{C}_2  \label{RO_DDU_obj} \\
\text{s.t.} \quad & \eqref{budget} - \eqref{QoS}, ~ \eqref{var_constr1}, \label{RO_DDU_eq1} \\
&  \sum_{j} d_{i,j} \frac{x_{i,j}}{\lambda_i} \leq \Delta_{i},  ~\forall i, ~ \forall d_{i,j} \in \mathcal{D}_2. \label{RO_DDU_delay}
\end{align}
\end{subequations}

\section{Solution Approach}
\label{sol}
\revtwo{This section develops exact reformulations to solve the robust model in \eqref{RO_DDU}, addressing the computational challenges posed by the decision-dependent nature of the uncertainties.}

\subsection{\revtwo{Overview of Solution Methods}}
The presence of decision variables, which are solutions to the outer minimization problem, in the uncertainty set $\mathcal{D}_2$ poses significant complications to solving \textit{RO-DDU}. Specifically, it changes the size and form of the set, making it difficult to reformulate robust problems with DDU sets in a tractable fashion. Indeed, the general \textit{RO-DDU} problem is NP-complete, which can be proved based on the 3-Satisfiability problem (i.e., \textbf{3-SAT}) \cite{DDU}. Please refer to \textit{Appendix~\ref{appen:npcomplete}} for the NP-Complete proof. Moreover, the uncertain parameter $d_{i,j}$ appears in both the objective function and the left-hand side of \eqref{RO_DDU_delay}, and the problem is a nonlinear program containing bilinear terms, such as $d_{i,j} x_{i,j}$, that cannot be solved directly. 

\revtwo{To address these challenges, we propose the following two solution approaches. The high-level concepts of the two algorithms are illustrated in Fig.~\ref{fig:Flowchart_ALG}.
\begin{itemize}
    \item \textit{Robust Solution (\textit{RDDU}):} This approach combines linear programming (LP) duality, binary expansion, and linearization methods to reformulate the \textit{RO-DDU} model into a large-scale MILP that can be solved by standard solvers.
    \item \textit{Enhanced Robust Solution (\textit{e-RDDU}):} This approach introduces auxiliary variables and employs a clever transformation technique to reduce the problem size by eliminating redundant constraints during reformulation, thereby substantially improving computational efficiency. 
\end{itemize}
}

\begin{figure}[t!]
\centering
	     \includegraphics[width=0.45\textwidth]{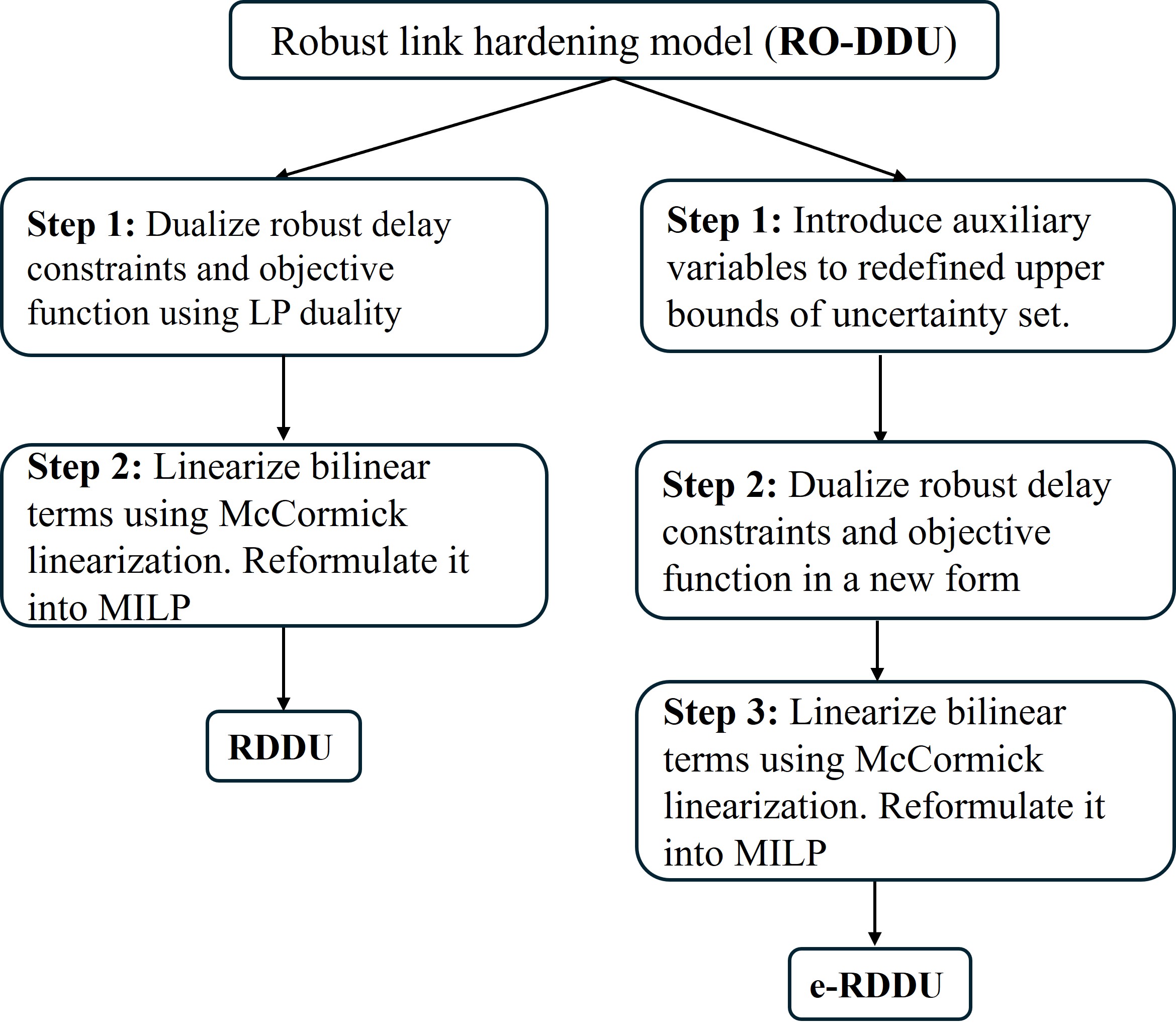}
	    \caption{\revtwo{Flowchart of solution approaches for the RO-DDU model. Two algorithms are shown: RDDU and e-RDDU.}}

     	     \label{fig:Flowchart_ALG}
             \vspace{-0.5cm}
\end{figure}

\vspace{-0.5cm}
\subsection{Robust Solution}
\label{sol:standard}
\revtwo{The main challenge in solving the  \textit{RO-DDU} problem stems from the fact that the objective function \eqref{RO_DDU_obj} and the robust constraints \eqref{RO_DDU_delay} contain bilinear terms due to the inclusion of decision variables in the DDU set $\mathcal{D}_2$. When the uncertainty set is independent of the decision variables, standard techniques can be employed to convert the problem into an equivalent MILP robust counterpart using LP duality \cite{RObook}. Unfortunately,  $\mathcal{D}_2$ is decision-dependent, rendering our problem more complicated. To this end, we introduce a series of transformations to convert 
\textit{RO-DDU}
into an equivalent and solvable MILP form by transforming \eqref{RO_DDU_obj} and  \eqref{RO_DDU_delay} into a set of linear equations.} 

First, the robust delay constraints  \eqref{RO_DDU_delay} can be rewritten as: 
\beqn
\Delta_{i} \geq  \max_{\bd \in \mathcal{D}_2} ~\sum_{j} d_{i,j} \frac{x_{i,j}}{\lambda_i}, ~\forall i,
\eeqn
which is equivalent to:
\beqn
\label{RO_DDU1}
    \Delta_{i}  \geq   && \sum_{j} \frac{\bar{d}_{i,j} x_{i,j}}{\lambda_i} + \sum_{j} \frac{\hat{d}_{i,j} u_{i,j}}{\lambda_i}  x_{i,j} x_{i,j} \nonumber \\
    && + ~\max_{\bm{g} \in \mathcal{G} } ~ \sum_{j} \frac{\hat{d}_{i,j}  x_{i,j}}{\lambda_i} g_{i,j}, ~ \forall i,
\eeqn
where 
\beqn
 \mathcal{G} := \Big\{g:~    \sum_{i,j} g_{i,j} \leq \Gamma_2;~0 \leq g_{i,j} \leq 1 - \sum_{r} \gamma_{i,j}^{r} t_{i,j}^{r},  \forall i,j
    \Big\}. \nonumber
\eeqn
Thus, for all $i$, the last term in \eqref{RO_DDU1} can be expressed as: 
\vspace{-0.3cm}
\begin{subequations}
\begin{align}
  \max_{\bm{g} \geq 0 } &  \qquad \sum_{j} \frac{\hat{d}_{i,j}  x_{i,j}}{\lambda_i} g_{i,j} \\
  \text{s.t.} &   \qquad  \sum_{i,j} g_{i,j} \leq \Gamma_2,   \quad \quad \quad \quad \quad \quad \quad \quad ~(\beta_i^{0})  \label{DDU_refor1} \\
   & \qquad  g_{i,j} \leq 1 - \sum_{r = 1}^{R} \gamma_{i,j}^{r} t_{i,j}^{r}, ~ \forall j, \quad \qquad ~(\xi_{i,j}^{0}) \label{DDU_refor2} 
\end{align}
\end{subequations}
 where $\beta_i^{0}$ and $\xi^{0}_{i,j}$ are dual variables associated with the constraints \eqref{DDU_refor1} and \eqref{DDU_refor2}, respectively, for each area $i$. By using LP duality \cite{LPbook}, \eqref{RO_DDU1} can be rewritten as:
\begin{subequations}
\label{RO_DDU_refor}
\begin{align}
    & \Delta_i \geq  ~\sum_{j} \bar{d}_{i,j} \frac{x_{i,j}}{\lambda_i}  + \sum_{j}\frac{ \hat{d}_{i,j} u_{i,j}}{\lambda_i} x_{i,j} x_{i,j} + \Gamma_2 \beta_i^{0} \nonumber \\
    &  \quad \quad ~ + ~\sum_{j} \xi_{i,j}^{0} -  \sum_{j,r} \gamma_{i,j}^{r} t_{i,j}^{r} \xi_{i,j}^{0}, ~ \forall i, \label{DDU_delay1}\\
    & \beta_i^{0} + \xi_{i,j}^{0} \geq  \frac{ \hat{d}_{i,j}}{\lambda_i} x_{i,j}, ~~ \forall i,j, \label{DDU_delay2} \\
    & \beta_i^{0} \geq 0, ~ \forall i; ~\xi_{i,j}^{0} \geq 0, ~ \forall i,j. \label{DDU_delay3}
\end{align}
\end{subequations}

The reformulated constraints \eqref{DDU_delay1} contain two complicating bilinear terms $x_{i,j} x_{i,j}$ and $t_{i,j}^{r} \xi_{i,j}^{0}$. If we can linearize these terms, the robust delay constraints \eqref{RO_DDU_delay} can be reformulated as a set of linear equations. We first show how to linearize the bilinear term $t_{i,j}^{r} \xi_{i,j}^{0}$, which is a product of a binary and a continuous variable. To tackle this, we define new auxiliary non-negative continuous variables $T_{i,j}^{0, r} = t_{i,j}^{r} \xi_{i,j}^{0}, \forall i,j,r$. Then, by applying McCormick linearization method \cite{McCormick76}, the bilinear term $t_{i,j}^{r} \xi_{i,j}^{0}$ can be implemented equivalently as follows:
\begin{subequations}
\label{RO_DDU_McCormick1}
\begin{align}
    & T_{i,j}^{0, r} \leq M t_{i,j}^{r},~ \forall i,j,r; ~~ T_{i,j}^{0, r} \leq \xi_{i,j}^{0}, ~~ \forall i,j,r, \\
    & 0 \leq T_{i,j}^{0, r} \geq \xi_{i,j}^{0} - M(1 - t_{i,j}^{r}), ~~ \forall i,j,r,
\end{align}
\end{subequations}
where $M$ is a sufficiently large number. It is more challenging to linearize the bilinear terms $x_{i,j}x_{i,j}$ since they are the products of two integer variables. To address this, we first introduce  binary variables $y_{i,j}^{k}$ and employ the binary expansion method to express $x_{i,j}$ as a sum of binary variables: 
\begin{subequations}
\label{xbe}
\begin{align}
    x_{i,j} & = \sum_{k=1}^{Q_{i,j}} 2^{k-1} y_{i,j}^{k}, ~ \forall i,j, \label{xbinary}\\
    & y_{i,j}^{k} \in \{0,1\}, ~ \forall i,j,k,
\end{align}
\end{subequations}
where $Q_{i,j}$ is a sufficiently large integer number. Obviously, we prefer to choose $Q_{i,j}$ as small as possible to reduce the number of binary variables $y_{i,j}^k$. Since $x_{i,j} \leq C_j, \forall i,j$ and $x_{i,j} \leq \lambda_i, \forall i,j$, we have $x_{i,j} \leq \min\{ C_j,  \lambda_i   \}, \forall i,j$. Hence, we can set $Q_{i,j} = \min \{  \floor{\log_2 C_j}+1, \floor{\log_2 \lambda_i}+1 \}$. 

We are now ready to linearize  $x_{i,j}x_{i,j}$. However, if we apply the binary expansion form \eqref{xbinary} to  $x_{i,j}$ twice, it will result in a large number of bilinear terms, each of which is a product of two binary variables. To avoid this issue, we propose a simple yet clever trick to linearize $x_{i,j}x_{i,j}$. Specifically,  instead of applying the binary expansion form  \eqref{xbe} to both $x_{i,j}$ variables, we only use it  for the first $x_{i,j}$. Therefore, we have:
\beqn
\label{xxlin}
x_{i,j}x_{i,j} = \Bigg( \sum_{k=1}^{Q_{i,j}} 2^{k-1} y_{i,j}^{k} \Bigg) x_{i,j} = \sum_{k=1}^{Q_{i,j}} 2^{k-1} Y_{i,j}^{k},~~ \forall i,j,
\eeqn
where $Y_{i,j}^{k} = y_{i,j}^{k} x_{i,j}, \forall i,j,k$. Then, we can use the McCormick linearization method again. In particular, the constraint $Y_{i,j}^{k} = y_{i,j}^{k} x_{i,j}$ can be implemented equivalently through the following linear inequalities:
\begin{subequations}
\label{RO_DDU_McCormick2}
\begin{align}
    & Y_{i,j}^{k} \leq L_{i,j} y_{i,j}^{k}; ~~  Y_{i,j}^{k} \leq x_{i,j}, ~~ \forall i,j,k, \\
    & 0 \leq Y_{i,j}^{k} \geq x_{i,j} - L_{i,j} (1 - y_{i,j}^{k}), ~~ \forall i,j,k.
\end{align}
\end{subequations}
where $L_{i,j} =  \min \{C_j, \lambda_i \}$ is an upper bound of $x_{i,j}$.

By using the preceding linearization steps, we obtain the following linear constraints that are equivalent to 
\eqref{DDU_delay1}--\eqref{DDU_delay3}.
\begin{subequations}
\label{RO_DDU_refor1}
\begin{align}
    & \Delta_i \geq  \sum_{j} \bar{d}_{i,j} \frac{x_{i,j}}{\lambda_i} + \Gamma_2 \beta_i^{0} + \sum_{j} \xi_{i,j}^{0} -  \sum_{j,r} \gamma_{i,j}^{r} T_{i,j}^{0,r}  \nonumber \\
    & + \sum_{j} \frac{\hat{d}_{i,j} u_{i,j}}{\lambda_i} \bigg( \sum_{k=1}^{Q_{i,j}} 2^{k-1} Y_{i,j}^{k} \bigg), ~ \forall i, \\
    &\eqref{RO_DDU_McCormick1},~ \eqref{xbe},~ \eqref{RO_DDU_McCormick2};~~ \beta_i^{0} + \xi_{i,j}^{0} \leq  \frac{ \hat{d}_{i,j}}{\lambda_i} x_{i,j}, ~~ \forall i,j, \\
    & \beta_i^{0} \geq 0, ~ \forall i; ~\xi_{i,j}^{0} \geq 0, ~ \forall i,j.
\end{align}
\end{subequations}
Since the robust delay constraints \eqref{RO_DDU_delay} are equivalent to \eqref{RO_DDU_refor}, they are also equivalent to \eqref{RO_DDU_refor1}.

To linearize the robust objective function in 
\eqref{RO_DDU_obj}, we first rewrite $\min_{\bt,\bx,\bw} \max_{\bd \in \mathcal{D}_2}  (\mathcal{C}_1 + \mathcal{C}_2 )$ as:
\begin{subequations}
\label{RO_DDU_eta}
\begin{align}
\min_{\bt,\bx,\bw} & \qquad \eta  \\
\text{s.t.} & \qquad \eta \geq ~ \max_{\bd \in \mathcal{D}_2}  ~( \mathcal{C}_1 + \mathcal{C}_2 ). \label{dduobj_eta}
\end{align}
\end{subequations}

From \eqref{setD2}, we can express \eqref{dduobj_eta} as:
\beqn
\label{RO_DDU_obj1}
 \eta~ \geq  && \mathcal{C}_1 + \sum_{i} s_{i} w_i + \rho \sum_{i,j} \bar{d}_{i,j} x_{i,j}  \\   \nonumber 
     &&  +~ \rho \sum_{i,j} \hat{d}_{i,j} u_{i,j} x_{i,j} x_{i,j} + \max_{\bm{g} \in \mathcal{G}} ~\rho \sum_{i,j} \hat{d}_{i,j} g_{i,j} x_{i,j}.
\eeqn

The last term in \eqref{RO_DDU_obj1} can be written explicitly as:
\begin{subequations}
\label{RO_DDU_obj2}
\begin{align}
     \max_{\bm{g} \geq 0} \quad & \rho \sum_{i,j} \hat{d}_{i,j} g_{i,j} x_{i,j} \\
 \text{s.t.} \quad & \sum_{i,j} g_{i,j} \leq \Gamma_2, \qquad\qquad\qquad \qquad ~~(\beta^{1})  \label{obj_DDU1}  \\
    & ~ g_{i,j} \leq 1 - \sum_{r = 1}^{R} \gamma_{i,j}^{r} t_{i,j}^{r},~\forall i,j, ~\qquad (\xi_{i,j}^{1}) \label{obj_DDU2}
\end{align}
\end{subequations}
where $\beta^{1}$ and $\xi_{i,j}^{1}$ are the dual variables associated with 
\eqref{obj_DDU1} to \eqref{obj_DDU2}, respectively. Define $T_{i,j}^{1,r} = t_{i,j}^r \xi_{i,j}^1, \forall i,j,r$. By applying the strong duality theorem \cite{LPbook} to \eqref{RO_DDU_obj2}, and using \eqref{xbe}-\eqref{RO_DDU_McCormick2}, and the McCormick linearization method,  we have:
\begin{subequations}
\label{RO_DDU_reform}
\begin{align}
    & \eta \geq  ~ \mathcal{C}_1 + \sum_{i} s_{i} w_i  + \rho \sum_{i,j} \hat{d}_{i,j} u_{i,j} \bigg( \sum_{k=1}^{Q_{i,j}} 2^{k-1} Y_{i,j}^{k} \bigg) \nonumber \\
    & + \rho \sum_{i,j} \bar{d}_{i,j} x_{i,j} + \Gamma_2 \beta^1 + \sum_{i,j} \xi_{i,j}^{1} - \sum_{i,j,r} \gamma_{i,j}^{r} T_{i,j}^{1,r},\\
    & \eqref{xbe}, \eqref{RO_DDU_McCormick2};~ \beta^{1} + \xi_{i,j}^{1} \geq \rho \hat{d}_{i,j} x_{i,j}, ~ \forall i,j, \\
    &  T_{i,j}^{1,r} \leq M  t_{i,j}^{r}, ~~ \forall i,j,r; ~~  T_{i,j}^{1,r} \leq \xi_{i,j}^{1}, ~~ \forall i,j,r, \\
    & 0 \leq T_{i,j}^{1,r} \geq \xi_{i,j}^{1} - M (1 - t_{i,j}^{r}), ~~ \forall i,j,r, \\
    & \beta^{1} \geq 0; ~\xi_{i,j}^{1} \geq 0, ~ \forall i,j.
\end{align}
\end{subequations}
Finally, the \textit{RO-DDU} problem \eqref{RO_DDU} can be reformulated as the following MILP, which can be solved by off-the-shelf solvers.
\begin{subequations}
\label{RO_DDU1MILP}
\begin{align}
\textbf{(RDDU)} \quad  &\min_{\bt,\bx,\bw}  \quad \eta  \\
\text{s.t.} \quad &  \eqref{budget} - \eqref{QoS}, ~ \eqref{var_constr1},~ \eqref{RO_DDU_refor1},~ \eqref{RO_DDU_reform}.
\end{align}
\end{subequations}

\vspace{-0.5cm}
\subsection{Enhanced Robust Solution}
\revtwo{Although the MILP reformulation \eqref{RO_DDU1MILP} can solve the decision-dependent robust problem (\textit{RO-DDU}) in \eqref{RO_DDU}, it suffers from weak relaxations due to the big-M terms in  \eqref{RO_DDU_McCormick1}, leading to poor numerical performance. Furthermore, the size of the reformulation \textit{RDDU}  increases rapidly as the problem size, especially the upper bound of $x_{i,j}$, increases. To this end, we propose an alternative reformulation that can substantially improve the computational time.} Specifically, we introduce new auxiliary variables and employ an alternative but equivalent form of $\mathcal{D}_2$ to reformulate the original \textit{RO-DDU} problem. 
We first  rewrite the constraints on $g_{i,j}$ in $\mathcal{D}_2$  as follows:
\begin{align}
\label{trans}
     0 \leq g_{i,j} \leq \bigg(1 - \sum_{r} \gamma_{i,j}^{r}\bigg) + \sum_{r} \gamma_{i,j}^{r} (1 - t_{i,j}^{r}), ~ \forall i,j.
\end{align}
Define new binary variables $v_{i,j}^{r} = 1 - t_{i,j}^{r}, \forall i,j,r$. Then:
\beqn
\label{setD2a}
  &&   \mathcal{D}_2 (\hat{d},\bar{d},t,x,\Gamma_2)  
    := \big\{ d : ~ g \in \mathcal{G}', \nonumber \\ &&\!d_{i,j} \!=\! \bar{d}_{i,j} \!+\! \hat{d}_{i,j} \big( g_{i,j} + u_{i,j} x_{i,j}\big), \forall i,j 
    \big\},
\eeqn
\vspace{-0.5cm}
\beqn
\label{setG2}
\!\!\!\!\!\!\!\!\!\!\!\!\!\!\!\!\text{where}&&\!\!\!\!\!\!\!\!\mathcal{G}' := \bigg\{~ \sum_{i,j} g_{i,j} \leq \Gamma_2,  \nonumber\\   
 &&\!\!\!\!\!\!\!\! 0 \leq g_{i,j} \leq \bigg(1 - \sum_{r} \gamma_{i,j}^{r}\bigg) + \sum_{r} \gamma_{i,j}^{r} v_{i,j}^r, ~ \forall i,j  \bigg\}. 
\eeqn

It can be observed that \textbf{the key difference} between the enhanced robust solution and the robust solution in Section \ref{sol:standard} lies in the introduction of new binary variables ($v_{i,j}^{r} = 1 - t_{i,j}^{r}$) and replacing the upper bound of $g$ in \eqref{setD2} with \eqref{trans}.
Now, the robust delay constraint \eqref{RO_DDU_delay} can be expressed as:
\beqn
\label{M_delay}
    \Delta_{i} \geq  \sum_j \bar{d}_{i,j} \frac{x_{i,j}}{\lambda_i} + \sum_{j} \hat{d}_{i,j}  \frac{u_{i,j}}{\lambda_i} x_{i,j} x_{i,j}  \nonumber \\
    + \max_{\bm{g} \in \mathcal{G}'} ~ \sum_{j} \hat{d}_{i,j} \frac{x_{i,j}}{\lambda_i} g_{i,j}, \forall i.
\eeqn
The last term in \eqref{M_delay} can be given explicitly as:
\begin{subequations}
\label{new_set_constr}
\begin{align}
  \max_{\bm{g} \geq 0 } &  \qquad \sum_{j} \frac{\hat{d}_{i,j}  x_{i,j}}{\lambda_i} g_{i,j} \\
  \text{s.t.} &   \qquad  \sum_{i,j} g_{i,j} \leq \Gamma_2,   \quad \quad \quad \quad \quad \quad \quad \quad \quad ~~~(\beta_i^{2})  \label{DDU_refor12} \\
   & \qquad    g_{i,j} \leq (1 - \sum_{r} \gamma_{i,j}^{r}) + \sum_{r} \gamma_{i,j}^{r} v_{i,j}^r,\forall j, (\xi_{i,j}^{2})   \label{DDU_refor22}  
\end{align}
\end{subequations}
 where $\beta_i^{2}$ and $\xi_{i,j}^{2}$ are dual variables associated with the constraints \eqref{DDU_refor12} and \eqref{DDU_refor22}, respectively, for each area $i$. 
Using the strong duality for the preceding problem yields:
\begin{subequations}
\label{M_delay_dual}
\begin{align}
    & \!\!\Delta_{i} \geq \sum_{j} \bar{d}_{i,j} \frac{x_{i,j}}{\lambda_i} + \Gamma_2 \beta_i^{2} + \sum_{j} \xi_{i,j}^{2} + \sum_{j,r} \gamma_{i,j}^{r}  v_{i,j}^{r} \xi_{i,j}^{2}  \nonumber \\
    &~~  - \sum_{j,r} \gamma_{i,j}^{r} \xi_{i,j}^{2} \!+\! \sum_{j} \hat{d}_{i,j} \frac{u_{i,j}}{\lambda_{i}} \bigg( \!\sum_{k=1}^{Q_{i,j}} \! 2^{k-1} y_{i,j}^{k} \bigg) x_{i,j}, \forall i,\!\!\!\!\\
    & \beta_i^{2} +\xi_{i,j}^{2} \geq \frac{\hat{d}_{i,j}}{\lambda_{i}} x_{i,j}, ~ \forall i,j, \label{M_delay1}\\
    & \beta_i^{2} \geq 0, ~ \forall i; ~\xi_{i,j}^{2} \geq 0, ~ \forall i,j; ~ y_{i,j}^{k} \in \{0,1\}, ~ \forall i,j,k. \label{M_delay2}
\end{align}
\end{subequations}
The constraints in \eqref{M_delay_dual} can be rewritten by expanding the variable space as:
\begin{subequations}
\label{MM_delay_dual}
\begin{align}
     & \Delta_{i} \geq \sum_{j} \bar{d}_{i,j} \frac{x_{i,j}}{\lambda_i} + \Gamma_2 \beta_i^{2} + \sum_{j} \xi_{i,j}^{2} + \sum_{j,r} V_{i,j,r}^{2} \nonumber \\
     &~~ - \sum_{j,r} \gamma_{i,j}^{r} \xi_{i,j}^{2} + \sum_{j} \frac{\hat{d}_{i,j} u_{i,j}}{\lambda_{i}} \sum_{k=1}^{Q_{i,j}} 2^{k-1} Y_{i,j}^{k} , ~ \forall i,\!\!\!\!\\
     & V_{i,j,r}^{2} \geq \gamma_{i,j}^{r} v_{i,j}^{r} \xi_{i,j}^{2}, \forall i,j,r; ~~ Y_{i,j}^{k} \geq y_{i,j}^{k} x_{i,j}, \forall i,j,k, \label{MM_delay3}\\
    & \eqref{M_delay1},~ \eqref{M_delay2},
\end{align}
\end{subequations}
where the auxiliary non-negative continuous variables $V_{i,j,r}^{2}$ and $Y_{i,j}^{k}$ related to the original variables through the constraints \eqref{MM_delay3}, respectively. If dual variable $\bm{\xi}$ is feasible for the set of equations given by \eqref{M_delay_dual}, then we can find a feasible variable for \eqref{MM_delay_dual} by $V_{i,j,r}^{2} = \gamma_{i,j}^{r} v_{i,j}^{r} \xi_{i,j}^{2}$, $Y_{i,j}^{k} = y_{i,j}^{k} x_{i,j}$ and vice versa. From constraints \eqref{MM_delay3}, if $v_{i,j}^{r} = 0$, then $V_{i,j,r}^{2} \geq 0$ and if $v_{i,j} = 1$, then $V_{i,j,r}^{2} \geq \gamma_{i,j}^{r} \xi_{i,j}^{2}$. Thus, $V_{i,j,r}^{2} \geq \gamma_{i,j}^{r} v_{i,j}^{r} \xi_{i,j}^{2}$ can be expressed by the following set of constraints:
\begin{align}
\label{RO_DDU_McCormick3}
    0 \leq V_{i,j,r}^{2} \geq ~ \gamma_{i,j}^{r} \xi_{i,j}^{2} - M (1 - v_{i,j}^{r}), ~~ \forall i,j,r.
\end{align}
The variables $Y_{i,j}^{k}$ satisfy the linearized constraints in \eqref{RO_DDU_McCormick2} as presented in the previous section.

Similarly, by following same procedure for objective function $\eta = \max_{\bd \in \mathcal{D}_2}  \: \mathcal{C}_1 + \mathcal{C}_2$, the robust counterpart of $\eta$, the linearized conditions can be obtained, as follows
\begin{subequations}
\label{MM_obj_dual}
\begin{align}
    & \!\!\eta \geq  \mathcal{C}_1 + \sum_{i} s_{i} w_i + \rho \sum_{i,j} \bar{d}_{i,j} x_{i,j} + \Gamma_2 \beta^3 + \sum_{i,j} \xi_{i,j}^{3} \nonumber \\
    &  -\! \sum_{i,j,r} \gamma_{i,j}^{r} \xi_{i,j}^{3} +\!\! \sum_{i,j,r} V_{i,j,r}^{3}  + \rho \sum_{i,j} \hat{d}_{i,j} u_{i,j} \!\sum_{k=1}^{Q_{i,j}} 2^{k-1} Y_{i,j}^{k}, \!\!\!\!\\
    & \eqref{xbe},~\eqref{RO_DDU_McCormick2}; ~ \beta^{3} + \xi_{i,j}^{3} \geq \rho \hat{d}_{i,j} x_{i,j}, ~ \forall i,j, \\
    & \beta^{3} \geq 0; ~\xi_{i,j}^{3} \geq 0, ~ \forall i,j,\\
    & 0 \leq V_{i,j,r}^{3} \geq \gamma_{i,j}^{r} \xi_{i,j}^{3} - M (1 - v_{i,j}^{r}), ~ \forall i,j,r.
\end{align}
\end{subequations}

Overall, the \textit{RO-DDU} problem in \eqref{RO_DDU} can be reformulated as the following equivalent MILP:
\begin{subequations}
\label{RO_DDU2}
\begin{align}
\textbf{(e-RDDU)} ~  &\min_{\bt,\bx,\bw}  \qquad \eta  \\
\text{s.t.} \quad &  \eqref{budget} - \eqref{QoS}, ~ \eqref{var_constr1}, ~ \eqref{MM_delay_dual},~ \eqref{MM_obj_dual}.
\end{align}
\end{subequations}

\subsection{Equivalence of RDDU and e-RDDU}
\revtwo{In the following, we offer the intuition for the proposed reformulations  \textit{RDDU} and \textit{e-RDDU}, and demonstrate their equivalence. This will enable us to understand the computational improvements provided by the \textit{e-RDDU} reformulation in comparison to the \textit{RDDU} reformulation.} Consider the following polyhedral DDU set with affine decision dependence: 
\begin{align}
\label{dduset}
    \mathcal{U}(\bz) := \big\{ \bm{\zeta} | \bm{A} \bm{\zeta} \leq \bv + \bm{\psi} \bz  \big\},
\end{align}
where $\bm{A}$ is a constant matrix, $\bv$ is a constant vector, and $\psi$ is an impact matrix with appropriate dimensions determining the influence of the decision $\bz$ on the upper bound of the uncertainty. Here, $\bz$ is a vector of binary variables (0's and 1's). Note that since $\mathcal{D}_2$ contains the integer variables $x_{i,j}$'s, we employed the binary expansion method as shown in \eqref{xbe} so that $\mathcal{D}_2$ encompasses only binary and continuous variables.  Thus, $\mathcal{D}_2$  is a special case of the DDU set $\mathcal{U}(\bz)$ in \eqref{dduset}.
Consider a general robust linear constraint given by:
\begin{align} \label{eq:MaxRobustConstOrg}
\bm{\zeta}^{T}\bu  \leq \bb, ~ \forall \bm{\zeta} \in \mathcal{U}(\bz).
\end{align}

\begin{theorem}
\label{Theo:RDDU}
Constraint \eqref{eq:MaxRobustConstOrg} can be reformulated as:
\begin{subequations}\label{eq:TheoRDDU}
\begin{align}
    \bm{\pi}^{T} \bm{A} = \bu^{T}, ~~ \bm{\pi} \geq 0,\label{eq:TheoRDDU_2}\\ 
    \sum_i\pi_i v_i + \sum_i\sum_j\psi_{i,j}y_{i,j} \leq b, \label{eq:TheoRDDU_1} \\
    y_{i,j} \leq \pi_i, ~ y_{i,j} \leq M z_j, ~\forall i,j \label{eq:TheoRDDU_3},\\
    0 \leq y_{i,j} \geq \pi_i - M (1 - z_j), ~\forall i,j, \label{eq:TheoRDDU_4}
\end{align}
\end{subequations}
where $M$ is a sufficiently large constant.
\end{theorem}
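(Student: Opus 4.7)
The plan is to establish the reformulation in two stages: first, dualize the inner maximization that arises from the robust constraint to rewrite it in terms of continuous dual variables; second, linearize the bilinear terms introduced by the dual objective using auxiliary variables tied to the binary decision vector $\bz$.

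First, I would rewrite the universally quantified constraint \eqref{eq:MaxRobustConstOrg} as $\max_{\bm{\zeta} \in \mathcal{U}(\bz)} \bu^{T}\bm{\zeta} \leq \bb$. For any fixed $\bz$, the inner problem is the linear program $\max_{\bm{\zeta}} \bu^{T}\bm{\zeta}$ subject to $\bm{A}\bm{\zeta} \leq \bv + \bm{\psi}\bz$, whose LP dual is $\min_{\bm{\pi} \geq 0} \bm{\pi}^{T}(\bv + \bm{\psi}\bz)$ subject to $\bm{A}^{T}\bm{\pi} = \bu$. Assuming the primal is feasible and bounded (which holds in our context since $\mathcal{U}(\bz)$ is a nonempty bounded polyhedron for every feasible $\bz$), strong LP duality \cite{LPbook} gives equality of the two optimal values. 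Hence the robust inequality is equivalent to the existence of a dual multiplier $\bm{\pi}$ satisfying $\bm{A}^{T}\bm{\pi} = \bu$, $\bm{\pi} \geq 0$, and $\bm{\pi}^{T}\bv + \bm{\pi}^{T}\bm{\psi}\bz \leq \bb$. This immediately produces \eqref{eq:TheoRDDU_2} and yields the bilinear counterpart of \eqref{eq:TheoRDDU_1} containing the products $\pi_i z_j$.

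The second stage handles the bilinear term $\bm{\pi}^{T}\bm{\psi}\bz = \sum_{i,j}\psi_{i,j}\pi_i z_j$. Since each $z_j \in \{0,1\}$ and each $\pi_i \geq 0$ is continuous, I introduce auxiliary variables $y_{i,j}$ intended to equal $\pi_i z_j$ and apply the standard McCormick envelope \cite{McCormick76}. Concretely, when $z_j = 0$ the inequalities $y_{i,j} \leq M z_j$ together with $y_{i,j} \geq 0$ force $y_{i,j} = 0 = \pi_i z_j$; when $z_j = 1$, the constraints $y_{i,j} \leq \pi_i$ and $y_{i,j} \geq \pi_i - M(1-z_j) = \pi_i$ force $y_{i,j} = \pi_i = \pi_i z_j$. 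Thus \eqref{eq:TheoRDDU_3}--\eqref{eq:TheoRDDU_4} exactly encode $y_{i,j} = \pi_i z_j$, and substituting into the dualized constraint yields \eqref{eq:TheoRDDU_1}.

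The main obstacle I anticipate is the justification that the big-$M$ linearization is tight, which requires an a priori upper bound on the dual variables $\pi_i$ so that $M$ can be chosen large enough without cutting off optimal duals. I would address this by noting that when $\mathcal{U}(\bz)$ is bounded for every feasible $\bz$, an upper bound on $\bm{\pi}$ can be derived from the vertices of the dual feasible polyhedron $\{\bm{\pi} \geq 0 : \bm{A}^{T}\bm{\pi} = \bu\}$; in practice, one selects $M$ sufficiently large based on the problem data. A minor observation to include in the writeup is that constraints \eqref{eq:TheoRDDU_2} appear as an existential condition (``there exists $\bm{\pi}$''), but because the overall problem already minimizes over outer variables, introducing $\bm{\pi}$ and $\bm{y}$ as additional decision variables preserves equivalence: any feasible choice in the reformulation certifies the robust constraint, and conversely the dual optimum supplies such a feasible choice. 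Combining both stages completes the proof of Theorem~\ref{Theo:RDDU}.
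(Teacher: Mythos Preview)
Your proposal is correct and follows essentially the same approach as the paper: rewrite the robust constraint as an inner maximization, apply LP strong duality to obtain \eqref{eq:TheoRDDU_2} together with the bilinear inequality $\bm{\pi}^{T}(\bv+\bm{\psi}\bz)\leq b$, and then linearize the products $\pi_i z_j$ via the big-$M$/McCormick constraints \eqref{eq:TheoRDDU_3}--\eqref{eq:TheoRDDU_4}. Your write-up is in fact more careful than the paper's, as you explicitly verify exactness of the linearization by the case split on $z_j\in\{0,1\}$ and flag the need for a valid bound $M$ on the dual variables.
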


\begin{proof} Please refer to \textit{Appendix} B.  
\end{proof}
\textit{Theorem} \ref{Theo:RDDU} provides the robust counterpart \eqref{eq:TheoRDDU} for the robust constraint \eqref{eq:MaxRobustConstOrg} using the Big-M reformulation. While this method has the advantage of not requiring a special set structure, it may lead to weak relaxations
and a rapidly increasing reformulation size as the number of decision
variables $z_j$'s increases.  This is a significant issue in our problem since $\mathcal{D}_2$ contains a large number of binary variables. To address this issue, we propose an enhanced reformulation by reducing the number of Big-M constraints by manipulating the coefficient matrix of binary variables, such that all elements are non-negative. The following theorem presents
the result.
\begin{theorem}
\label{Theo:e-RDDU}
The robust constraint \eqref{eq:MaxRobustConstOrg} is equivalent to: 
\begin{subequations}\label{eq:Theoe-RDDU}
\begin{align}
    \bm{\pi}^{T} \bm{A} = \bu^{T}, ~~ \bm{\pi} \geq 0,\label{eq:Theoe-RDDU_2}\\ 
    \sum_i\pi_i (v_i +\sum_{j:\psi_{i,j} < 0}\psi_{i,j}) + \sum_i\sum_{j:\psi_{i,j} \ge 0}\psi_{i,j}y_{i,j}~~~~~\cr
    - \sum_i\sum_{j:\psi_{i,j} < 0}\psi_{i,j}w_{i,j} \leq b, \label{eq:Theoe-RDDU_1} \\
    0 \leq y_{i,j} \geq \pi_i - M (1 - z_j),~\forall i,j : \psi_{i,j} \ge 0, \label{eq:Theoe-RDDU_4}\\
    0 \leq w_{i,j} \geq \pi_i - M  z_j,~\forall i,j : \psi_{i,j} < 0. \label{eq:Theoe-RDDU_5}
\end{align}
\end{subequations}
\end{theorem}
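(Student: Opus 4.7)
The plan is to follow the same duality route used in the proof of Theorem~\ref{Theo:RDDU}, but handle the bilinear cross-terms differently so that only one big-M inequality per pair $(i,j)$ is needed. First, I would dualize the inner worst-case problem: for fixed $\bz$, the quantity $\max_{\bm{\zeta}\in\mathcal{U}(\bz)} \bm{\zeta}^{T}\bu$ is an LP in $\bm{\zeta}$ with feasible set $\{\bm{\zeta}:\bm{A}\bm{\zeta}\le \bv+\bm{\psi}\bz\}$. Assuming the primal is bounded whenever feasible (the same implicit assumption used in Theorem~\ref{Theo:RDDU}), strong LP duality yields the dual $\min_{\bm{\pi}\ge 0,\,\bm{A}^{T}\bm{\pi}=\bu}\bm{\pi}^{T}(\bv+\bm{\psi}\bz)$. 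Consequently, the robust constraint \eqref{eq:MaxRobustConstOrg} is equivalent to the existence of $\bm{\pi}\ge 0$ with $\bm{A}^{T}\bm{\pi}=\bu$ and
\begin{equation}
\sum_i \pi_i v_i + \sum_i\sum_j \psi_{i,j}\,\pi_i\,z_j \;\le\; b. \label{eq:ProposalBilinear}
\end{equation}
This recovers \eqref{eq:Theoe-RDDU_2} and reduces the task to linearizing the bilinear products $\pi_i z_j$ inside \eqref{eq:ProposalBilinear}.

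Next, I would split the inner double sum in \eqref{eq:ProposalBilinear} according to the sign of $\psi_{i,j}$, which is the key idea behind \emph{e-RDDU}. For indices with $\psi_{i,j}\ge 0$, I would keep the standard substitution $y_{i,j}\mathrel{\mathop:}= \pi_i z_j$ and note that the only McCormick inequalities that matter for a ``$\le b$'' constraint are those that lower-bound $y_{i,j}$ (an adversary minimizing the LHS could otherwise set $y_{i,j}=0$). This gives exactly \eqref{eq:Theoe-RDDU_4}, with implicit $y_{i,j}\ge 0$. For indices with $\psi_{i,j}< 0$, I would apply the identity
\begin{equation}
\psi_{i,j}\,\pi_i z_j \;=\; \psi_{i,j}\,\pi_i \;-\; \psi_{i,j}\,\pi_i(1-z_j),
\end{equation}
which effectively complements the binary variable and flips the coefficient sign, and introduce the surrogate $w_{i,j}\mathrel{\mathop:}= \pi_i(1-z_j)$. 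Because $-\psi_{i,j}>0$, the term $-\psi_{i,j}w_{i,j}$ enters the LHS with a positive coefficient, so again only the lower-bound McCormick inequality $w_{i,j}\ge \pi_i - Mz_j$ (together with $w_{i,j}\ge 0$) is needed, yielding \eqref{eq:Theoe-RDDU_5}.

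Finally, I would collect the pieces. The constant $\psi_{i,j}\pi_i$ terms from the negative case combine with $\pi_i v_i$ to produce the coefficient $v_i+\sum_{j:\psi_{i,j}<0}\psi_{i,j}$ in front of $\pi_i$, the non-negative case contributes $\sum_{j:\psi_{i,j}\ge 0}\psi_{i,j}y_{i,j}$, and the surrogate terms contribute $-\sum_{j:\psi_{i,j}<0}\psi_{i,j}w_{i,j}$, which together reproduce \eqref{eq:Theoe-RDDU_1}. To finish, I would verify exactness: in the forward direction, any feasible $(\bm{\pi},z)$ with $y_{i,j}=\pi_i z_j$ and $w_{i,j}=\pi_i(1-z_j)$ satisfies the big-M inequalities automatically; in the reverse direction, the lower-bound and non-negativity constraints force $y_{i,j}\ge \pi_i z_j$ and $w_{i,j}\ge \pi_i(1-z_j)$, so replacing them with equality can only decrease the LHS, preserving the ``$\le b$'' inequality.

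The main obstacle is the reverse-direction argument: unlike standard McCormick envelopes that impose both upper and lower bounds, the enhanced scheme keeps only one inequality per product, and one has to justify that this does not create spurious feasibility. The justification rests on the observation that the coefficient of each surrogate variable in \eqref{eq:Theoe-RDDU_1} is non-negative (that is precisely what the sign split achieves), so an adversary cannot exploit slack in $y_{i,j}$ or $w_{i,j}$ to violate the robust constraint. A secondary technical point is ensuring that $M$ is chosen large enough to dominate $\pi_i$ at optimality; this is routine given the boundedness assumption on the feasible dual multipliers inherited from Theorem~\ref{Theo:RDDU}.
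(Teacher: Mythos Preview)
Your proposal is correct and follows essentially the same approach as the paper: dualize the inner LP to obtain the bilinear constraint $\bm{\pi}^{T}(\bv+\bm{\psi}\bz)\le b$, split the cross-terms by the sign of $\psi_{i,j}$ (complementing $z_j$ when $\psi_{i,j}<0$ so every surrogate enters with a non-negative coefficient), and then argue that only the lower big-M inequalities are needed because inflating $y_{i,j}$ or $w_{i,j}$ can only increase the left-hand side. Your explanation of why the upper McCormick bounds can be dropped is in fact more explicit than the paper's, but the underlying argument is identical.
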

\begin{proof}
Please refer to \textit{Appendix} C. 
\end{proof}

It is easy to see that the \textit{RDDU} reformulation relies on the result of \textit{Theorem} \ref{Theo:RDDU}, while \textit{e-RDDU} exploits  \textit{Theorem} \ref{Theo:e-RDDU} to reduce the number of Big-M constraints. Therefore, we can infer that \textit{RDDU} and \textit{e-RDDU} are equivalent. Furthermore, the computational advantages of \textit{e-RDDU} over \textit{RDDU} stem from the fact that \eqref{eq:Theoe-RDDU} has considerably fewer integer constraints compared to  \eqref{eq:TheoRDDU}.

\subsection{Comparison Between RDDU and e-RDDU} 

\revtwo{To highlight the advantages of \textit{e-RDDU},  we compare the problem sizes resulting from the two reformulations.} Recall from \eqref{xbe} that $k$ represents the index in the set $[1,Q_{i,j}], \forall i,j$. 
Define $K = \sum_{i,j} Q_{i,j}$. Also, set $\mathcal{E}$ represents all the network links between APs and ENs, i.e., $\mathcal{E}=\{(i,j):i\in \mathcal{I},j\in \mathcal{J}\}$. We use $|.|$ to indicate the cardinality of a set. For instance, $|\mathcal{E}|$ is the total number of links in the network. Since both the reformulations \textit{RDDU} and \textit{e-RDDU} have similar numbers of continuous and integer variables, the difference between \textit{RDDU} and \textit{e-RDDU} lies in the number of constraints, which becomes apparent as network size increases. \revtwo{Table \ref{tab: Var_table} shows the sizes of the \textit{RDDU} and \textit{e-RDDU} problems in terms of the number constraints, which demonstrates the advantages of \textit{e-RDDU}.}

\renewcommand{\arraystretch}{1.5}
\begin{table}[t!]
\centering
\begin{tabular}{ |c|c|}
\hline
 Method & Number of constraints\\
 \hline
 \!\!\!\! \textit{RDDU} \!\!\!   & \!\!\!\! $ 2 + 2|\mathcal{I}| \!+\! |\mathcal{E}| (8|\mathcal{R}| \!+\! 5K \!+ \!\!5)| \!+\! |\mathcal{F}|\!\!\!$  \!\!  \\
 \hline 
 \!\!\!\!\! \textit{e-RDDU} \!\!\! & \!\!\!\!\!\!  $ 2 + 2|\mathcal{I}| +\! |\mathcal{E}| (4|\mathcal{R}| \!+\! 5K \!+\!\! 5) \!+\! |\mathcal{F}|$ \!\!\!\!  \\
 \hline 
\end{tabular}
\caption{\revtwo{Comparison between \textit{RDDU} and \textit{e-RDDU}}}
\label{tab: Var_table}
\vspace{-0.4cm}
\end{table}
\renewcommand{\arraystretch}{1}

\begin{remark}[\revtwo{Computational Complexity}]
The proposed \textit{RO-DDU} model is a complex mixed-integer nonlinear program, proven to be NP-complete in Appendix~\ref{appen:npcomplete}, making it challenging to solve in its original form. The \textit{RDDU} and \textit{e-RDDU} reformulations transform the original problem into an MILP, solvable by off-the-shelf solvers. Consequently, the overall complexity of solving \textit{RO-DDU} remains dependent on the solver's choice and its computational capability. MILP itself is a challenging class of optimization problems, being NP-hard, and when treated as a decision problem, it is NP-complete \cite{conforti2014integer}. Most MILP solvers incorporate heuristic algorithms during a pre-solve phase to effectively reduce the problem size before employing techniques such as branch-and-bound and branch-and-cut \cite{conforti2014integer}. However, the computational time for these methods is primarily influenced by the problem size, particularly the number of integer variables, and the worst-case complexity remains exponential.
\end{remark}

\section{Numerical Results}
\label{results}
This section evaluates the performance of our proposed models through simulations, including comparisons with benchmarks and sensitivity analyses to examine the impact of key system parameters on the solution  and system performance.

\subsection{Simulation Setting}
\label{simsetting}
We consider an EC  system comprising $10$ areas and $10$ ENs (i.e., $I$ = $J$ = $10$), with larger networks explored in our sensitivity analyses. The edge network topology is based on the cities and locations of randomly selected Equinix edge data centers\footnote{https://www.equinix.com/data-centers/americas-colocation}, as similarly done in related work \cite{Hotedge}. The network delay parameters, including the minimum delay $\bar{d}_{i,j}$ and maximum delay deviation $\hat{d}_{i,j}$, are generated using the global ping dataset\footnote{ https://wondernetwork.com/pings}. Resource demands $\lambda_i$'s are generated using the Materna data trace\footnote{http://gwa.ewi.tudelft.nl/datasets/gwa-t-13-materna}, which includes performance metrics from distributed data centers, such as CPU cores, memory usage, and disk throughput over three months. This facilitates the simulation of realistic resource demands, with values ranging between $40$ and $60$ \textit{vCPUs}, randomly generated based on a uniform distribution. The resource capacities of the ENs are randomly selected based on specifications of EC2 instances\footnote{https://instances.vantage.sh.}. Each EN consists of several EC2 instances. The unmet demand penalty parameters $s_i$ are generated using a uniform distribution over the interval $[40,50]$, which captures the variability in penalties that operators might incur depending on the criticality of service requirements and user expectations.

We consider three hardening levels for each link (i.e., $R = 3$). 
The link hardening cost parameters (i.e., $h_{i,j}^{r}, \forall i,j,r$) are generated as follows.
Recall that $\Delta h$ is denoted by the incremental link hardening cost between two adjacent levels for each link, which is set to $0.2$ in our simulation. The first-level hardening costs $h_{i,j}^{\sf 1}$ are randomly generated from a uniform distribution within the interval $[1, 1.05]$. The second and third-level link hardening costs are obtained by adding the incremental cost to the first-level hardening costs, i.e.,  $h_{i,j}^{\sf 2} = h_{i,j}^{\sf 1} + \Delta h$ and $h_{i,j}^{\sf 2} = h_{i,j}^{\sf 1} + 2 \Delta h, ~\forall i,j$. These incremental costs are set to reflect realistic trade-offs between cost and performance improvement, consistent with established network hardening practices \cite{US_harden1,US_harden2}.
The parameters $\gamma_{i,j}^r$, which represent the impact of the hardening decision on the link delay, are generated using a similar method. Specifically, we set $\gamma_{i,j}^{1}, \forall i,j$ to  $0.1$ and define $\Delta \gamma$ as an incremental impact factor (IF) between adjacent levels, which is set to $0.4$. Then, we have $\gamma^2 = \gamma^1 + \Delta \gamma$ and $\gamma^3 = \gamma^1 + 2 \Delta \gamma$. 
Thus, the IFs for all levels of hardening can be determined using only the IF for first-level hardening decisions ($\gamma_{i,j}^{r =1 }$) and the incremental IF ($\Delta \gamma$), which simplifies conducting sensitivity analysis. It is important to note that the IF $\gamma_{i,j}^r$ falls within the range $[0,1]$, meaning the highest level, $\gamma_{i,j}^{r = 3}$, must be less than $1$. To simplify, we assign distinct IFs within this range for the three hardening levels. The impact of the allocated workload is assumed to be uniform across all links, with the base case setting $u_{i,j} = \bu = 0.1, \forall i, j$. 

\begin{table}[t!]
\centering
\begin{tabular}{|l|l|l|l|}
\hline
Parameter               & Value & Parameter                 & Value           \\ \hline
$I, J$               & $10$    & $B$                    & 100             \\ \hline
$R$                  & $3$     & $\Gamma_1 = \Gamma_2 $ & 15              \\ \hline
$\gamma_{i,j}^{r=1}$ & $0.1$   & $h_{i,j}$              & $U[1, 1.05]$ \\ \hline
$\Delta \gamma$      & $0.4$   & $\Delta h$             & 0.2             \\ \hline
$\rho$               & $0.1$   & $\Delta_i = \Delta$    & 15              \\ \hline
$\alpha$             & $0.05$  & $u_{i,j}$              & $0.1$             \\ \hline
\end{tabular}
\caption{System parameters}
\label{tab:SystemParameters}
\vspace{-0.4cm}
\end{table}

The system parameter values are detailed in Table~\ref{tab:SystemParameters}, which represents the \textbf{default setting}.During the sensitivity analysis, these key parameters will be varied to evaluate their impact on system performance. All the experiments are implemented in a Matlab environment using CVX\footnote{https://cvxr.com/cvx/} and Gurobi 9.1.2 on a desktop with an Intel Core i7-11700KF and 32 GB of RAM.  In CVX, we use the default precision setting, where the relative accuracy for the optimality conditions is approximately $10^{-6}$ to $10^{-8}$. For Gurobi solver, the default tolerance, i.e., \textit{MIPGap} for MILP, is set to around $10^{-4}$ for mixed integer programs.

\subsection{Sensitivity analysis}
\label{sensi}

This section presents sensitivity analyses to assess the influence of key system parameters on the optimal solution. These parameters  include the hardening budget ($B$), uncertainty budget ($\Gamma$), delay penalty ($\rho$), and DDU parameters ($\bf{u}$, $\bf{\gamma}$). To evaluate the impact of the link hardening cost $\bf{h}$, a scaling factor $\Psi$ is introduced for the cost, where $\Psi$ is equal to  $1$ in the default setting. More specifically, the base value of $\bf{h}$ generated in Section \ref{simsetting} is multiplied by  $\Psi$ to either scale up or down the hardening cost. A higher value of $\Psi$ indicates a higher cost for hardening each link. 
For the purpose of sensitivity analyses, we solely focus on the proposed robust model  \textit{RDDU}. The optimal objective value expresses the total cost. 

\subsubsection{Impacts of the link hardening costs and budget}
As illustrated in Figs.\ref{fig:B_Psi_cost} and \ref{fig:B_Psi_payment}, increasing the hardening budget B leads to a decrease in the total cost since the platform can select more links for hardening, resulting in a reduction of the total cost. It is important to note that the delay constraints \eqref{delay} prioritize the closest ENs to meet the demand of each area, leading to only a subset of logical links being utilized. Clearly, the platform should only harden links with traffic traversing through them. Thus, as shown in Fig.~\ref{fig:B_Psi_cost}, the cost becomes saturated after a certain budget value since the platform has already chosen the most critical links for hardening and has no incentive to harden more links even if the budget is redundant. 

Fig. \ref{fig:B_Psi_cost} further shows that the total cost increases as the link hardening cost scaling factor $\Psi$ increases.  The saturation occurs early when $\Psi$ is small. For lower values of $\Psi$, the platform is more inclined to invest in link hardening, leading to an initial increase in hardening payment followed by saturation, as confirmed in Fig.~\ref{fig:B_Psi_payment} where the total link hardening cost (i.e., payment) becomes saturated after a certain budget value. 

\begin{figure}[h!]
\vspace{-0.2cm}
\centering
		\subfigure[Total cost]{
	     \includegraphics[width=0.242\textwidth,height=0.10\textheight]{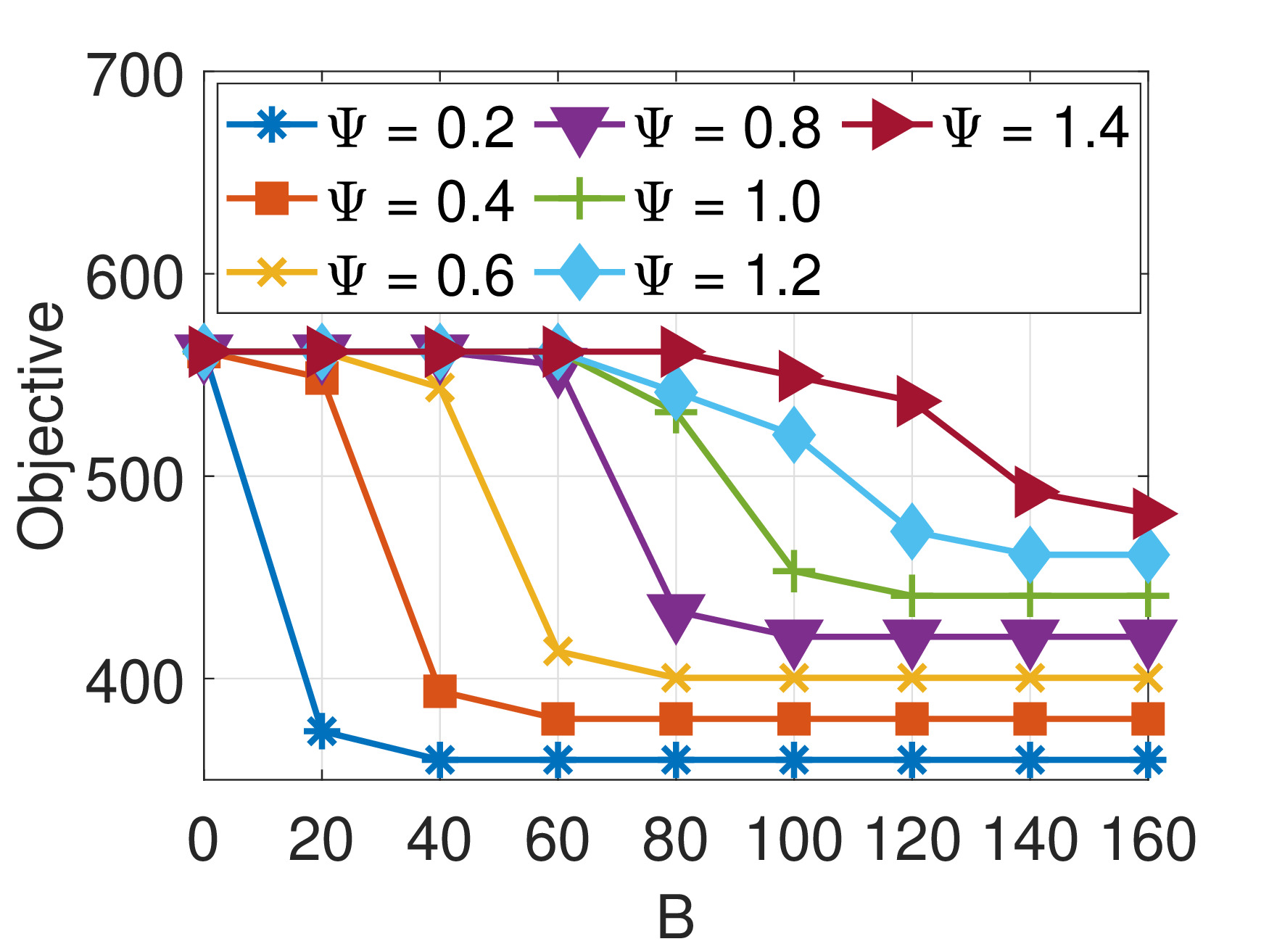}
	     \label{fig:B_Psi_cost}
	}  \hspace*{-2.1em} 
	     \subfigure[Payment]{
	     \includegraphics[width=0.242\textwidth,height=0.10\textheight]{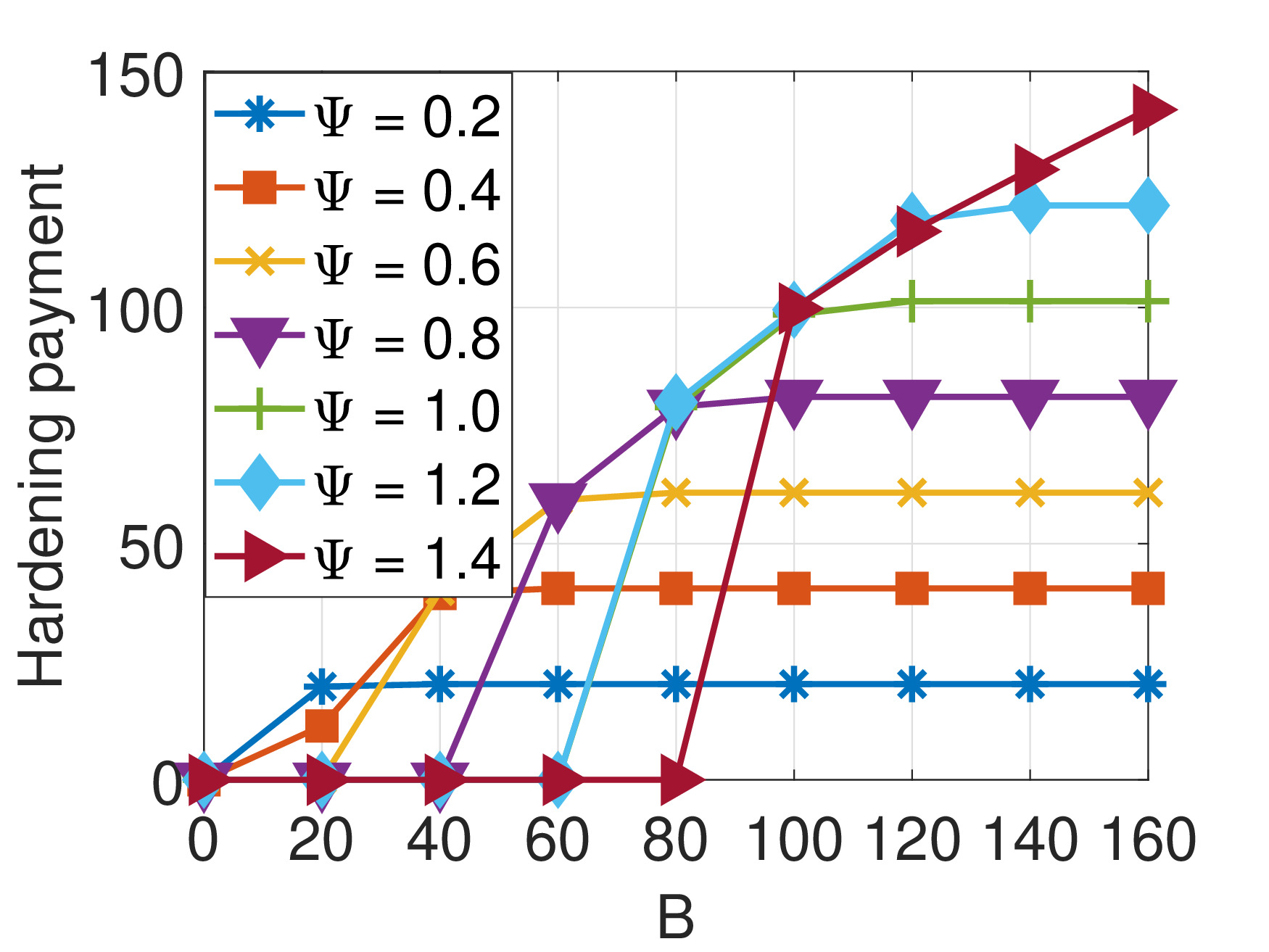}
	     \label{fig:B_Psi_payment}
	}
	\vspace{-0.2cm}
	    \caption{Impacts of $\Psi$ and $B$ on the system performance}
	    \vspace{-0.4cm}
\end{figure}

\subsubsection{Impacts of the uncertainty set}
\label{impact_uncertainty}
Figs. \ref{fig:Gamma_rho}-\ref{fig:u_Gamma} exhibit the effects of the uncertainty set on the optimal solution. It is evident from Fig.~\ref{fig:Gamma_rho} that the total cost initially increases with the increment of $\Gamma$, resulting in a larger uncertainty set and a more robust solution. However,  the total cost reaches saturation after a certain value of $\Gamma$, as the platform has already chosen an optimal set of links for hardening. Note that only a subset of ENs can serve demand from a specific area, causing certain logical links to remain unused for data traffic delivery. Therefore, the platform has no incentive to harden these links. Recall that $\bf{u}$ represents the influence of workload on link delay, as specified in $\mathcal{D}_2$. Fig. \ref{fig:u_Gamma} suggests that a higher value of $\bf{u}$ corresponds to a greater link delay variation, leading to a higher total cost. These results are intuitive, as a larger uncertainty set yields a more robust solution, which, in turn, raises the cost.
\begin{figure}[h!]
	\vspace{-0.2cm}
 \subfigure[Varying $\Gamma$ and $\rho$]{
	     \includegraphics[width=0.242\textwidth,height=0.10\textheight]{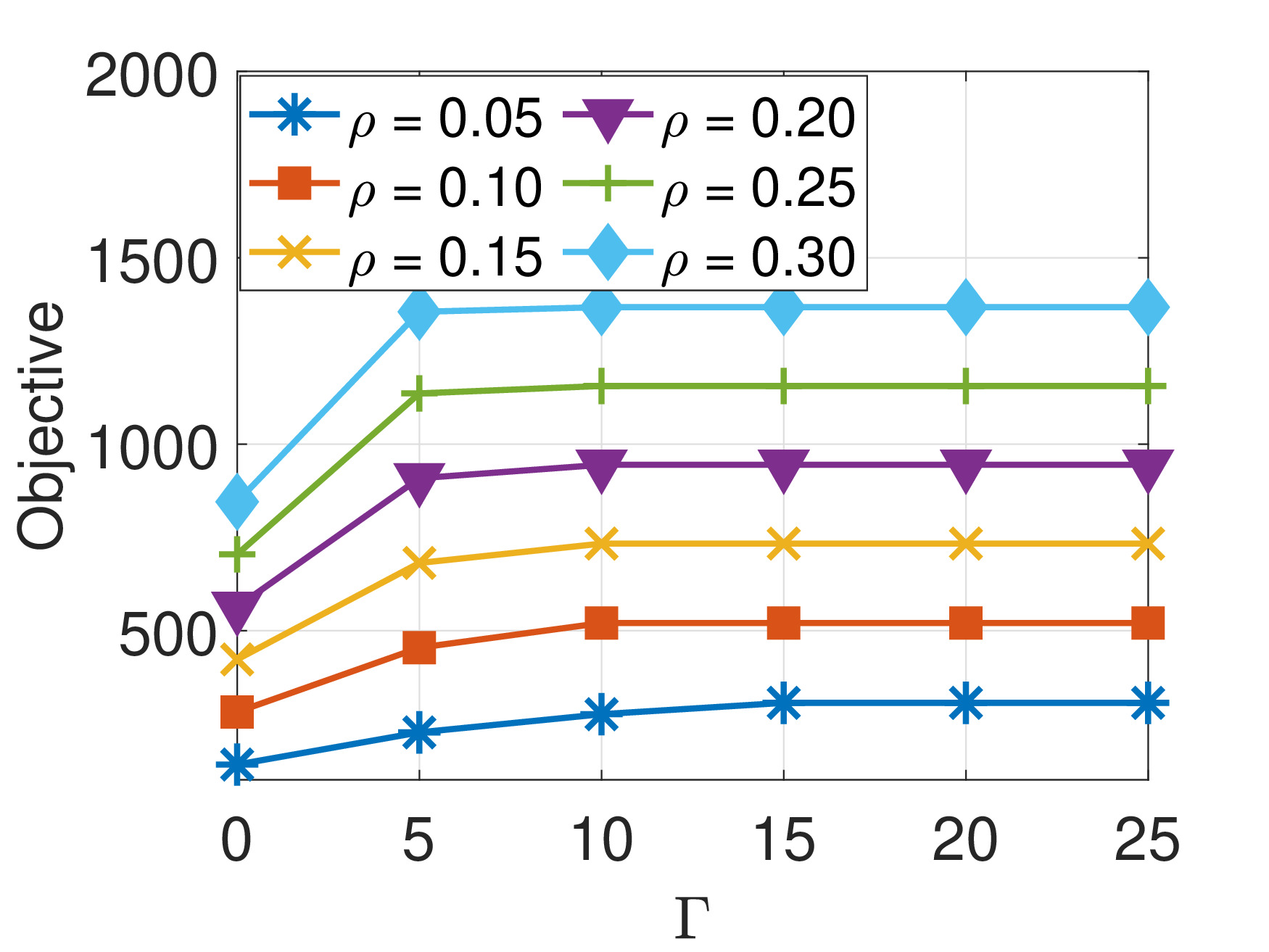}
	     \label{fig:Gamma_rho}
	}  \hspace*{-2.1em} 
	     \subfigure[Varying $\rho$ and $\bu$]{
	     \includegraphics[width=0.242\textwidth,height=0.10\textheight]{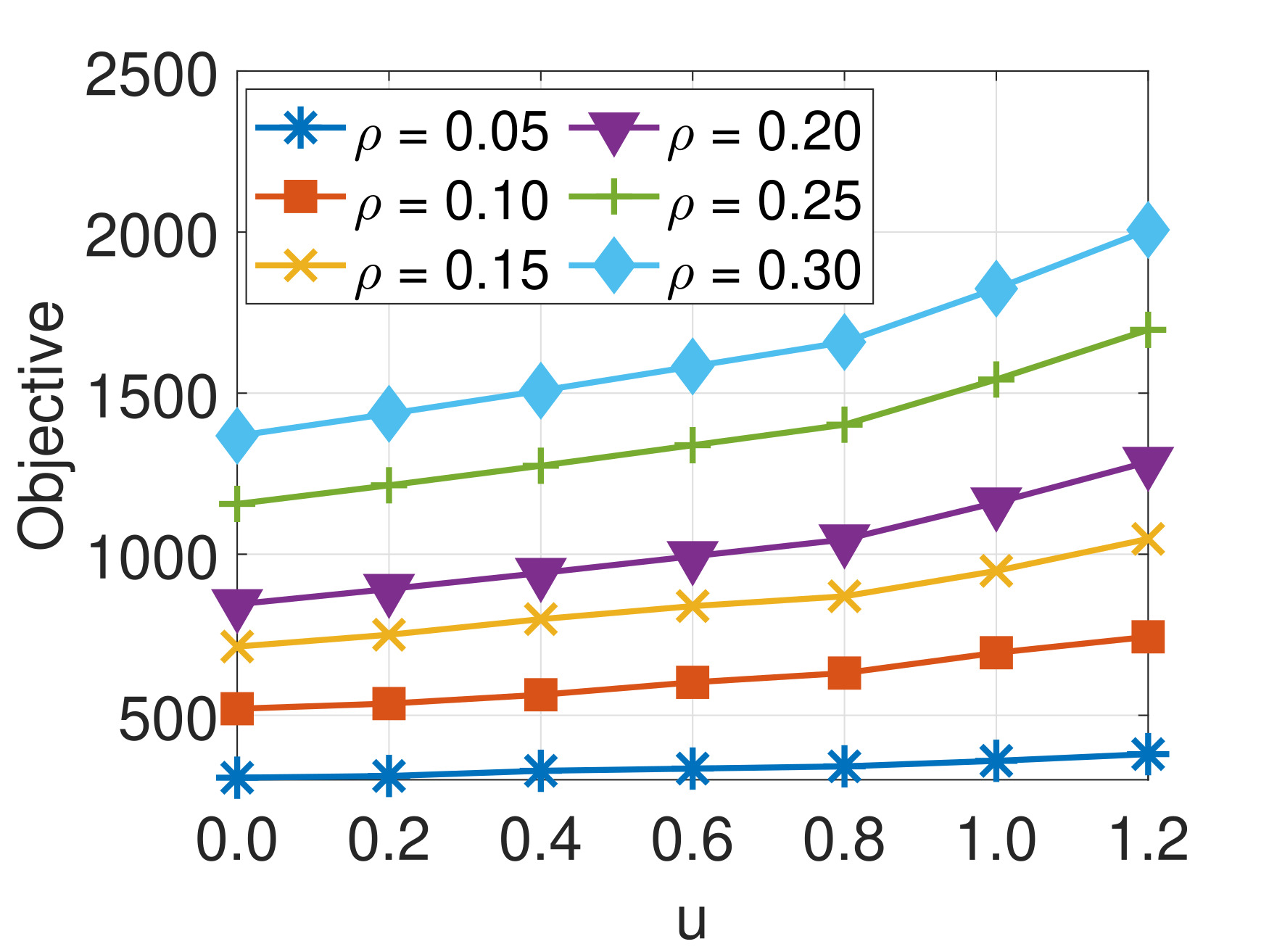}
	     \label{fig:u_Gamma}
	}
	\vspace{-0.2cm}
	    \caption{Impacts of the uncertainties on the system performance}
	    \vspace{-0.3cm}
\end{figure}

Figs.~\ref{fig:Gamma_gamma_psi_Cost}-\ref{fig:first_level_gamma_psi_payment} demonstrate the impacts of link hardening cost parameters (i.e., $\Psi$) and link hardening impact factors (i.e., $\gamma$)  on the optimal objective and hardening cost (i.e., the total payment for link hardening). The incremental impact ($\Delta \gamma$) between two adjacent link hardening levels is also considered, where a higher value of $\Delta \gamma$  indicates a greater impact of level-2 and level-3 hardening, thereby offering more incentives for the platform to select links for level-2 and level-3 hardening. From  $\mathcal{D}_2$, an increase in the link hardening impact results in lower link delays,  leading to a reduction in overall delay cost.  As $\Delta \gamma$ increases, so does its impact. 

Fig.~\ref{fig:Gamma_gamma_psi_Cost} shows that as $\Delta \gamma$ increases, the total cost decreases due to the significant reduction in link delay achieved through link hardening.   This finding is further supported by Figs.~\ref{fig:Gamma_Deltagamma_cost}--\ref{fig:Gamma_Deltagamma_payment}. In addition, Fig.~\ref{fig:Gamma_gamma_psi_payment} indicates that the link hardening payment initially increases with the link hardening cost scaling factor $\Psi$ and then decreases after a certain value of $\Psi$,  beyond which the hardening cost outweighs the benefit.

In Figs.~\ref{fig:first_level_gamma_psi_cost}--\ref{fig:first_level_gamma_psi_payment}, we set the incremental impact factor $\Delta \gamma$ to 0.3 and vary first-level impact factor ($\bm{\gamma}^{\sf 1} = \gamma_{i,j}^{\sf 1}, \forall i,j$) from $0.1$ to $0.4$. A larger value of $\gamma^{\sf 1}$ leads to a lower link delay. 
 Fig.~\ref{fig:first_level_gamma_psi_cost} reveals that 
the total cost decreases as $\gamma^{\sf 1}$ increases,  while Fig.~\ref{fig:first_level_gamma_psi_payment} implies the platform invests more in link hardening as $\gamma^{\sf 1}$ increases. 
Finally, we vary  $\Delta \gamma$ and the uncertainty budget $\Gamma$ to study their impact on the total cost and the link hardening payment in 
Figs.~\ref{fig:Gamma_Deltagamma_cost}--\ref{fig:Gamma_Deltagamma_payment}. It can be seen that an increase in $\Gamma$ leads to a higher cost due to the larger uncertainty set. Moreover, Fig.~\ref{fig:Gamma_Deltagamma_payment} suggests that the platform is willing to invest more money in link hardening as $\Gamma$ increases,  indicating a higher level of uncertainty.

\begin{figure}[h!]
\vspace{-0.2cm}
\label{Fig:uncertainty_impact_2}
	    \subfigure[Cost, varying  $\Psi$ and $\Delta \gamma$]{
	     \includegraphics[width=0.242\textwidth,height=0.10\textheight]{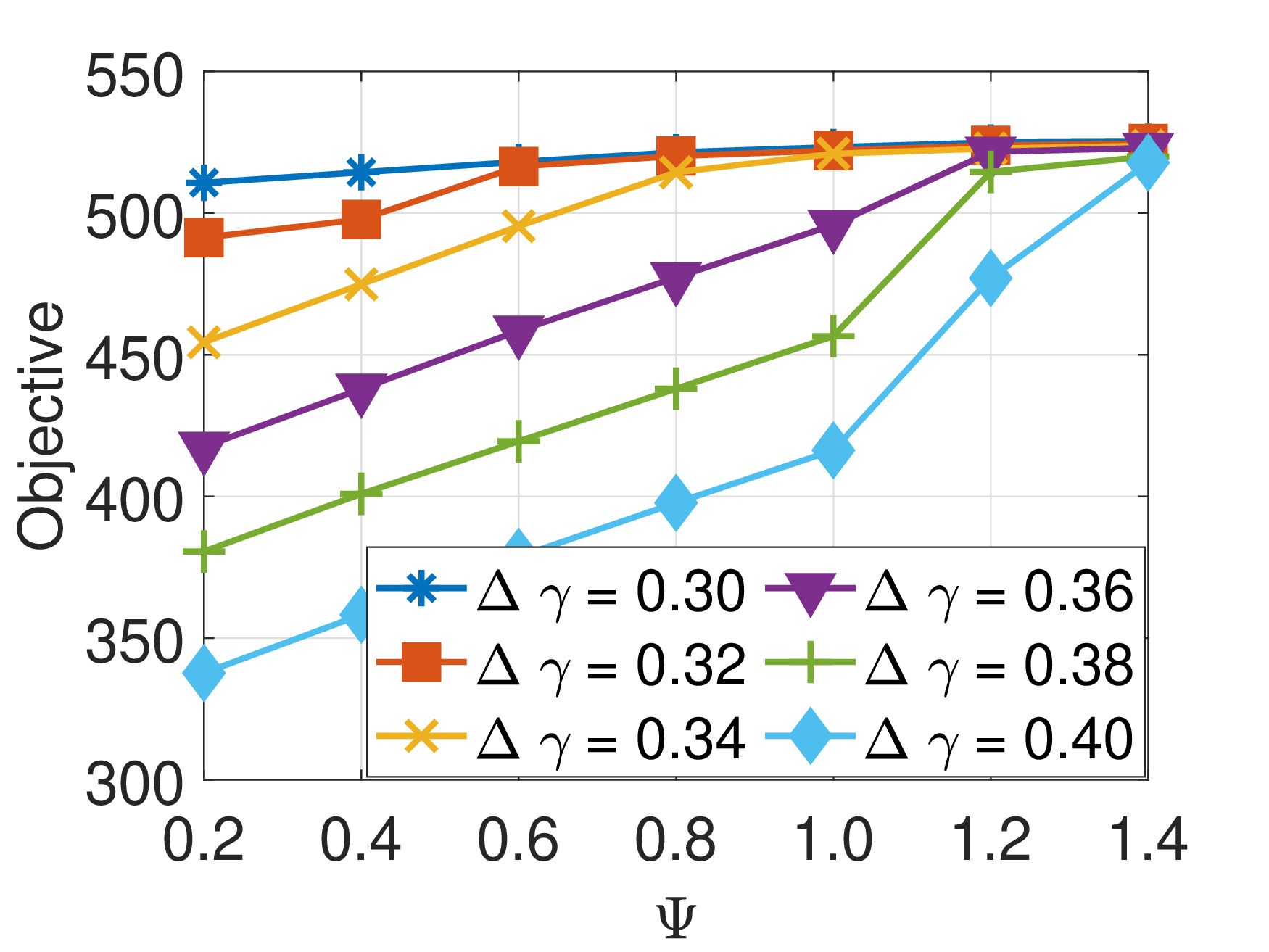}
	     \label{fig:Gamma_gamma_psi_Cost}
	} \hspace*{-1.9em} 
	    \subfigure[Payment, varying $\Psi$ and $\Delta \gamma$]{	
        \includegraphics[width=0.242\textwidth,height=0.10\textheight]{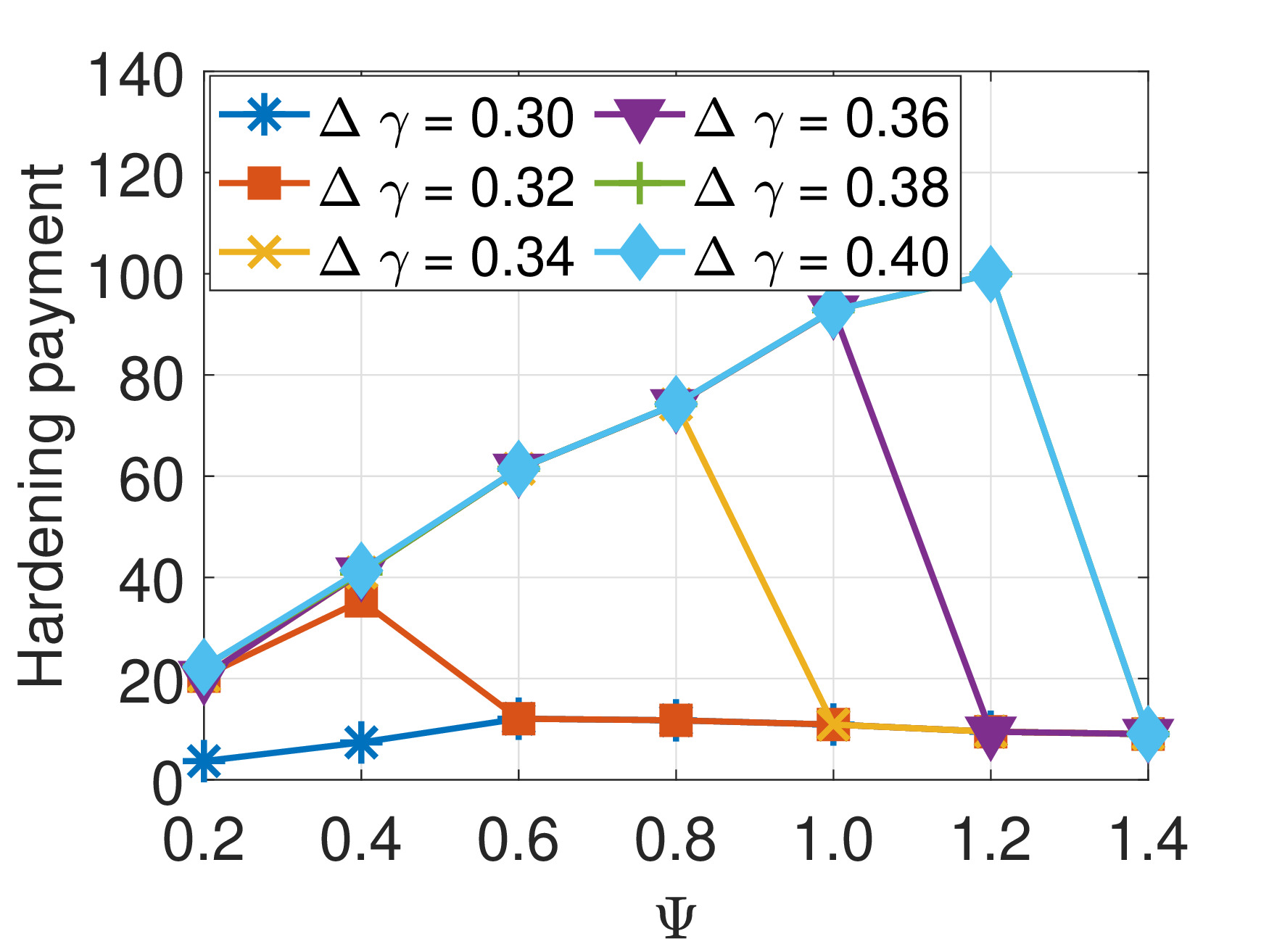}
	    \label{fig:Gamma_gamma_psi_payment}
	}   
	\subfigure[Cost, varying  $\gamma^{\sf 1}$ and $\Psi$]{
	     \includegraphics[width=0.242\textwidth,height=0.10\textheight]{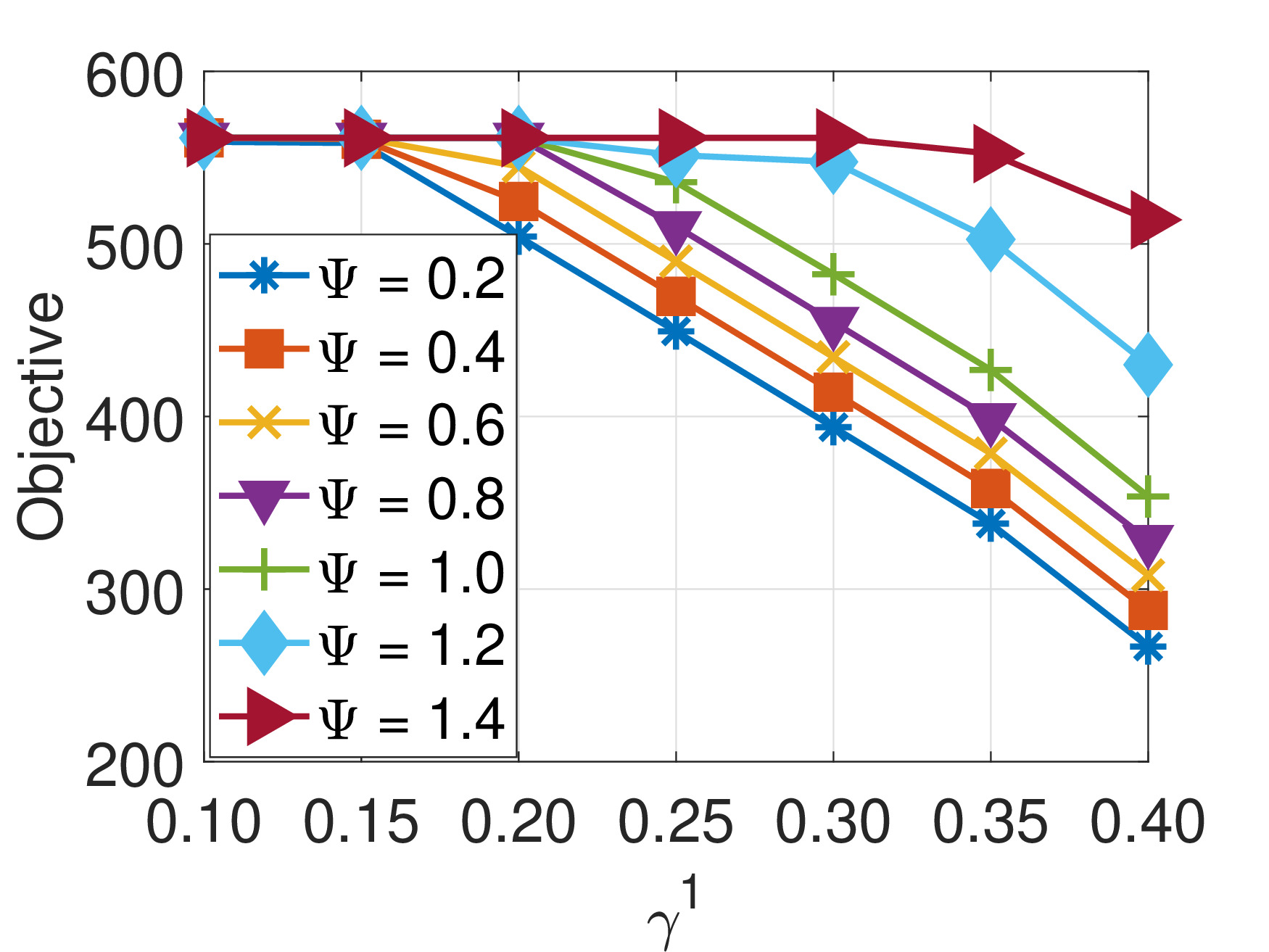}
	     \label{fig:first_level_gamma_psi_cost}
	} \hspace*{-1.9em} 
	    \subfigure[Payment, varying $\gamma^{\sf 1}$ and $\Psi$]{	
        \includegraphics[width=0.242\textwidth,height=0.10\textheight]{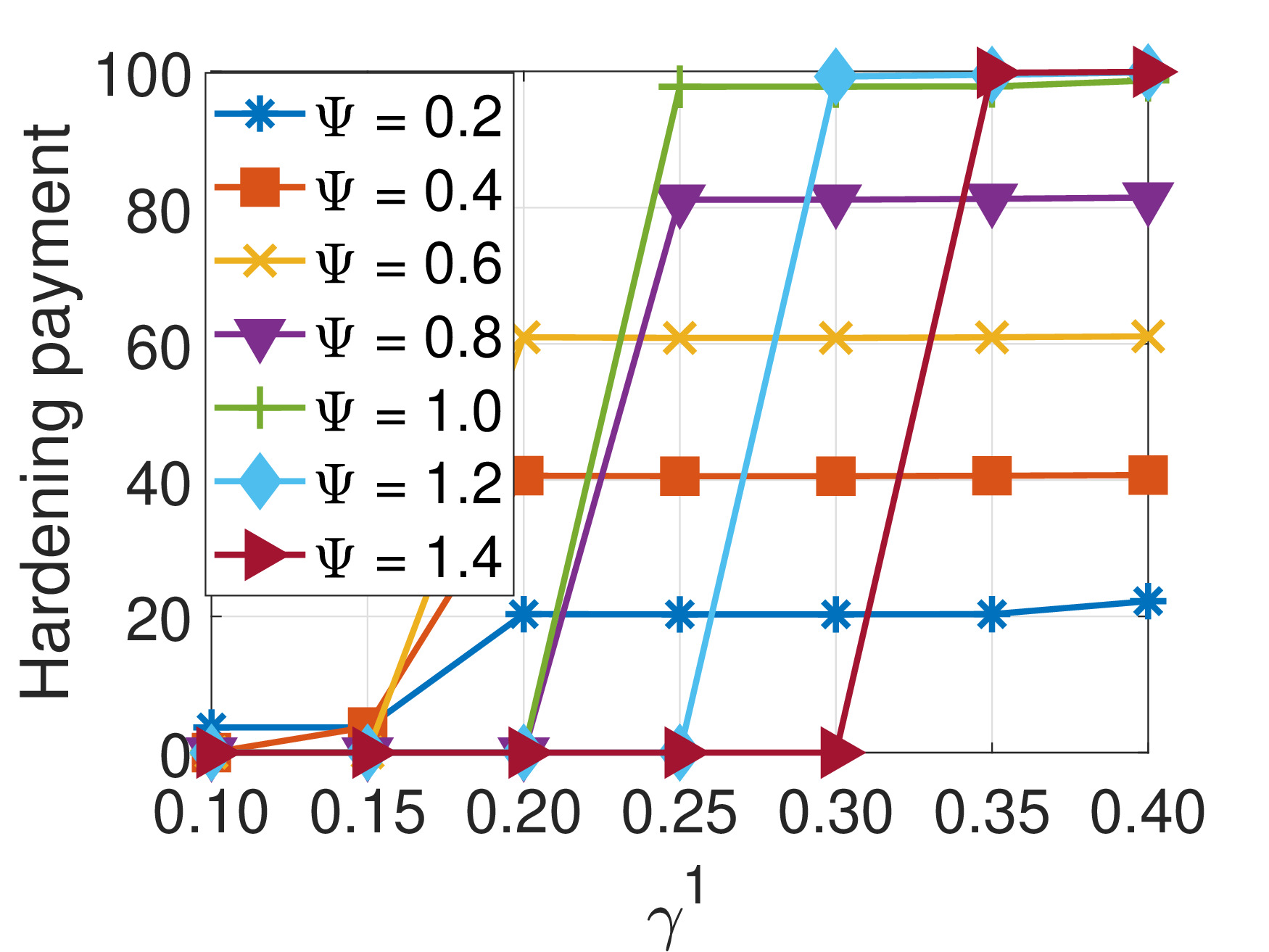}
	    \label{fig:first_level_gamma_psi_payment}
	} 
	 \subfigure[Cost, varying  $\Gamma$ and $\Delta \gamma$]{
	     \includegraphics[width=0.242\textwidth,height=0.10\textheight]{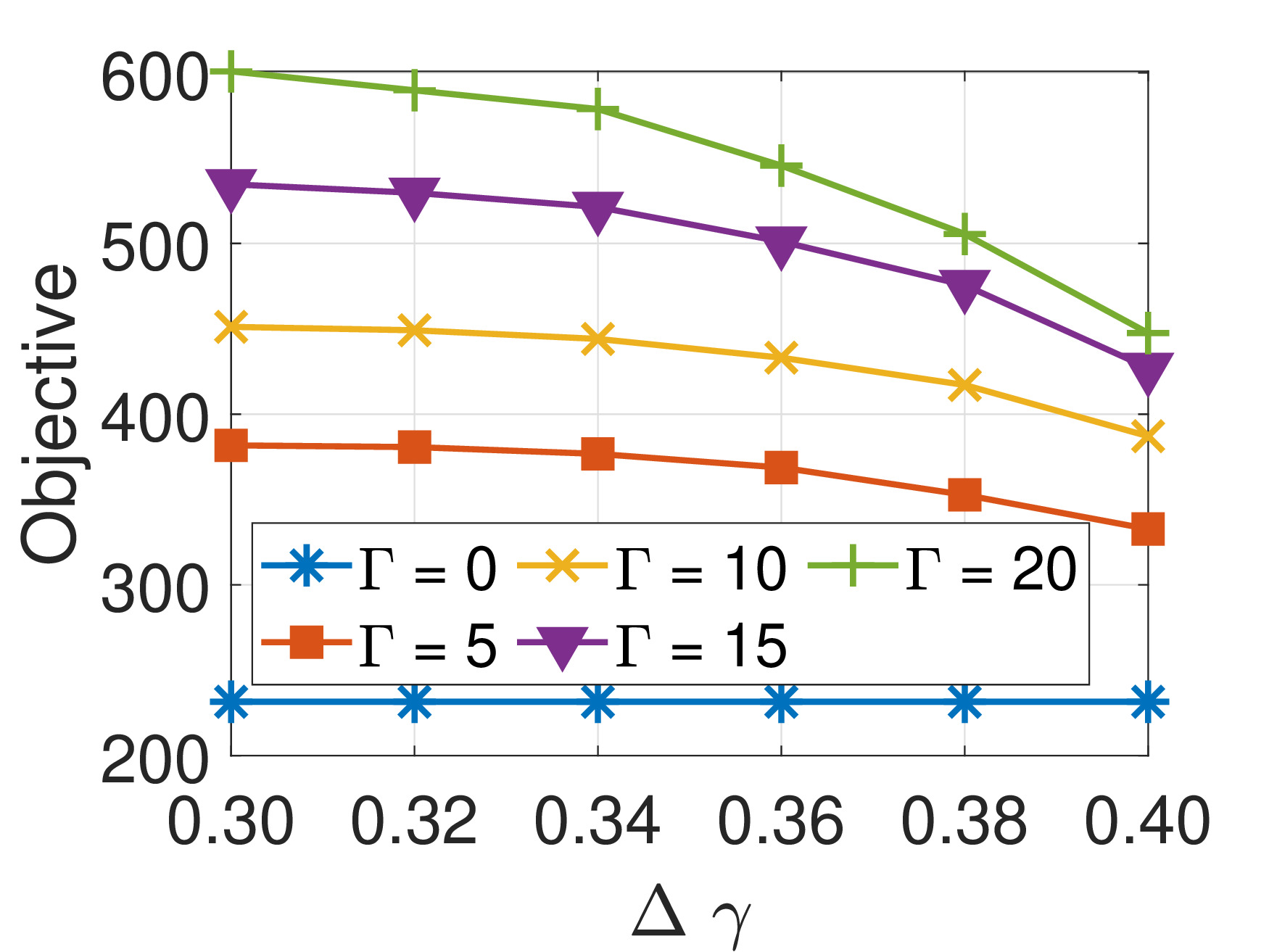}
	     \label{fig:Gamma_Deltagamma_cost}
	}    \hspace*{-1.9em} 
 	    \subfigure[Payment, varying $\Gamma$ and $\Delta \gamma$]{	 
      \includegraphics[width=0.242\textwidth,height=0.10\textheight]{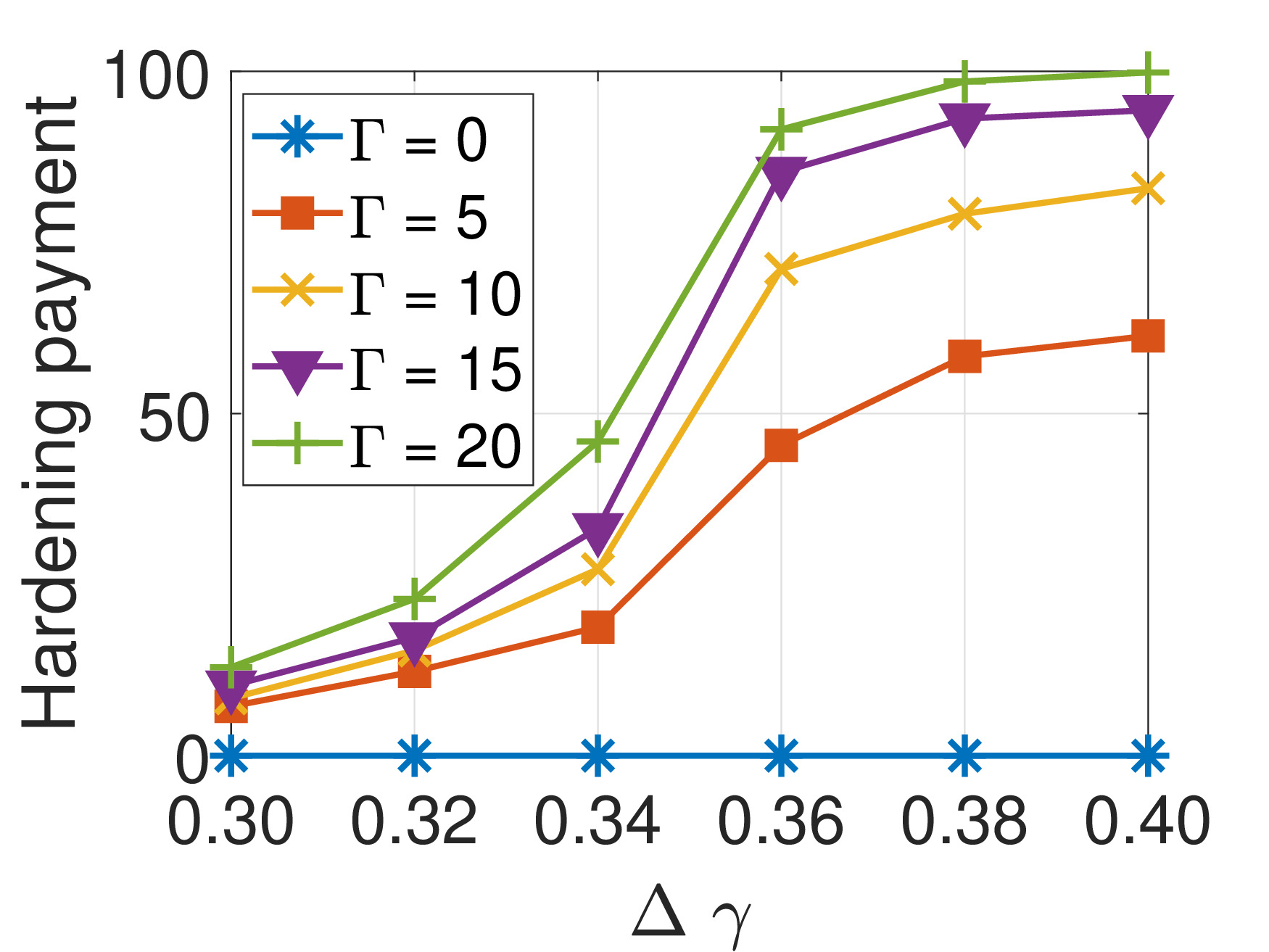}
	     \label{fig:Gamma_Deltagamma_payment}
	}   
	\vspace{-0.2cm}
	    \caption{Impacts  of the uncertainty set}
	    \vspace{-0.4cm}
\end{figure}

\subsubsection{Impact of network size}
Figs.~\ref{fig:ENs}-\ref{fig:APs} illustrate how the system size affects the optimal solution.  As presented in Fig. \ref{fig:ENs}, given a fixed budget, the total cost decreases when the number of ENs increases because 
the availability of more edge resources enables the platform to efficiently distribute the workload, ensuring that user requests are served by ENs located closer to them. This improved proximity helps to reduce delay and unmet demand penalties, thus enhancing service quality. In addition, the cost decreases rapidly at first, then more gradually later. This implies that the platform does not require too many ENs to maintain service quality, as unmet demands and delays decrease with the availability of more edge resources. On the other hand, Fig. \ref{fig:APs} reveals that, for the same set of ENs, the total cost increases as the number of areas $I$ increases, which translates to a growth in resource demand. This increase in demand generally leads to  higher delays and more unmet demand. This situation may reduce the platform's flexibility in resource allocation, potentially leading to unmet demand penalties from delay and resource capacity constraint violations. In contrast, increasing ENs with a fixed number of APs reduces backhaul pressure and provides more flexibility in allocating services to nodes, lowering total cost.
\begin{figure}[h!]
\vspace{-0.2cm}
		 \subfigure[$I$ = 20, varying $J$]{
	     \includegraphics[width=0.245\textwidth,height=0.10\textheight]{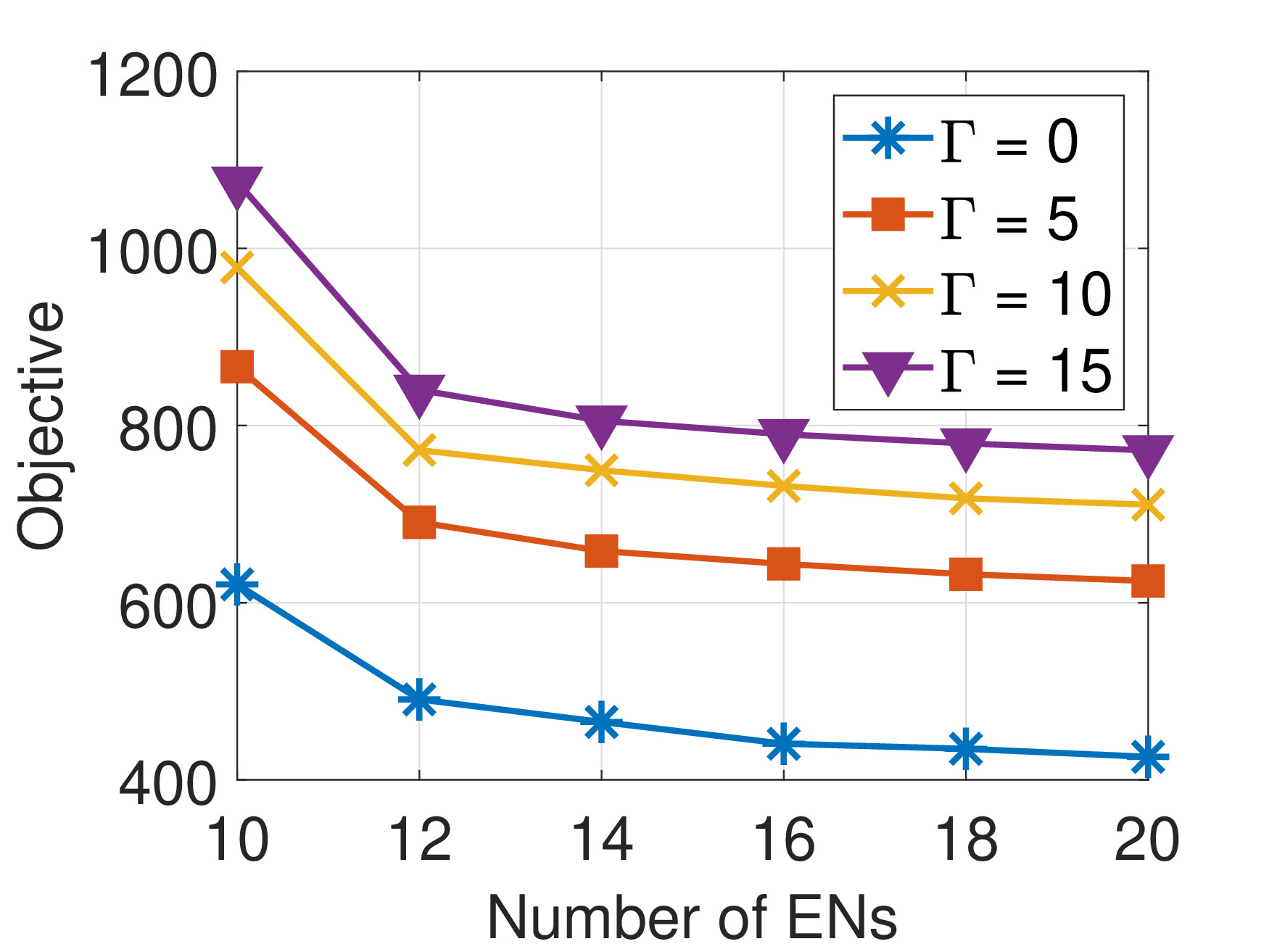}
	     \label{fig:ENs}
	}   \hspace*{-2.1em} 
	    \subfigure[$J$ = 10, varying $I$]{
	     \includegraphics[width=0.245\textwidth,height=0.10\textheight]{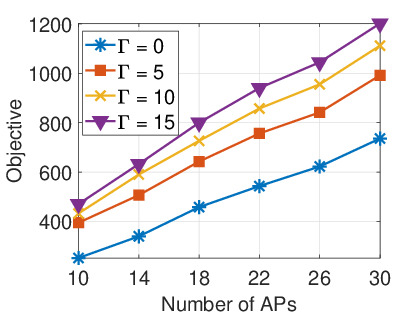}
	     \label{fig:APs}
	} 
	\vspace{-0.2cm}
	    \caption{Impacts of the system size}
	    \vspace{-0.2cm}
\end{figure}

\subsubsection{Running time comparison between \textit{RDDU} and \textit{e-RDDU}} 
In the simulation, we set the optimality gap to   $0.2 \%$ and then compute the average running time over $100$ randomly generated problem instances for each problem size. Table \ref{tab: time_table} provides a comparison of the computational time taken by \textit{RDDU} and \textit{e-RDDU} for various problem sizes and levels of uncertainty. For smaller and medium-sized networks, both algorithms are capable of producing optimal solutions within a reasonable time. However, as the system size grows larger, the advantage of \textit{e-RDDU} over \textit{RDDU} becomes more pronounced due to the former's ability to generate fewer constraints when linearizing bilinear terms. It is worth emphasizing that \textit{the underlying problem is a robust planning problem that does not necessitate real-time computation}.

\begin{table}[t!]
\centering
\begin{tabular}{|l|l|l|l|}
\hline
\textbf{Network size}           & $\Gamma$ & \textit{RDDU}  (seconds) & \textit{e-RDDU} (seconds)   \\ \hline
\multirow{3}{*}{I = 10; J = 10} & 5     & 6.51 & 5.08    \\ \cline{2-4}
                                & 10    & 9.49  & 8.23   \\ \cline{2-4} 
                                & 15    & 10.91  & 8.35  \\ \hline
\multirow{3}{*}{I = 20; J = 10} & 5     & 32.71 & 24.84  \\ \cline{2-4}
                                & 10    & 34.70 & 32.55  \\ \cline{2-4} 
                                & 15    & 34.52 & 32.75   \\ \hline
\multirow{3}{*}{I = 30; J = 10} & 5     &  476.83  & 254.38 \\ \cline{2-4}
                                & 10    &  873.14   & 525.61  \\ \cline{2-4}
                                & 15    &  1152.22 & 760.06  \\ \hline
\multirow{3}{*}{I = 20; J = 20} & 5     &  556.31 & 422.79  \\ \cline{2-4}
                                & 10    & 1997.25 & 976.34  \\ \cline{2-4}
                                & 15    & 2205.8 & 1458.42      \\ \hline
\multirow{3}{*}{I = 30; J = 20} & 5     & 907.69 & 525.21  \\ \cline{2-4}
                                & 10    & 1243.51 &  875.43   \\ \cline{2-4}
                                & 15    & 2779.62 & 1283.71  \\ \hline
\multirow{3}{*}{I = 50; J = 20} & 5     & 3593.81 & 1260.50  \\ \cline{2-4}
                                & 10    & 4838.93  & 1908.43  \\ \cline{2-4}
                                & 15    & 11516.25 & 3579.12 \\ \hline
\end{tabular}
\caption{\revtwo{Runtime comparison}}
\label{tab: time_table}
\vspace{-0.4cm}
\end{table}

\subsection{Performance Comparison}
\label{percom}
In this section, we compare the performance of the proposed \textit{RDDU} model with the following benchmarks:

\begin{itemize}[leftmargin=0pt]
    \item[] {\scriptsize $\bullet$} \textit{NH}: There is \textit{no  hardening} at optimality in the deterministic model presented in Section \ref{deterministic}. The platform minimizes only the workload allocation cost in \eqref{eq-NOobj}. 
    \item[] {\scriptsize $\bullet$}  \textit{RAND}: We consider a \textit{randomized} link hardening scheme where the platform randomly selects a subset of links for hardening until exhausting the budget $B$. Specifically, the platform has no preferences on which link and which hardening level to be chosen. Thus, it can successively select links for hardening at random until exhausting the budget. Note that there is at most one hardening level chosen for each link. Thus, if a link has been selected at an iteration, it will be removed from the set of available links for hardening in future iterations. 
    \item[] {\scriptsize $\bullet$} \textit{SDDU}: The platform optimizes the link hardening decision by solving a stochastic model under DDU, where the uncertainty link delay is modeled through a set of decision-dependent scenarios. The SDDU model is presented in \textit{Appendix \ref{so_ddu}}. 
\end{itemize}

While the goal of the platform is to find the optimal link-hardening decision ($\bt$) before the disclosure of uncertainty,  it is not appropriate to evaluate the four schemes based on their planning costs. For instance,  the deterministic model (i.e., \textit{NH}) produces the lowest cost since it does not implement any link hardening. Therefore,  we compare the actual performance of the proposed scheme and the benchmarks instead.

Specifically, each scheme provides a hardening solution $\bt$ in the planning stage. Given the hardening decision and the actual realization of the uncertainties, the platform can re-optimize the workload allocation decision to minimize the cost. The main challenge in evaluating the benchmark schemes lies in the fact that they do not consider the interdependence of the workload allocation decision and the realized link delay. In particular, the link delay parameters are needed to optimize the workload allocation while we only know the actual link delay after making the workload allocation decisions. 

In the following, we describe how to compute the actual cost for each scheme. First, let $\Tilde{\bt}$ denote the link-hardening decision produced by each scheme, which is then used to generate the actual link delays. Subsequently, for each scheme, we obtain the corresponding uncertainty set based on their planning decision $\Tilde{\bt}$ as follows:
\begin{align}
\label{delay_gen}
&\mathcal{D}_2^a (\hat{d},\bar{d},t,x,\Gamma_2) \! := \! \bigg\{ d_{i,j}^a : ~  0 \leq g_{i,j} \leq 1 - \sum_{r} \gamma_{i,j}^{r} \Tilde{t}_{i,j}^{r}, \nonumber \\
&d_{i,j}^a \!=\! \bar{d}_{i,j} \!+\! \hat{d}_{i,j} ( g_{i,j} + u_{i,j} x_{i,j}), \forall i,j;~\!\sum_{i,j} g_{i,j} \leq \Gamma_2  
    \bigg\}.
\end{align}

In our experimental setup,   we generate  $1000$ scenarios for each scheme to model the actual  link delays ($d_{i,j}^a$) by randomly generating $g_{i,j}$  satisfying the conditions in \eqref{delay_gen}. Note that for each scenario, the delay $d_{i,j}^a$ is a function of $x_{i,j}$. 
The platform then solves the following actual workload allocation problem:
\vspace{-0.2cm}
\begin{subequations}
\label{Actual_model}
\begin{align}
    \min_{\bw \in \mathbb{Z}_{+}^{I}, \bx \in \mathbb{Z}_{+}^{I \times J}}  \quad & \rho \sum_{i,j} d_{i,j}^a x_{i,j} + \sum_{i} s_i w_i\\
      \text{s.t.} \quad \quad \quad & \eqref{resource_cap}-\eqref{QoS} \\
    & d_{i,j}^a \!=\! \bar{d}_{i,j} \!+\! \hat{d}_{i,j} ( g_{i,j} + u_{i,j} x_{i,j}), \forall i,j \\
    &  \sum_{j} d_{i,j}^a \frac{x_{i,j}}{\lambda_i} \leq \Delta_i, \forall i.
\end{align} 
\end{subequations}
We consider two cases in our analysis: 
\begin{itemize}
    \item \textbf{Case 1}: the workload does not affect the link delays (i.e., $u_{i,j} = 0,~ \forall i, j$). Then, \eqref{Actual_model} becomes an ILP. 
    \item \textbf{Case 2}: the workload affects the link delays, resulting in bilinear terms $x_{i,j} x_{i,j}$ in problem \eqref{Actual_model}. We can employ the linearization techniques in \eqref{xxlin}-\eqref{RO_DDU_McCormick2} to linearize these bilinear terms. Thus, the actual workload allocation problem \eqref{Actual_model}  can be rewritten as an ILP. 
\end{itemize}

The actual total cost is the sum of the link hardening cost and actual workload allocation cost, expressed as:
\begin{align}
\label{Actual_cost}
    \mathcal{C}^{\sf a} = \sum_{i,j,r} h_{i,j}^{r} \Tilde{t}_{i,j}^{r}  + \rho \sum_{i,j} d_{i,j}^a x_{i,j}^{\sf a} + \sum_{i} s_i w_{i}^{\sf a},
\end{align}
where ($x^a, w^a$) is the optimal solution to problem \eqref{Actual_model}. 

\begin{figure}[t!]
\centering
		 \subfigure[Case 1]{
	     \includegraphics[width=0.245\textwidth,height=0.10\textheight]{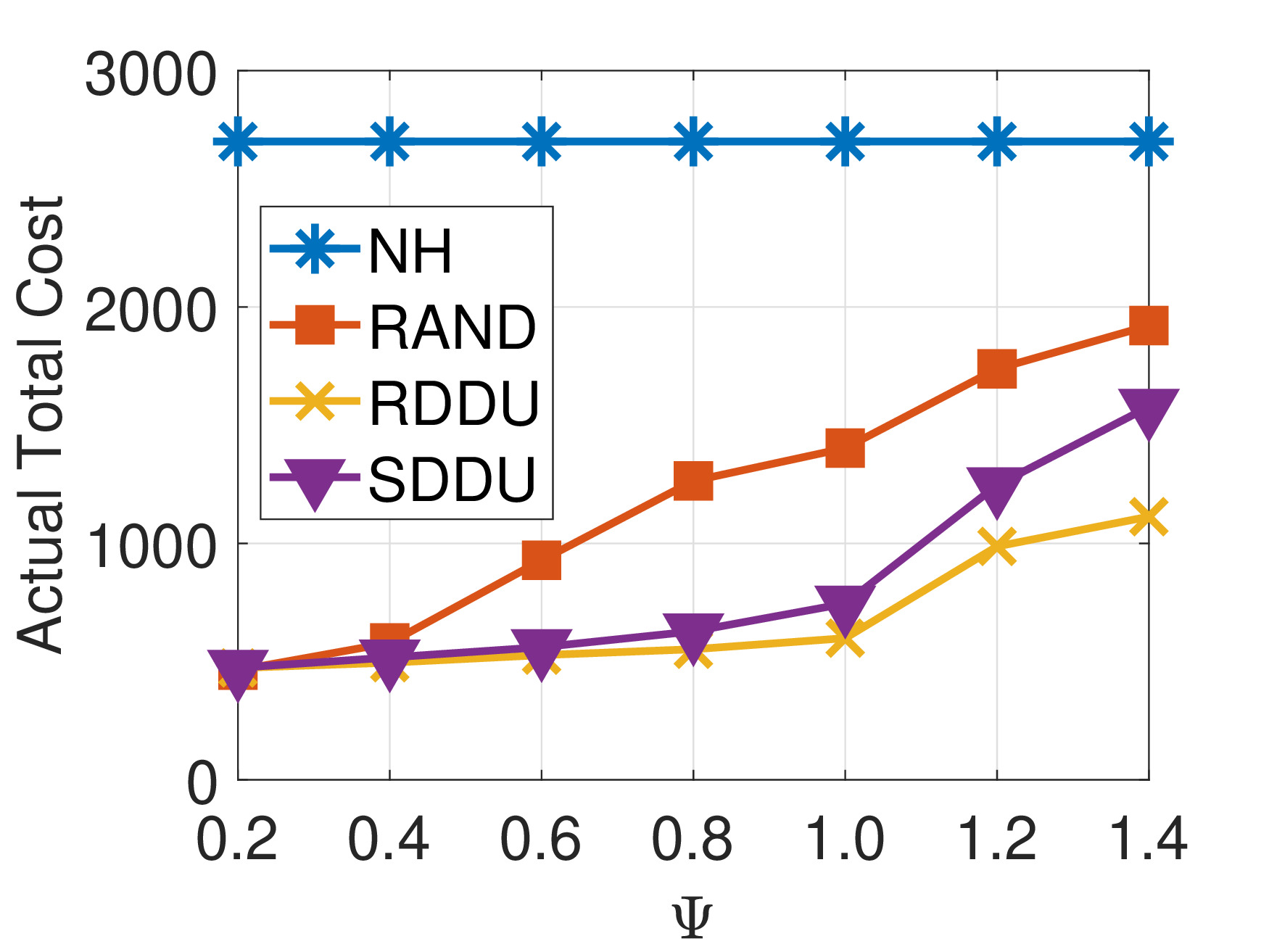}
	     \label{fig:psi_model_case1_avg}
	}   \hspace*{-2.1em}
		\subfigure[Case 2]{
	     \includegraphics[width=0.245\textwidth,height=0.10\textheight]{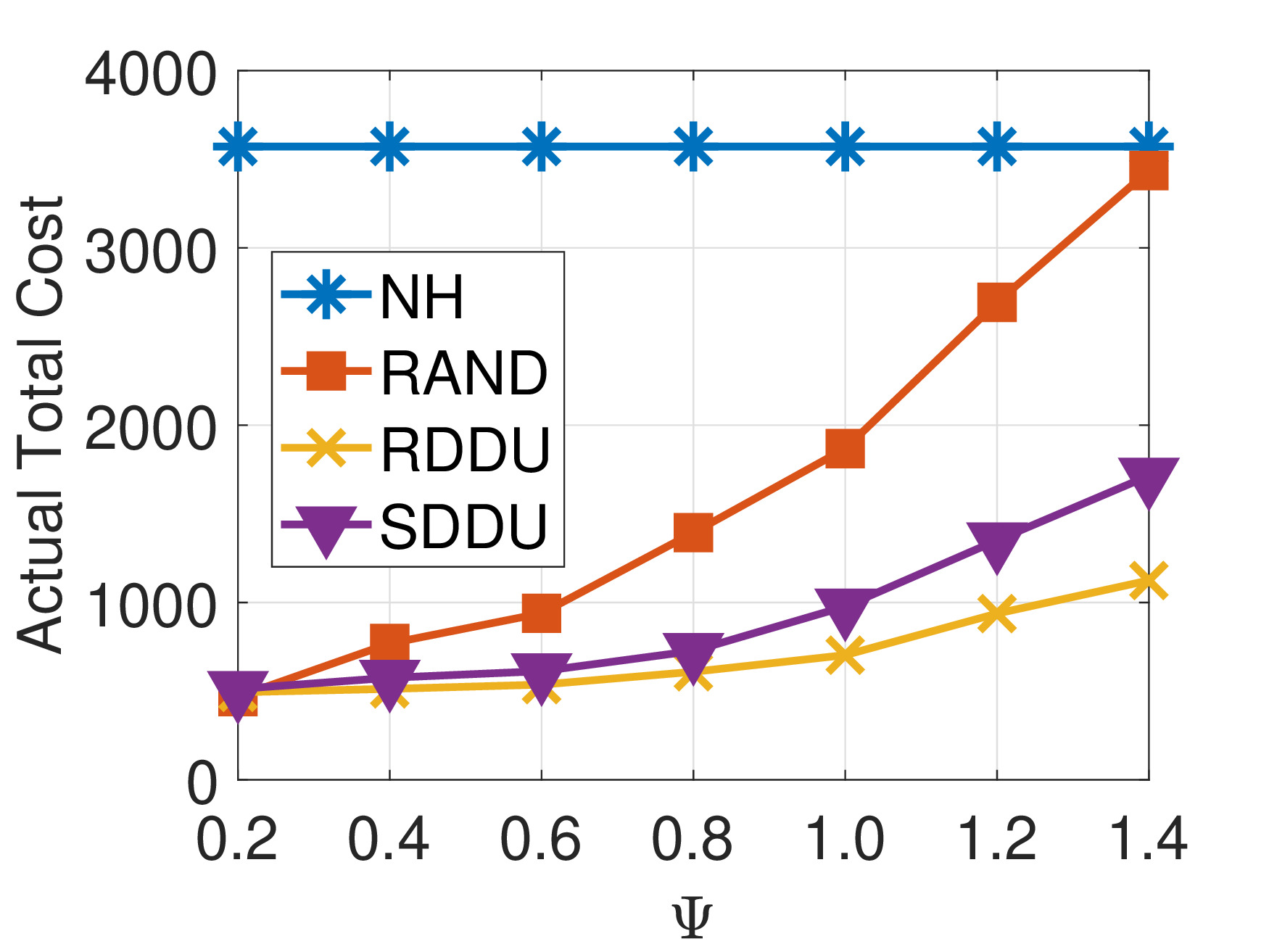}
	     \label{fig:psi_model_case2_avg}
	}   
    	\subfigure[Case 1]{
	    \includegraphics[width=0.245\textwidth,height=0.10\textheight]{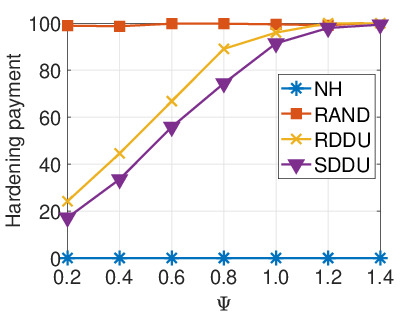}
        \label{fig:model_payment_case1}
    }    \hspace*{-2.1em} 
	    \subfigure[Case 2]{
	     \includegraphics[width=0.245\textwidth,height=0.10\textheight]{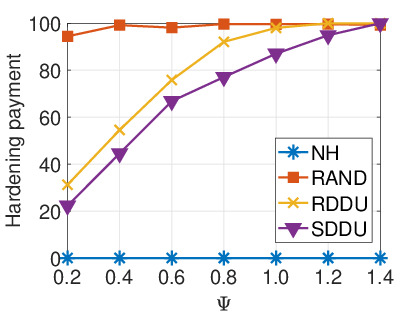}
	     \label{fig:model_payment_case2}
	}
	    
     \vspace{-0.2cm}
     \caption{Performance comparison, varying $\Psi$}
	    \vspace{-0.4cm}
\end{figure}

The evaluation and comparison of the four schemes are conducted based on their average actual costs over the generated scenarios.
As shown in
Figs.~\ref{fig:psi_model_case1_avg}--\ref{fig:psi_model_case2_avg}, the proposed \textit{RDDU} scheme  outperforms the other schemes significantly in terms of the actual total cost. Both $RDDU$ and $SDDU$ take into account the dependence of  uncertainty on the decisions when making the planning decisions, giving them an advantage over the other two schemes. 
However, SDDU's costs are higher than RDDU's since SDDU only considers a set of scenarios within the uncertainty set $\mathcal{D}_2^a$ while $RDDU$ is robust against the whole set. Furthermore, the performance of $SDDU$ depends on the knowledge of the distribution of the historical data.
The cost of the $NH$ scheme remains the same as $\Psi$ increases since \textit{NH} does not implement any hardening, regardless of the value of $\Psi$.  Figs.~\ref{fig:model_payment_case1}--\ref{fig:model_payment_case2} demonstrate that the hardening cost for $NH$ is consistently zero. The $RAND$ scheme, as defined,  always tends to exhaust the budget. Additionally, $RDDU$ suggests greater investment in link hardening than $SDDU$.

Figs.~\ref{fig:Varying_Delta}--\ref{fig:Varying_rho} 
compare the four schemes when varying the maximum delay threshold $\Delta$ and the delay penalty parameter $\rho$. It can be observed that for all  schemes, the total cost decreases as $\Delta$ increases (i.e., indicating greater relaxation in the delay requirement). On the other hand, as expected, the total cost increases with an increase in the delay penalty  $\rho$. Furthermore, the results demonstrate the superior performance of the proposed $RDDU$ scheme compared to the other schemes. 

\begin{figure}[t!]
		 \subfigure[Varying $\Delta$]{
	     \includegraphics[width=0.242\textwidth,height=0.10\textheight]{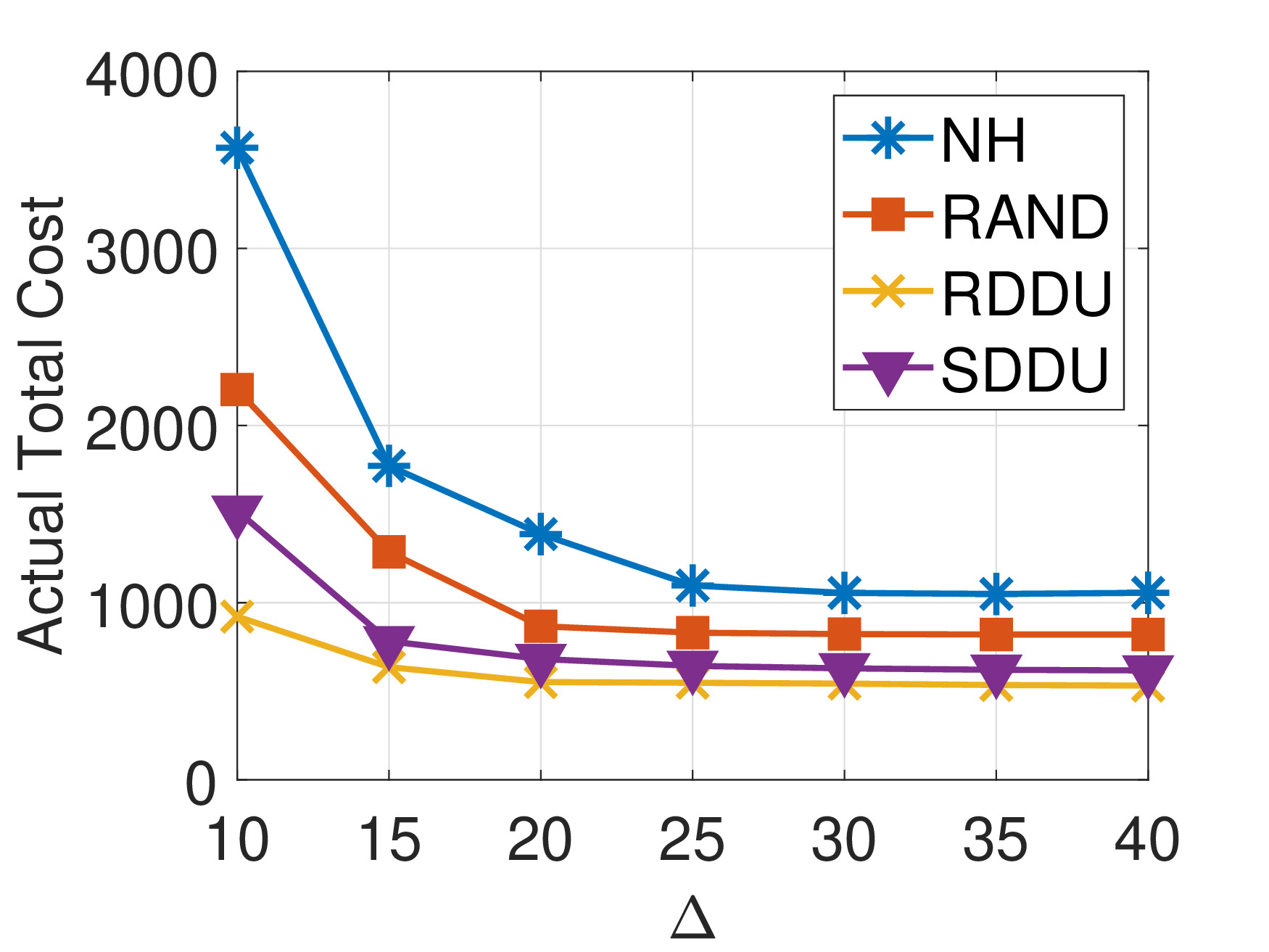}
	     \label{fig:Varying_Delta}
	}  \hspace*{-2.1em} 
	     \subfigure[Varying $\rho$]{
	     \includegraphics[width=0.242\textwidth,height=0.10\textheight]{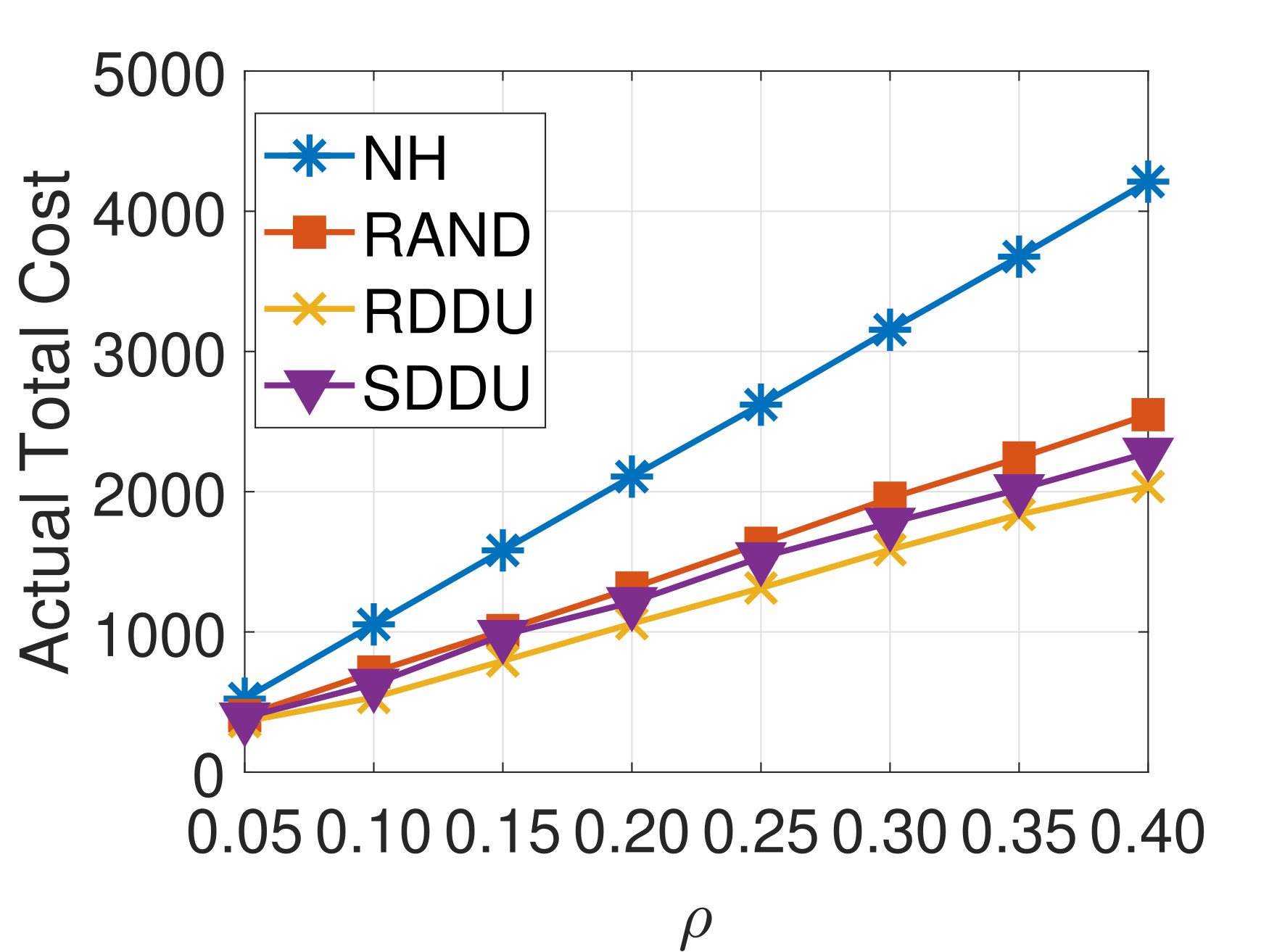}
	     \label{fig:Varying_rho}
	}
	\vspace{-0.2cm}
	    \caption{Performance comparison, varying $\Delta$ and $\rho$}
	    \vspace{-0.4cm}
\end{figure}

\section{Related Work}
\label{related_work}
\revtwo{This section provides an overview of existing research in edge computing, focusing on service placement, workload allocation under uncertainty, and network reliability, highlighting how our work extends and differs from previous studies.}

\textbf{\textit{Edge service placement and workload allocation:}} Extensive research has recently focused on optimizing edge service placement and workload allocation.
Pasteris \textit{et al.} \cite{He19} studied multi-service placement in EC to maximize rewards from serving user requests. Reference \cite{poularakis2019} jointly optimized service placement and request routing under resource capacity constraints. A reliability-aware dynamic service placement model based on the Markov decision process was proposed in \cite{Niyato20}  to minimize operational costs while maximizing accepted requests. The workload allocation problem was formulated as a MILP in \cite{Kherraf19}, aiming to maximize request acceptance while meeting strict QoS requirements.
References \cite{Duong_TON,Duong_TCC} proposed market equilibrium methods for fair and efficient edge resource allocation, and \cite{nguyen2023optimal} explored the data center operations considering renewable energy and batteries. However, \textit{most of these studies rely on deterministic optimization models}, making their performance heavily dependent on prediction accuracy, highlighting the need to incorporate uncertainty in EC systems.

\textbf{\textit{Edge resource management under uncertainties:}} Optimizing edge networks under uncertainty has garnered considerable attention in recent research. Reference \cite{niyato12} used stochastic programming to minimize edge resource provisioning costs amid demand and price uncertainties.  Ouyang \textit{et al.} \cite{ouyang19} introduced a bandit-based service placement scheme to jointly minimize transmission delay and service migration costs in an online setting. In \cite{wang22}, the authors studied a bandit-based dynamic service placement problem in a decentralized manner, aiming to maximize total reward while satisfying delay requirements. \cite{cheng23} proposed a multi-armed bandit approach for online pricing in heterogeneous edge resource allocation. A two-stage adaptive robust model was proposed in \cite{Duong_iot} to optimize edge service placement and sizing decisions under demand uncertainty. \textit{Unlike these works, our focus is on link delay uncertainty and how to effectively strengthen network links to proactively manage this uncertainty and ensure high service quality. This problem has been largely overlooked in existing EC literature.}

\textbf{\textit{Edge reliability and resiliency:}} There is growing interest in studying reliability and resiliency in networking and cloud/edge computing. In \cite{Modiano14,fhe22}, the authors proposed proactive network resource backup solutions to mitigate the impact of random network failures. Ito \textit{et al.} \cite{mito21} introduced a probabilistic protection model to minimize the backup capacity required by cloud providers in the event of random physical machine failures. In \cite{yhir20}, a backup network design scheme was presented to minimize the total required backup capacity with prior knowledge of the probabilistic distribution of required capacity. A resiliency-aware optimal edge service placement framework was proposed in \cite{Jiaming_TCC} to minimize operational costs under demand uncertainty and random node failures. \textit{Unlike these studies that primarily focus on protection against random failures, our research emphasizes proactive network link hardening to manage uncertainty and guarantee high service quality in EC.} While \cite{cheng2024two} proposed a decision-dependent distributionally RO model to capture the interdependence between EN placement decisions and uncertain demands, \textit{it overlooks uncertain network delays and their dependencies on link utilization.}

To the best of our knowledge, \textit{we are the first to examine link delay uncertainty in EC and incorporate it into a novel robust budget-constrained link hardening model}. In contrast to existing robust models that treat uncertainties as exogenous, our approach introduces an endogenous uncertainty set to accurately capture the interdependence between network link delay, link hardening, and workload allocation decisions. By leveraging this endogenous uncertainty set, our model empowers the platform to make decisions that optimize system performance while proactively controlling uncertainty levels. We also developed efficient algorithms for the formulated robust link hardening problem under endogenous delay uncertainty.

\section{Conclusion}
\label{conc}
This paper introduced a novel framework for robust link-hardening in EC to mitigate the impact of delay uncertainty on service quality. The main novelty lies in the integration of a DDU set, which effectively captures the interdependence between uncertain link delays and link hardening and workload allocation decisions. Furthermore, we devised two exact algorithms, \textit{RDDU} and \textit{e-RDDU}, to efficiently compute optimal solutions. The enhanced robust solution, \textit{e-RDDU}, significantly improves computational speed by eliminating redundant constraints without sacrificing solution optimality. Extensive numerical results underscore the advantages of accounting for endogenous uncertainties and demonstrate the superiority of the proposed model over benchmark schemes. The proposed model emphasizes the significance of considering DDU when making robust link-hardening decisions. \revtwo{In future work, we aim to leverage data-driven approaches, such as machine learning and deep learning techniques, to design fast heuristic algorithms that accelerate the computation of robust solutions for the proposed model. Our current methods, which generate exact solutions, can provide high-quality datasets for training, thereby enhancing predictive accuracy.}

\bibliographystyle{IEEEtran}
\bibliography{ref.bib}

\begin{thebibliography}{10}
\providecommand{\url}[1]{#1}
\csname url@samestyle\endcsname
\providecommand{\newblock}{\relax}
\providecommand{\bibinfo}[2]{#2}
\providecommand{\BIBentrySTDinterwordspacing}{\spaceskip=0pt\relax}
\providecommand{\BIBentryALTinterwordstretchfactor}{4}
\providecommand{\BIBentryALTinterwordspacing}{\spaceskip=\fontdimen2\font plus
\BIBentryALTinterwordstretchfactor\fontdimen3\font minus \fontdimen4\font\relax}
\providecommand{\BIBforeignlanguage}[2]{{%
\expandafter\ifx\csname l@#1\endcsname\relax
\typeout{** WARNING: IEEEtran.bst: No hyphenation pattern has been}%
\typeout{** loaded for the language `#1'. Using the pattern for}%
\typeout{** the default language instead.}%
\else
\language=\csname l@#1\endcsname
\fi
#2}}
\providecommand{\BIBdecl}{\relax}
\BIBdecl

\bibitem{wshi16}
W.~Shi, J.~Cao, Q.~Zhang, Y.~Li, and L.~Xu, ``Edge computing: Vision and challenges,'' \emph{IEEE Internet Things J.}, vol.~3, no.~5, pp. 637--646, 2016.

\bibitem{al2009comparative}
M.~Al-Kuwaiti, N.~Kyriakopoulos, and S.~Hussein, ``A comparative analysis of network dependability, fault-tolerance, reliability, security, and survivability,'' \emph{IEEE Commun. Surv. Tutor.}, vol.~11, no.~2, pp. 106--124, 2009.

\bibitem{Modiano14}
M.~Johnston, H.-W. Lee, and E.~Modiano, ``A robust optimization approach to backup network design with random failures,'' \emph{IEEE/ACM Trans. Netw}, vol.~23, no.~4, pp. 1216--1228, 2014.

\bibitem{fhe_tcc21}
F.~He, T.~Sato, B.~C. Chatterjee, T.~Kurimoto, U.~Shigeo, and E.~Oki, ``Robust optimization model for primary and backup resource allocation in cloud providers,'' \emph{IEEE Trans. Cloud Comput.}, pp. 1--1, 2021.

\bibitem{failover16_ICC}
Y.-D. Lin, H.-Y. Teng, C.-R. Hsu, C.-C. Liao, and Y.-C. Lai, ``Fast failover and switchover for link failures and congestion in software defined networks,'' in \emph{Proc. IEEE ICC}, 2016, pp. 1--6.

\bibitem{hsiao2001load}
P.-H. Hsiao, A.~Hwang, H.~Kung, and D.~Vlah, ``Load-balancing routing for wireless access networks,'' in \emph{Proc. IEEE INFOCOM}, vol.~2, 2001, pp. 986--995.

\bibitem{yu2017handling}
Q.~Yu, X.~Zhao, H.~Wan, Y.~Gao, C.~Lu, and M.~Gu, ``Handling scheduling uncertainties through traffic shaping in time-triggered train networks,'' in \emph{Proc. ACM/IEEE IWQoS}, 2017, pp. 1--6.

\bibitem{jamming07}
M.~Li, I.~Koutsopoulos, and R.~Poovendran, ``Optimal jamming attacks and network defense policies in wireless sensor networks,'' in \emph{Proc. IEEE INFOCOM}, 2007, pp. 1307--1315.

\bibitem{durkota2015optimal}
K.~Durkota, V.~Lisy, B.~Bo{\v{s}}ansky, and C.~Kiekintveld, ``Optimal network security hardening using attack graph games,'' in \emph{Proc. ACM IJCAI}, 2015, pp. 7--14.

\bibitem{dewri2007optimal}
R.~Dewri, N.~Poolsappasit, I.~Ray, and D.~Whitley, ``Optimal security hardening using multi-objective optimization on attack tree models of networks,'' in \emph{Proc. ACM CCS}, 2007, pp. 204--213.

\bibitem{US_harden2}
``Hardening network devices,'' https://media.defense.gov/2020/Aug/18/20 02479461/1/1/0/HARDENING-NETWORK-DEVICES.pdf, Access Jan 2024.

\bibitem{US_harden1}
``{US} national security agency,'' https://media.defense.gov/CSI-MITIGATING-CLOUD-VULNERABILITIES-20200121.pdf, Access Jan 2024.

\bibitem{Forbes}
``Fortifying the future: The power of network security,'' https://www.forbes.com/sites/microsoft/2023/06/07/fortifying-the-future-the-power-of-network-security/?sh=3a8ac4f05686, Access Jan 2024.

\bibitem{CISA}
``Securing cyber assets: Addressing urgent cyber threats to critical infrastructure,'' https://www.cisa.gov/sites, Access Jan 2024.

\bibitem{RObook}
A.~Ben-Tal, L.~El~Ghaoui, and A.~Nemirovski, \emph{Robust optimization}.\hskip 1em plus 0.5em minus 0.4em\relax Princeton university press, 2009, vol.~28.

\bibitem{CCGARO}
B.~Zeng and L.~Zhao, ``Solving two-stage robust optimization problems using a column-and-constraint generation method,'' \emph{Operations Research L.}, vol.~41, no.~5, pp. 457--461, 2013.

\bibitem{wang20}
Z.~Wang and M.~Qi, ``Robust service network design under demand uncertainty,'' \emph{Trans. Sci.}, vol.~54, no.~3, pp. 676--689, 2020.

\bibitem{ChenDelay2023}
M.~Chen, X.~Gong, and Y.~Cao, ``Delay-optimal distributed edge computation offloading with correlated computation and communication workloads,'' \emph{IEEE Trans. Mobile Comput.}, vol.~22, no.~10, pp. 5846--5857, 2023.

\bibitem{DDU}
O.~Nohadani and K.~Sharma, ``Optimization under decision-dependent uncertainty,'' \emph{SIAM J. on Optimization}, vol.~28, no.~2, pp. 1773--1795, 2018.

\bibitem{LPbook}
D.~Bertsimas and J.~N. Tsitsiklis, \emph{Introduction to linear optimization}.\hskip 1em plus 0.5em minus 0.4em\relax Athena Scientific Belmont, MA, 1997, vol.~6.

\bibitem{McCormick76}
G.~P. McCormick, ``Computability of global solutions to factorable nonconvex programs: Part i—convex underestimating problems,'' \emph{Math. Program.}, vol.~10, no.~1, pp. 147--175, 1976.

\bibitem{conforti2014integer}
M.~Conforti, G.~Cornu{\'e}jols, G.~Zambelli, M.~Conforti, G.~Cornu{\'e}jols, and G.~Zambelli, \emph{Integer programming}.\hskip 1em plus 0.5em minus 0.4em\relax Springer, 2014.

\bibitem{Hotedge}
O.~Kolosov, G.~Yadgar, S.~Maheshwari, and E.~Soljanin, ``Benchmarking in the dark: On the absence of comprehensive edge datasets,'' in \emph{USENIX HotEdge}, 2020.

\bibitem{He19}
S.~Pasteris, S.~Wang, M.~Herbster, and T.~He, ``Service placement with provable guarantees in heterogeneous edge computing systems,'' in \emph{Proc. IEEE INFOCOM}, 2019.

\bibitem{poularakis2019}
K.~Poularakis, J.~Llorca, A.~M. Tulino, I.~Taylor, and L.~Tassiulas, ``Joint service placement and request routing in multi-cell mobile edge computing networks,'' in \emph{Proc. IEEE INFOCOM}, 2019.

\bibitem{Niyato20}
M.~Karimzadeh-Farshbafan, V.~Shah-Mansouri, and D.~Niyato, ``A dynamic reliability-aware service placement for network function virtualization ({NFV}),'' \emph{IEEE J. Sel. Areas Commun.}, vol.~38, no.~2, pp. 318--333, 2020.

\bibitem{Kherraf19}
N.~Kherraf, S.~Sharafeddine, C.~M. Assi, and A.~Ghrayeb, ``Latency and reliability-aware workload assignment in {I}o{T} networks with mobile edge clouds,'' \emph{IEEE Trans. Netw. Ser. Manag.}, vol.~16, no.~4, pp. 1435--1449, 2019.

\bibitem{Duong_TON}
D.~T. Nguyen, L.~B. Le, and V.~K. Bhargava, ``A market-based framework for multi-resource allocation in fog computing,'' \emph{IEEE/ACM Trans. Netw.}, vol.~27, no.~3, pp. 1151--1164, 2019.

\bibitem{Duong_TCC}
D.~T. Nguyen, L.~B. Le, and V.~Bhargava, ``Price-based resource allocation for edge computing: A market equilibrium approach,'' \emph{IEEE Trans. Cloud Comput.}, vol.~9, no.~1, pp. 302--317, 2021.

\bibitem{nguyen2023optimal}
D.~T.~A. Nguyen, J.~Cheng, N.~Trieu, and D.~T. Nguyen, ``Optimal workload allocation for distributed edge clouds with renewable energy and battery storage,'' in \emph{Proc. IEEE ICNC}, 2024.

\bibitem{niyato12}
S.~Chaisiri, B.-S. Lee, and D.~Niyato, ``Optimization of resource provisioning cost in cloud computing,'' \emph{IEEE Trans. Serv. Comput}, vol.~5, no.~2, pp. 164--177, 2011.

\bibitem{ouyang19}
T.~Ouyang, R.~Li, X.~Chen, Z.~Zhou, and X.~Tang, ``Adaptive user-managed service placement for mobile edge computing: An online learning approach,'' in \emph{Proc. IEEE INFOCOM}, 2019, pp. 1468--1476.

\bibitem{wang22}
X.~Wang, J.~Ye, and J.~C. Lui, ``Decentralized task offloading in edge computing: A multi-user multi-armed bandit approach,'' in \emph{Proc. IEEE INFOCOM}, 2022, pp. 1199--1208.

\bibitem{cheng23}
J.~Cheng, D.~T.~A. Nguyen, L.~Wang, D.~T. Nguyen, and V.~K. Bhargava, ``A bandit approach to online pricing for heterogeneous edge resource allocation,'' in \emph{Proc. IEEE NetSoft}, 2023, pp. 277--281.

\bibitem{Duong_iot}
D.~T. Nguyen, H.~T. Nguyen, N.~Trieu, and V.~K. Bhargava, ``Two-stage robust edge service placement and sizing under demand uncertainty,'' \emph{IEEE Internet Things J.}, vol.~9, no.~2, pp. 1560--1574, 2022.

\bibitem{fhe22}
F.~He and E.~Oki, ``Backup allocation model with probabilistic protection for virtual networks against multiple facility node failures,'' \emph{IEEE Trans. Netw. Serv. Manag.}, vol.~18, no.~3, pp. 2943--2959, 2021.

\bibitem{mito21}
M.~Ito, F.~He, and E.~Oki, ``Robust optimization model for probabilistic protection with multiple types of resources,'' \emph{IEEE Trans. Netw. Serv. Manag.}, vol.~18, no.~4, pp. 4711--4729, 2021.

\bibitem{yhir20}
Y.~Hirano, F.~He, T.~Sato, and E.~Oki, ``Backup network design against multiple link failures to avoid link capacity overestimation,'' \emph{IEEE Trans. Netw. Serv. Manag.}, vol.~17, no.~2, pp. 1254--1267, 2020.

\bibitem{Jiaming_TCC}
J.~Cheng, D.~T. Nguyen, and V.~K. Bhargava, ``Resilient edge service placement under demand and node failure uncertainties,'' \emph{IEEE Trans. Netw. Serv. Manag.}, vol.~21, no.~1, pp. 558--573, 2024.

\bibitem{cheng2024two}
J.~Cheng, D.~T.~A. Nguyen, and D.~T. Nguyen, ``Two-stage distributionally robust edge node placement under endogenous demand uncertainty,'' \emph{Proc. IEEE INFOCOM}, 2024.

\bibitem{kolodziej2013global}
S.~Kolodziej, P.~M. Castro, and I.~E. Grossmann, ``Global optimization of bilinear programs with a multiparametric disaggregation technique,'' \emph{Journal of Global Optimization}, vol.~57, pp. 1039--1063, 2013.

\bibitem{BigM}
J.~Fortuny-Amat and B.~McCarl, ``A representation and economic interpretation of a two-level programming problem,'' \emph{J. Oper. Res. Soc.}, vol.~32, no.~9, pp. 783--792, 1981.

\end{thebibliography}

 \cleardoublepage

\appendix

\subsection{NP-completeness for \textbf{RO-DDU} problem}
\label{appen:npcomplete}

The proposed single-stage RO problem with DDU sets can be summarized as follows:
\begin{subequations}
\begin{align}
    & \textbf{(RO-DDU)} \quad \min_{\bz,\bx} ~~ \bm{f}^{T} \bx + \bm{c}^{T} \bz \\
    & \text{s.t.} ~~ a_i^{T} \bx + \zeta_i^{T} \bz \leq b_i, ~ \forall  \zeta_i \in \mathcal{U}_i(\bz), ~ \forall i = 1,\dots,m.
\end{align} 
\end{subequations}
where $\bz \in \mathbb{R}^{n}$ and $\bx \in \mathbb{R}^{n}$ represents decision variables.

The subsequent theorem demonstrates that the RO problem featuring decision-dependent sets lacks a tractable reformulation, thereby deviating from the conventional RO problem. This occurs despite the existence of tractable robust counterparts for linear programs featuring polyhedral uncertainty sets.
\begin{theorem}
    The robust linear problem (\textbf{RO-DDU}) with a DDU set $\mathcal{U}(\bz) = \{\bm{\zeta} | \bm{A} \bm{\zeta} \leq \bv + \psi \bz\}$ is NP-complete, where $\bm{A}$ is a constant matrix, $\bv$ is a constant vector, and $\psi$ is an impact matrix.
\end{theorem}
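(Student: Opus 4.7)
The plan is to establish the claim in two parts: first show that the decision version of \textbf{RO-DDU} lies in NP, then reduce \textbf{3-SAT} polynomially to it.

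For membership in NP, a candidate $(\bz,\bx)$ together with a target $k$ can be certified in polynomial time. For each constraint indexed by $i$, once $\bz$ is fixed the set $\mathcal{U}_i(\bz) = \{\zeta_i : \bm{A}\zeta_i \leq \bv + \psi\bz\}$ is a polyhedron, so verifying the robust inequality
\[
a_i^{T}\bx \,+\, \max_{\zeta_i \in \mathcal{U}_i(\bz)} \zeta_i^{T}\bz \;\leq\; b_i
\]
reduces to solving a single linear program. Combined with the numerical test $\bm{c}^{T}\bz + \bm{f}^{T}\bx \leq k$ on the objective, this yields a polynomial-time verifier.

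For NP-hardness I would reduce from \textbf{3-SAT}. Given an instance $\phi$ over Boolean variables $x_1,\ldots,x_n$ with clauses $C_1,\ldots,C_m$, each $C_j = \ell_{j1}\vee\ell_{j2}\vee\ell_{j3}$, I would introduce (i) decision coordinates $z_1,\ldots,z_n$ intended to encode the truth assignment, and (ii) for each literal slot $(j,k)$ an auxiliary decision coordinate $z_{jk}^{\star}$ pinned to $1$ by an equality constraint. To each slot attach an uncertainty variable $\zeta_{jk}$ whose admissible range in $\mathcal{U}_j(\bz)$ is
\[
0 \;\leq\; \zeta_{jk} \;\leq\; \begin{cases} 1 - z_i, & \ell_{jk} = x_i,\\[2pt] z_i, & \ell_{jk} = \neg x_i,\end{cases}
\]
which stacks into a matrix inequality of the required form $\bm{A}\zeta \leq \bv + \psi\bz$. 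Impose the robust constraint $\sum_k z_{jk}^{\star}\zeta_{jk} \leq 2$ for every clause; since $z_{jk}^{\star}=1$, this reduces to $\sum_k \zeta_{jk} \leq 2$ and is violated precisely when the adversary can drive all three $\zeta_{jk}$ to $1$, i.e., when every literal of $C_j$ is falsified. Thus the RO-DDU instance is feasible iff every clause has some satisfied literal, iff $\phi$ is satisfiable. Because the construction has size polynomial in $n+m$, this establishes NP-hardness, which combined with the previous paragraph gives NP-completeness.

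The main obstacle is making sure every gadget truly fits the stipulated syntactic form. Two points demand care. First, the left-hand side of every robust constraint must be bilinear of the form $\zeta^{T}\bz$, so pure sums of the $\zeta_{jk}$ have to be realized by pairing each with a pinned coordinate of $\bz$. Second, because $\bz$ is nominally continuous, binariness of the assignment coordinates must be enforced internally. I would do this by adding, for each $i$, an uncertainty coordinate $\xi_i$ with $0 \leq \xi_i \leq z_i$ and $0 \leq \xi_i \leq 1 - z_i$ in the DDU set, together with the robust inequality $\xi_i \leq 0$ (realized against another pinned coefficient); then the worst-case $\xi_i = \min(z_i,1-z_i)$ is forced to $0$, so $z_i \in \{0,1\}$. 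These gadgets add only $O(n+m)$ extra variables and constraints, so the reduction remains polynomial and the argument carries through.
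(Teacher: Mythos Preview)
Your reduction is sound (with one easily patched omission, below) and takes a genuinely different route from the paper. The paper attaches a \emph{single} uncertainty coordinate $\zeta_j$ to each clause, lower-bounded by each of its three literal values and upper-bounded by $1$, and then builds an optimization instance (minimize $-y$ subject to $y\le \sum_j\zeta_j$ for all $\zeta\in\mathcal{U}(\bz)$, with $\bx$ pinned to $\one$) whose optimal value hits $-m$ iff every $\zeta_j$ is forced to $1$, i.e., some literal in each clause attains value $1$. You instead give each literal its own uncertainty coordinate, upper-bounded by its ``falsified'' indicator, and encode satisfiability directly as \emph{feasibility} of $\sum_k \zeta_{jk}\le 2$. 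Your version avoids the objective-threshold argument and explicitly handles two points the paper glosses over: membership in NP (the paper argues only the hardness direction) and integrality of the assignment coordinates (the paper's RO-SAT never actually forces $\bz\in\{0,1\}^n$). You are also more careful than the paper about matching the bilinear template $\zeta^T\bz$ via pinned coefficients.

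The omission: both your integrality gadget and your clause gadgets presuppose a nonempty uncertainty set, but nothing in your construction confines $z_i$ to $[0,1]$. If some $z_i\notin[0,1]$, the box $0\le\xi_i\le\min(z_i,1-z_i)$ is empty (and likewise any $\zeta_{jk}$ capped by a negative quantity), so every robust constraint over that set holds vacuously and the RO-DDU instance becomes trivially feasible, breaking the ``only if'' direction. The fix is cheap: for each $i$ add two robust constraints with decision-\emph{independent} singleton uncertainty sets $\{e_i\}$ and $\{-e_i\}$ (take $\psi=0$ there) to enforce $0\le z_i\le 1$; those sets are always nonempty, after which your gadgets behave exactly as you describe. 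A minor caveat on the NP-membership paragraph: the verifier is fine, but you should also note that a polynomial-bit witness $(\bz,\bx)$ exists whenever the answer is YES---for your reduced instances this is immediate since $\bz$ is forced binary.
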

\vspace{0.2cm}
\noindent
The proof proceeds through the following steps:

\vspace{0.2cm}
\noindent
1. Consider an instance of the 3-Satisfiability problem (\textbf{3-SAT}) defined over a set of literals $N = {1,2,\dots,n}$ and comprising $m$ clauses. The objective is to identify a solution $\bz \in {0,1}^{n}$ that satisfies the following condition for each of the $m$ clauses, and for all $i_1,i_2,i_3 \in \{1,\dots n\}$:
\begin{align}
z_{i_1} + z_{i_2} + (1 - z_{i_3}) \geq 1.
\end{align}

\vspace{0.2cm}
\noindent
2. Consider the following special instance of (RO-DDU), denoted as \textbf{RO-SAT}, where $\bz \in \mathbb{R}^{n}$, $\bx \in \mathbb{R}^{m}$, and $y \in \mathbb{R}$:
\begin{align}
\textbf{RO-SAT}: \min_{\bx,\bz,y \geq 0} \big\{ - y | y - \bm{\zeta}^{T} \bx \leq 0, \forall \bm{\zeta} \in \mathcal{U}(\bz), \\
\bx,\bz \leq \one, - \bx \leq -\one \big\} 
\end{align}
where $\one$ denotes the vector with all entries equal to $1$ and $\mathcal{U}(\bz) = \{ (\zeta_1,\zeta_2,\dots,\zeta_m) | \zeta_i \geq z_{i_1}, \zeta_i \geq z_{i_2}, \zeta_i \geq 1 - z_{i_3}, \zeta_i \leq 1 \} $ which encompasses the \textbf{3-SAT} problem.

\vspace{0.2cm}
\noindent 
3. According to Lemma~\ref{lem-3SATOptimal} (provided subsequently), the optimal value of (\textbf{RO-SAT}) is $-m$ if and only if a solution exists for the \textbf{3-SAT} problem.

\vspace{0.2cm}
\noindent
4. Given that problem (\textbf{RO-SAT}) is a special case of (\textbf{RO-DDU}) with a polyhedral set $\mathcal{U}(\bz)$, and considering the NP-completeness of the \textbf{3-SAT} problem, it can be deduced that problem (\textbf{RO-DDU}) is likewise NP-complete.

\vspace{0.2cm}
\noindent

\begin{lemma}\label{lem-3SATOptimal}
A feasible solution $\bz$ exists for the \textbf{3-SAT} problem if and only if problem (\textbf{RO-SAT}) attains an optimal value of at most $-m$.
\end{lemma}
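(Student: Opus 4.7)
The plan is to reduce RO-SAT to a transparent combinatorial expression in $\bz$ and then prove the two implications separately. First, I would unpack the feasibility set: the inequalities $\bx \leq \one$ and $-\bx \leq -\one$ combined with $\bx \geq 0$ pin $\bx = \one$, so the robust constraint $y - \bm{\zeta}^T \bx \leq 0$ for all $\bm{\zeta} \in \mathcal{U}(\bz)$ simplifies to $y \leq \min_{\bm{\zeta} \in \mathcal{U}(\bz)} \sum_{i=1}^m \zeta_i$. Since $\mathcal{U}(\bz)$ is a box-type set that decouples across $i$, each $\zeta_i$ is minimized independently, giving $\min \zeta_i = \max(z_{i_1}, z_{i_2}, 1 - z_{i_3})$ (subject to $\zeta_i \leq 1$). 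The robust constraint therefore collapses to $y \leq \sum_{i=1}^m \max(z_{i_1}, z_{i_2}, 1 - z_{i_3})$, and since each summand lies in $[0,1]$, we get $y \leq m$. Thus the RO-SAT optimum is always $\geq -m$, and reaching $-m$ is equivalent to achieving $\max(z_{i_1}, z_{i_2}, 1 - z_{i_3}) = 1$ for every clause $i$.

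For the ``$\Rightarrow$'' direction, I would take any binary $\bz^{\star} \in \{0,1\}^n$ satisfying 3-SAT. Each clause gives $z^\star_{i_1} + z^\star_{i_2} + (1 - z^\star_{i_3}) \geq 1$ with all terms in $\{0,1\}$, so at least one of the three equals $1$, hence $\max(z^\star_{i_1}, z^\star_{i_2}, 1 - z^\star_{i_3}) = 1$ for every $i$. Setting $(\bx, \bz, y) = (\one, \bz^\star, m)$ is then feasible for RO-SAT and attains objective $-m$, so the RO-SAT optimum is at most $-m$.

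The hard part will be the ``$\Leftarrow$'' direction, because a minimizer of RO-SAT has $\bz \in [0,1]^n$ and need not be binary. Suppose the optimum equals $-m$; then at any optimal point, $\max(z_{i_1}, z_{i_2}, 1 - z_{i_3}) = 1$ for every clause $i$. The key observation is that hitting the value $1$ exactly forces at least one ``witness'' literal per clause to take an integral clause-satisfying value, namely $z_{i_1} = 1$, $z_{i_2} = 1$, or $z_{i_3} = 0$. I would then define the rounding $\tilde{z}_k = \mathbf{1}\{z_k \geq 1/2\}$, which in particular preserves every $z_k$ that already lies in $\{0,1\}$. Under this rule, every witness remains a witness in $\tilde{\bz}$, so every clause is satisfied and $\tilde{\bz} \in \{0,1\}^n$ constitutes a valid 3-SAT assignment. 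One subtlety to verify is that no two clauses impose inconsistent witness demands on the same variable $z_k$; but this cannot occur, since the scalar $z_k$ can take only one value in the continuous optimizer. Combining the two directions closes the proof.
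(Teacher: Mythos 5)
Your proof is correct and follows essentially the same route as the paper's: both directions come down to the observation that $\min_{\bm{\zeta}\in\mathcal{U}(\bz)}\sum_{i}\zeta_i=\sum_i\max(z_{i_1},z_{i_2},1-z_{i_3})$, which equals $m$ exactly when every clause has a literal whose value is exactly $1$. Your explicit threshold rounding of the continuous optimizer $\bz$ to a binary assignment is a welcome refinement --- the paper's necessary-condition argument stops at ``at least one literal per clause equals $1$'' and asserts a 3-SAT solution without spelling out that rounding step.
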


\vspace{0.2cm}
\noindent 
\textit{\textbf{Proof. Sufficient condition:}} Assume that a feasible solution $\bz$ exists for the \textbf{3-SAT} problem. This implies that $\bz$ must adhere to the following constraint:
\begin{align}
    z_{i_1} + z_{i_2} + (1 - z_{i_3}) \geq 1, ~ \forall i = 1, \dots, m.
\end{align}
Given that $\bz \in \{0,1\}^{n}$, $\forall i$, at least one of $z_{i_1}$, $z_{i_2}$, or $1 - z_{i_3}$ must be equal to $1$. We then examine the uncertainty set:
\begin{align}
    \mathcal{U}(\bz) = \{ &(\zeta_1,\zeta_2,\dots,\zeta_m) | \zeta_i \geq z_{i_1}, \zeta_i \geq z_{i_2}, \nonumber\\
    &\zeta_i \geq 1 - z_{i_3}, \zeta_i \leq 1; ~ \forall i = 1, \dots, m \}. 
\end{align}
Given that at least one among $z_{i_1}$, $z_{i_2}$, or $1 - z_{i_3}$ equals $1$, it follows that $\zeta_i$ satisfies $\zeta_{i} \geq 1$. Since $\zeta_{i} \leq 1$, this implies that $\zeta_{i} = 1$ for all $i$ is the only point in $\mathcal{U}(\bz)$. Consequently, for this uncertainty set, the feasible solution is $\bz=1$, $\bx = 1$, and $y = m$. This results in the optimal solution $-y \leq -m$, or equivalently, $y \geq m$.

\vspace{0.2cm}
\noindent 
\textit{\textbf{Necessary condition:}} Assume (\textbf{RO-SAT}) achieves an optimal solution ($\bx^{*}$,$\bz^{*}$) with an objective value of $-y^{*} \leq -m$. The constraints in (\textbf{RO-SAT}) enforce $y^{*} - \bm{\zeta}^{T} \bx^{*} \leq 0$, implying $ \bm{\zeta}^{T} \bx^{*} \geq y^{*} \geq m, \forall \bm{\zeta} \in \mathcal{U}(\bz^{*})$. Moreover, these constraints necessitate  $y_i^{*} = 1$ for all $i$. Consequently, $\sum_{i = 1}^{m} \zeta_i > m, \forall \bm{\zeta} \in \mathcal{U}(\bz^{*})$. However, the uncertainty set construction limits $\zeta_i \leq 1$, resulting in a contradiction as $\sum_{i}^{m} \zeta_i \ngtr m$. Consequently,  $-y^{*} = -m$. Hence, $\bm{\zeta}^{T} \bx^{*} = m$ for all $\zeta \in \mathcal{U}(\bz^{*})$. This implies $\sum_{i = 1}^{m} \zeta_i = m$ for all $\bm{\zeta} \in \mathcal{U} (\bz^{*})$, leading to $\min_{\zeta \in \mathcal{U}(\bz^{*})} \sum_{i}^{m} \zeta_i = m$. However, since the uncertainty set dictates $\zeta_i \leq 1 \forall i$, the sum can only equal $m$ if $\zeta_i = 1$ for all $i$.

\vspace{0.2cm}
\noindent 
We are now prepared to demonstrate that for each $i$, at least one of $z_{i_1}^{*}$ or $z_{i_{2}}^{*}$ or $(1 - z_{i_3}^{*})$ equals $1$. Suppose this condition does not hold. This implies the existence of an $i$ such that $z_{i_{1}}^{*} \leq 1,z_{i_{2}}^{*} \leq 1$ and $1 - z_{i_{3}}^{*} \leq 1$. Consequently, we can define $\zeta_i^{'} = \max \{z_{i_1}^{*},z_{i_2}^{*},1 - z_{i_3}^{*} \}$, which is an element of the uncertainty set and  $\zeta_i^{'} < 1$. However, this contradicts the condition $\zeta_i = 1, \forall i$. Therefore, if $y^{*} = m$, then we can find a feasible solution for the \textbf{3-SAT} problem.

\noindent\textbf{Remark:} Even though problem (RO-DDU) is NP-complete, it can be reformulated into a bilinear or biconvex program, allowing for potential solutions through global optimization methods \cite{kolodziej2013global}. In cases involving binary decision variables $\bz$ impacting $\mathcal{U}(\bz)$, (RO-DDU) can be restructured as a  MILP utilizing linearization techniques \cite{BigM}.

\subsection{Stochastic model under decision-dependent uncertainty} 
\label{so_ddu}
This section presents a stochastic model for edge network hardening under DDU, where the uncertainty link delay is modeled through a set of decision-dependent scenarios. The stochastic model assumes that link delay $d_{i,j}$ follows certain probability distributions, which can be obtained from historical data. Specifically, let $\xi_n \in \{\Xi\}_{n = 1}^{N}$ be the sample space of all uncertain realizations and $\bm{\pi}_n \in [0,1]$ be the probability of corresponding scenario $n$.  Here, $\bm{\pi}_n$ follows certain distribution and satisfies condition $\sum_{n=1}^{N} \pi_n = 1$. To capture the dependencies on decisions, we model this endogenous stochastic uncertainty set as follows: 
\begin{align}
\label{so_ddu_prob}
    d_{i,j}^{n} (\xi_n,\bt,\bx) = \bar{d}_{i,j} \!+\! \xi_{n} \hat{d}_{i,j} \bigg[1\!-\!\!  \sum_{r} \gamma_{i,j}^{r} t_{i,j}^{r} \!+\!  u_{i,j} x_{i,j} \bigg]\!, \forall i,j,n
\end{align}
where $\xi_n$ can follows truncated distribution from the interval $[0,1]$. Thus, the resulting actual delay $d_{i,j}^{n}$ follows the certain distribution with mean $\mu = \bar{d}_{i,j}$ and deviation $\sigma^2 = \hat{d}_{i,j} ( 1-  \sum_{r} \gamma_{i,j}^{r} t_{i,j}^{r} +  u_{i,j} x_{i,j})$. The platform considers scenario-dependent recourse decisions in this stochastic decision-dependent framework by treating the delay uncertainty as discrete scenarios with assigned probability. The objective of this \textit{SO-DDU} model is to optimize the expected cost over all scenarios, which can be expressed in \eqref{SO_obj}:
\begin{align}
    \!\!\min_{t,x,w}~  \sum_{i,j,r}\! h_{i,j}^{r} t_{i,j}^{r} \!\!+\! \mathbb{E}_{\mathbb{P}(\bt,\bx)} \!\bigg[ \rho\!\! \sum_{i,j} \!d_{i,j} (\xi_n,\bt,\bx) x_{i,j} \!\!+\!\!\! \sum_{i}\! s_{i} w_{i} \bigg]\!. \!\! \label{SO_obj}
\end{align}
Note that $\Xi$ is a discrete finite sample space. We will generate $N$ scenarios in the simulation based on chosen distribution from $[0,1]$. Using the linearity and convexity of expectation, we can express the expectation explicitly, shown in \eqref{so_obj}. Note that actual delay $d_{i,j}^{n}$ contains decisions ($\bx$, $\bm{t}$) for each scenario and two bilinear terms need to be linearized, i.e., $Z_{i,j}^{r,n} = t_{i,j}^{r} x_{i,j}^{n}$ and $Y_{i,j}^{k,n} = y_{i,j}^{k} x_{i,j}^{n}$. Based on McCormick envelopes, the resulting linearized formulation for \textit{SO-DDU} is: 
\begin{subequations}
\label{SO_DDU_formulation}
\begin{align}
    & \min_{t,x,w} ~ \sum_{i,j,r} h_{i,j}^{r} t_{i,j}^{r} + \sum_{n} p_{n} \bigg[ \rho \sum_{i,j} \bar{d}_{i,j} x_{i,j}^{n} + \rho \sum_{i,j} \xi_n \hat{d}_{i,j} x_{i,j}^{n}  \nonumber\\ 
    & + \sum_{i} s_{i} w_{i}^{n} - \rho \sum_{i,j,r} \xi_n \hat{d}_{i,j} \gamma_{i,j}^{r} Z_{i,j}^{r,n} + \rho \sum_{i,j} \xi_n \hat{d}_{i,j} u_{i,j} Y_{i,j}^{k,n} \bigg]  \label{so_obj}\\
    & \text{s.t.}  ~ ~ \eqref{budget},~\eqref{link_upgrade},~ \eqref{var_constr1}, \nonumber\\
    & 0 \leq \sum_{i} x_{i,j}^{n} \leq C_{j}, ~ \forall j,n,\\
    & w_i^{n} + \sum_{j} x_{i,j}^{n} \geq \lambda_i, ~ \frac{w_i^{n}}{\lambda_i} \leq \alpha_i, ~ \forall i,n,\\
    & \xi_{n} \sum_{j} \bar{d}_{i,j} x_{i,j}^{n} + \xi_{n} \sum_{j} \hat{d}_{i,j} x_{i,j}^{n} + \sum_{j,r} \xi_{n} \hat{d}_{i,j} \gamma_{i,j}^{r} Z_{i,j}^{r,n} \nonumber\\
    & + \xi_{n} \sum_{j}  \hat{d}_{i,j} u_{i,j} \sum_{k=1}^{Q_{i,j}} 2^{k-1} Y_{i,j}^{k,n} \leq \Delta_i \lambda_{i}, \forall i,n,\\
    & Z_{i,j}^{r,n} \leq L_{i,j} t_{i,j}^{r},~ Z_{i,j}^{r,n} \leq x_{i,j}^{n}, \forall i,j,r,n,\\
    & Z_{i,j}^{r,n} \geq x_{i,j}^{n} - L_{i,j} (1 - t_{i,j}^{r}), ~ \forall i,j,r,n,\\
    & Y_{i,j}^{k,n} \leq L_{i,j} y_{i,j}^{k,n}, ~ y_{i,j}^{k,n} \leq x_{i,j}^{n}, ~ \forall i,j,k,n,\\
    & Y_{i,j}^{k,n} \geq x_{i,j}^{n} - L_{i,j} (1 - y_{i,j}^{k,n}), ~ \forall i,j,k,n,\\
    & y_{i,j}^{k,n} \in \{0,1\}, ~ \forall i,j,k,n.
\end{align}
\end{subequations}

\subsection{Proof of Theorem \ref{Theo:RDDU}}
\label{proofRDDU}
The robust constraint \eqref{eq:MaxRobustConstOrg} is equivalent to:
\begin{align} \label{eq:MaxRobustConst}
    \max_{\bm{\zeta} \in \mathcal{U}(\bz)}  \bm{\zeta}^{T}\bu \leq b.
\end{align}
Based on LP duality,  \eqref{eq:MaxRobustConst} can be expressed as:
\begin{subequations}\label{lp_dual}
\begin{align}
    \bm{\pi}^{T} (\bv + \bm{\psi} \bz) \leq b, \label{lp_dual1} \\
    \bm{\pi}^{T} \bm{A} = \bu^{T}, ~~ \bm{\pi} \geq 0, \label{lp_dual2}
\end{align}
\end{subequations}
where $\bm{\pi}$ is the dual variable for constraints corresponding to the uncertainty set $\mathcal{U} (\bz)$. By expanding the variable space, the \eqref{lp_dual1} constraint can be rewritten as
\begin{align}
\sum_i\pi_i v_i + \sum_i\sum_j\psi_{i,j}y_{i,j} \leq b, ~\forall i,j, \label{lp_dual3} 
\end{align}
with $y_{i,j} = \pi_iz_j$. The standard robust approach allows us to rewrite the bilinear term using the Big-M method which can be implemented through the following linear inequalities:
\begin{subequations}
\label{McCormick}
\begin{align}
    y_{i,j} \leq \pi_i, ~ y_{i,j} \leq M z_j, ~\forall i,j \label{McCormick_ub},\\
    0 \leq y_{i,j} \geq \pi_i - M (1 - z_j), ~\forall i,j \label{McCormick_lb},
\end{align}
\end{subequations}
which completes the proof.

\subsection{Proof of Theorem \ref{Theo:e-RDDU}}
\label{proofe-RDDU}
According to Appendix~\ref{proofRDDU}, constraint  \eqref{eq:MaxRobustConstOrg} can be reformulated as \eqref{lp_dual} and \eqref{lp_dual3}. The simplification of the reformulation in this theorem compared to Theorem \ref{Theo:RDDU} is enabled by utilizing the properties of positive influence coefficients. We start the proof by rewriting constraints \eqref{lp_dual1} such that the coefficients of binary variables are positive as follows
\begin{subequations}\label{eq:Theoe-RDDUproof}
\begin{align}
    \sum_i\pi_i (v_i +\sum_{j:\psi_{i,j} < 0}\psi_{i,j}) + \sum_i\sum_{j:\psi_{i,j} \ge 0}\psi_{i,j}y_{i,j}~~~~~\cr
    - \sum_i\sum_{j:\psi_{i,j} < 0}\psi_{i,j}w_{i,j} \leq b, \label{eq:Theoe-RDDUproof_1} \\
    y_{i,j} \ge \pi_iz_j, ~\forall i,j : \psi_{i,j} \ge 0, \label{eq:Theoe-RDDUproof_2}\\
    w_{i,j} \ge \pi_i(1-z_j), ~\forall i,j : \psi_{i,j} < 0. \label{eq:Theoe-RDDUproof_3}
\end{align}
\end{subequations}

Indeed, if there exists a feasible variable for \eqref{lp_dual3} then we can find a feasible variable for \eqref{eq:Theoe-RDDUproof} by using $y_{i,j} = \pi_iz_j, ~\forall i,j : \psi_{i,j} \ge 0$ and $w_{i,j} = \pi_i(1-z_j), ~\forall i,j : \psi_{i,j} < 0$. On the other hand, if there is a feasible solution to \eqref{eq:Theoe-RDDUproof} then since the influence coefficents of the binary variables $y_{i,j}$ and $w_{i,j}$ are positive, we have
\begin{align*}
    \sum_i\pi_i (v_i +\sum_{j:\psi_{i,j} < 0}\psi_{i,j}) + \sum_i\sum_{j:\psi_{i,j} \ge 0}\psi_{i,j}\pi_iz_j&\cr
    - \sum_i\sum_{j:\psi_{i,j} < 0}\psi_{i,j}\pi_i(1-z_j) \leq \sum_i\pi_i (v_i +\sum_{j:\psi_{i,j} < 0}\psi_{i,j})& \cr
    + \sum_i\sum_{j:\psi_{i,j} \ge 0}\psi_{i,j}y_{i,j}
    - \sum_i\sum_{j:\psi_{i,j} < 0}\psi_{i,j}w_{i,j}&\leq b,
\end{align*}
where the first inequality follows from \eqref{eq:Theoe-RDDUproof_2} and \eqref{eq:Theoe-RDDUproof_3} and the last inequality follows from \eqref{eq:Theoe-RDDUproof_1},
which shows that it is also a feasible solution to \eqref{lp_dual3}. Furthermore, if $z_j=0$ then constraints \eqref{eq:Theoe-RDDUproof_2} imply that $y_{i,j}\ge 0$ and  constraints \eqref{eq:Theoe-RDDUproof_3} imply that $w_{i,j}\ge \pi_i$. If $z_j=1$ then $y_{i,j}\ge \pi_i$ and $w_{i,j}\ge 0$ which can be expressed as the constraints in Theorem \ref{Theo:e-RDDU} and the proof is completed.

\subsection{Linearization of actual workload allocation model \eqref{Actual_model}}
\label{actual_linear}
This section will provide the linearization of actual workload allocation model:
\begin{subequations}
\label{actual_linear_model}
\begin{align}
    &\min_{\begin{subarray}{c} \bw \in \mathbb{Z}_{+}^{I}, \bx \in \mathbb{Z}_{+}^{I \times J}\\ \bm{Y} \in \mathbb{Z}_{+}^{I \times J \times K} \end{subarray} } ~ \rho \sum_{i,j} \bar{d}_{i,j} x_{i,j} \!+\! \sum_{i,j} \hat{d}_{i,j} (1 - \sum_{r} \gamma_{i,j}^{r} \Tilde{t}_{i,j}^{r}) x_{i,j} \nonumber \\
    &\qquad\qquad\quad~~ + \sum_{i,j} \sum_{k=1}^{Q_{i,j}} \hat{d}_{i,j} u_{i,j} 2^{k-1} Y_{i,j}^{k} + \sum_{i} s_i w_i \\
    &  \text{s.t.} ~ \eqref{resource_cap}-\eqref{QoS}; ~~ \sum_{j}  \bar{d}_{i,j} \frac{x_{i,j}}{\lambda_i} \!+\! \sum_{j} \hat{d}_{i,j} (1 - \sum_{r} \gamma_{i,j}^{r} \Tilde{t}_{i,j}^{r}) \frac{x_{i,j}}{\lambda_i} \nonumber \\
    & + \sum_{j} \sum_{k=1}^{Q_{i,j}} \hat{d}_{i,j} 2^{k-1} \frac{Y_{i,j}^{k}}{\lambda_{i}}\leq \Delta_i, ~~\forall i, \\
    &  Y_{i,j}^{k} \leq L_{i,j} y_{i,j}^{k}; ~~  Y_{i,j}^{k} \leq x_{i,j}, ~~ \forall i,j,k, \\
    & 0 \leq Y_{i,j}^{k} \geq x_{i,j} - L_{i,j} (1 - y_{i,j}^{k}), ~~ \forall i,j,k,
\end{align}
\end{subequations}
where $L_{i,j} =  \min \{C_j, \lambda_i \}$ is an upper bound of $x_{i,j}$.

\end{document}